\documentclass[pdflatex,sn-mathphys-num]{sn-jnl}

\usepackage{amsmath, amssymb, bbm, xspace}
\usepackage{longtable}
\usepackage{color}
\usepackage{mathrsfs}
\usepackage{comment}
\usepackage{ifthen}
\newboolean{showcomments}
\setboolean{showcomments}{true}
\usepackage{courier}
\usepackage{xcolor}
\usepackage{todonotes}
\usepackage[framemethod=tikz]{mdframed}
\usepackage{lineno}
\newmdenv[leftmargin=\dimexpr-0.4em, innerleftmargin=0.5em,
rightmargin=\dimexpr-0.4em, innerrightmargin=0.5em,
linewidth=2pt,linecolor=red, topline=false, bottomline=false,
innertopmargin=0pt,innerbottommargin=0pt,skipbelow=0pt,skipabove=0pt,%
]{notex}

\newenvironment{note}%
{\vskip\dimexpr\dp\strutbox-\prevdepth\relax\notex\strut\ignorespaces}%
{\xdef\notetpd{\the\prevdepth}\endnotex\vskip-\notetpd\relax}

\let\oldtodo\todo

\makeatletter%
\DeclareDocumentCommand{\todo}{ O{} +g +d<> }{%
		\setlength{\marginparwidth}{1.5cm}%
	\IfNoValueTF{#2}{\relax}{%
		\oldtodo[caption={#2},size=\scriptsize,#1]{\renewcommand{\baselinestretch}{0.8}\selectfont\sffamily#2\par}%
	}%
	\IfNoValueTF{#3}{\relax}{%
		\IfNoValueTF{#2}{
			\begin{note}%
				\begin{internallinenumbers}%
					\indent%
					#3%
				\end{internallinenumbers}%
			\end{note}%
		}{
			\vspace{-0\baselineskip}%
			\begin{note}%
				\begin{internallinenumbers}%
					\indent%
					#3%
				\end{internallinenumbers}%
			\end{note}%
		}%
	}%
}%
\makeatother
\usepackage{soul}

\newcommand{\hlc}[2][yellow]{{%
		\colorlet{foo}{#1}%
		\sethlcolor{foo}\hl{#2}}%
}

\newcommand{\removetodo}[2]{\todo[color=pink]{\textbf{delete:} ``#1'' #2}\hlc[pink]{#1}}
\newcommand{\inserttodo}[1]{\todo[color=green!40]{\textbf{insert:} #1}}

\newcommand{\hltodoy}[2]{\todo[color=yellow!40]{#2}\hl{#1} }

\newcommand{\hltodoc}[3]{\todo[color=#3!40]{#2}\hlc[#3]{#1} }

\newcommand{\hltodo}[2]{\todo[color=orange!40]{#2}\hlc[orange!40]{#1} }
\newcommand{\replacetodo}[2]{\todo[color=pink!40]{\textbf{replace with:}``#2'' }\hl{#1} }

\usepackage{marginnote}
\newcommand{\todol}[1]{{%
		\let\marginpar\marginnote
		\reversemarginpar
		\renewcommand{\baselinestretch}{0.8}%
		\todo{#1}}}

\newcommand{\inserttodol}[1]{{%
		\let\marginpar\marginnote
		\reversemarginpar
		\renewcommand{\baselinestretch}{0.8}%
		\inserttodo{#1}}}

\newcommand{\removetodol}[2]{{%
		\let\marginpar\marginnote
		\reversemarginpar
		\renewcommand{\baselinestretch}{0.8}%
		\removetodo{#1}{#2}}}

\newcommand{\hltodol}[2]{{%
		\let\marginpar\marginnote
		\reversemarginpar
		\renewcommand{\baselinestretch}{0.8}%
		\hltodo{#1}{#2}}}

\newcommand{\replacetodol}[2]{{%
		\let\marginpar\marginnote
		\reversemarginpar
		\renewcommand{\baselinestretch}{0.8}%
		\replacetodo{#1}{#2}}}

\newcommand{\hltodoyl}[2]{{%
		\let\marginpar\marginnote
		\reversemarginpar
		\renewcommand{\baselinestretch}{0.8}%
		\hltodoy{#1}{#2}}}

\newcommand{\hltodocl}[3]{{		\let\marginpar\marginnote
		\reversemarginpar
		\renewcommand{\baselinestretch}{0.8}%
		\hltodoc{#1}{#2}{#3}}}

\newtheorem{theorem}{Theorem}[section]

\newtheorem{lemma}[theorem]{Lemma}

\newtheorem{prop}[theorem]{Proposition}
\newtheorem{corollary}[theorem]{Corollary}

\newtheorem{definition}{Definition}[section]

\def\bkE{{\rm I\kern-.17em E}}
\def\bk1{{\rm 1\kern-.17em l}}
\def\bkD{{\rm I\kern-.17em D}}
\def\bkR{{\rm I\kern-.17em R}}
\def\bkP{{\rm I\kern-.17em P}}

\def\bkZ{{\bf{Z}}}

\def\bkE{{\rm I\kern-.17em E}}
\def\bk1{{\rm 1\kern-.17em l}}
\def\bkD{{\rm I\kern-.17em D}}
\def\bkR{{\rm I\kern-.17em R}}
\def\bkP{{\rm I\kern-.17em P}}

\makeatletter
\newcommand{\pushright}[1]{\ifmeasuring@#1\else\omit\hfill$\displaystyle#1$\fi\ignorespaces}
\newcommand{\pushleft}[1]{\ifmeasuring@#1\else\omit$\displaystyle#1$\hfill\fi\ignorespaces}
\makeatother

\def\bkZ{{\bf{Z}}}
\def\b12{(\beta_1,\beta_2)}

\newenvironment{example}{{\noindent \bf Example}}{\hfill $\square$\hspace{-4.5pt}\vspace{6pt}}
\newcounter{example}
\renewcommand{\theexample}{\thesection.\arabic{example}}

\newcounter{remark}
\renewcommand{\theremark}{\thesection.\arabic{remark}}

\def\Bscr{\mathscr{B}}

\def\Ebb{\mathbb{E}}
\newlength{\noteWidth}
\long\def\notes#1{\ifinner
{\tiny #1}
\else
\marginpar{\parbox[t]{\noteWidth}{\raggedright\tiny #1}}
\fi\typeout{#1}}

 \def\notes#1{\typeout{read notes: #1}} 

\newcommand{\ie}{i.e.\@\xspace} 

\newcommand{\e}[2]{{\small e}$\scriptstyle#1$#2}

\newcommand{\Real}{\ensuremath{\mathbb{R}}}

\def\Ebb{\mathbb{E}}

\def\Pbb{{\mathbb{P}}}

\def\half  {{\textstyle{1\over 2}}}

\def\spose#1{\hbox to 0pt{#1\hss}}

\def\text #1{\hbox{\quad#1\quad}}

\def\nthinsp{\mskip -2   mu}

\def\superstar{^{\raise 0.5pt\hbox{$\nthinsp *$}}}
\def\SUPERSTAR{^{\raise 0.5pt\hbox{$*$}}}

\def\lamstarT {\lambda^{\raise 0.5pt\hbox{$\nthinsp *$}T}}

\def\Bscr{{\cal B}}

\def\non{\nonumber}

\let\forallnew\forall
\renewcommand{\forall}{\forallnew\ }
\let\forall\forallnew

		\def\bkE{{\rm I\kern-.17em E}}
		\def\bk1{{\rm 1\kern-.17em l}}
		\def\bkD{{\rm I\kern-.17em D}}
		\def\bkR{{\rm I\kern-.17em R}}
		\def\bkP{{\rm I\kern-.17em P}}
		\def\bkY{{\bf \kern-.17em Y}}
		\def\bkZ{{\bf \kern-.17em Z}}
		\def\bkC{{\bf  \kern-.17em C}}

{\begin{list}{}%
         {\setlength{\leftmargin}{#1}}%
         \item[]%
}
{\end{list}}

		\def\bsp{\begin{split}}
		\def\beq{\begin{eqnarray}}
		\def\bal{\begin{align*}}
		\def\bc{\begin{center}}
		\def\be{\begin{enumerate}}
		\def\bi{\begin{itemize}}
		\def\bs{\begin{small}}
		\def\bS{\begin{slide}}
		\def\ec{\end{center}}
		\def\ee{\end{enumerate}}
		\def\ei{\end{itemize}}
		\def\es{\end{small}}
		\def\eS{\end{slide}}
		\def\eeq{\end{eqnarray}}
		\def\eal{\end{align*}}
		\def\esp{\end{split}}
		\def\qed{ \vrule height7.5pt width7.5pt depth0pt}  

	\def\cp2problem#1#2#3#4{\fbox
		 {\begin{tabular*}{0.9\textwidth}
			{@{}l@{\extracolsep{\fill}}l@{\extracolsep{6pt}}l@{\extracolsep{\fill}}c@{}}
				#1 & & $#4 $
			\end{tabular*}}}

		\def\bkE{{\rm I\kern-.17em E}}
		\def\bk1{{\rm 1\kern-.17em l}}
		\def\bkD{{\rm I\kern-.17em D}}
		\def\bkR{{\rm I\kern-.17em R}}
		\def\bkP{{\rm I\kern-.17em P}}

		\def\bkZ{{\bf{Z}}}

\newcommand {\beeq}[1]{\begin{equation}\label{#1}}
\newcommand {\eeeq}{\end{equation}}
\newcommand {\bea}{\begin{eqnarray}}
\newcommand {\eea}{\end{eqnarray}}

\def\texitem#1{\par\smallskip\noindent\hangindent 25pt
               \hbox to 25pt {\hss #1 ~}\ignorespaces}

\def\bsp{\begin{split}}
		\def\beq{\begin{eqnarray}}
		\def\bal{\begin{align*}}
		\def\bc{\begin{center}}
		\def\be{\begin{enumerate}}
		\def\bi{\begin{itemize}}
		\def\bs{\begin{small}}
		\def\bS{\begin{slide}}
		\def\ec{\end{center}}
		\def\ee{\end{enumerate}}
		\def\ei{\end{itemize}}
		\def\es{\end{small}}
		\def\eS{\end{slide}}
		\def\eeq{\end{eqnarray}}
		\def\eal{\end{align*}}
		\def\esp{\end{split}}
		\def\qed{ \vrule height7.5pt width7.5pt depth0pt}  

\usepackage{amsmath, amssymb, xspace}
\usepackage{epsfig}
\usepackage{longtable}
\usepackage{color}
\usepackage{mathrsfs}
\usepackage{subfig}
\newenvironment{proof}[1][]{{\noindent \emph {Proof} #1: }}{\hfill \qed \vspace{3pt}\\ }

\usepackage{amsmath,amssymb,amsfonts}
\usetikzlibrary{shapes,arrows,fit,calc,positioning,automata}
\usepackage{algorithmic}
\usepackage{subfig}
\usepackage{graphicx}

\usepackage{textcomp}
\usepackage{xcolor}
\usepackage{lipsum,adjustbox}
\usepackage{geometry}
\usepackage{amsmath}
\usepackage{physics}
\usepackage{tikz}
\usepackage{caption}
\usepackage{subcaption}
\usepackage{comment}

\usetikzlibrary{shapes,arrows}

\def\ei{e^{0}_i}

 \def\ea{e^{0}_0}
\def\eb{e^{0}_1}
\def\evi{\boldsymbol{e_i}}
\def\eva{\boldsymbol{e_0}}
\def\evb{\boldsymbol{e_1}}
\def\fj{f^{0}_j}
\def\fa{f^{0}_0}
\def\fb{f^{0}_1}
\def\fvj{\boldsymbol{f_j}}
\def\fva{\boldsymbol{f_0}}
\def\fvb{\boldsymbol{f_1}}

\def\e00{e^{0}_0}
\def\ev0{\boldsymbol{e_0}}
\def\f00{f^{0}_0}
\def\fv0{\boldsymbol{f_0}}

\def\e01{e^{0}_1}
\def\ev1{\boldsymbol{e_1}}
\def\f01{f^{0}_1}
\def\fv1{\boldsymbol{f_1}}

\def\Jsq{J^\mathsf{S}_{\mathsf{Q}}}
\def\Jsc{J^\mathsf{S}_{\mathsf{C}}}
\def\Jsp{J^\mathsf{S}_{\perp}}
\def\Jdq{J^\mathsf{D}_{\mathsf{Q}}}

\def\Jdp{J^\mathsf{D}_{\perp}}

\def\Jsqo{J^\mathsf{S*}_{\mathsf{Q}}}
\def\Jsco{J^\mathsf{S*}_{\mathsf{C}}}
\def\Jspo{J^\mathsf{S*}_{\perp}}
\def\Jdqo{J^\mathsf{D*}_{\mathsf{Q}}}
\def\Jdco{J^\mathsf{D*}_{\mathsf{C}}}
\def\Jdpo{J^\mathsf{D*}_{\perp}}

\def\bie{\textbf{\textit{E}}}
\def\bif{\textbf{\textit{F}}}
\def\Asf{\mathsf{A}}
\def\E{\mathbb{E}}

\def\Jsqop{J^\mathsf{S*}_{\mathsf{Q}}(p_1,p_2)}
\def\Jscop{J^\mathsf{S*}_{\mathsf{C}}(p_1,p_2)}
\def\Jspop{J^\mathsf{S*}_{\perp}(p_1,p_2)}

\def\Jdqop{J^\mathsf{D*}_{\mathsf{Q}}(p_s,p_c)}
\def\Jdcop{J^\mathsf{D*}_{\mathsf{C}}(p_s,p_c)}
\def\Jdpop{J^\mathsf{D*}_{\perp}(p_s,p_c)}
\def\Jsqp{J^\mathsf{S}_{\mathsf{Q}}(\textbf{\textit{E}},\textbf{\textit{F}} ; p_1, p_2)}
\def\Jscp{J^\mathsf{S}_{\mathsf{C}}(\gamma_1, \gamma_2 ; p_1, p_2)}
\def\Jspp{J^\mathsf{S}_{\perp} (\boldsymbol{\theta}, \boldsymbol{\phi} ; p_1, p_2)}
\def\Jdqp{J^\mathsf{D}_{\mathsf{Q}}(\textbf{\textit{E}},\textbf{\textit{F}} ; p_s, p_c)}
\def\Jdcp{J^\mathsf{D}_{\mathsf{C}}(\gamma_1, \gamma_2 ; p_s, p_c)}
\def\Jdpp{J^\mathsf{D}_{\perp}(\boldsymbol{\theta}, \boldsymbol{\phi} ; p_s, p_c)}

\def\pat{\widetilde{p_1}}
\def\pbt{\widetilde{p_2}}
\def\pst{\widetilde{p_s}}
\def\pct{\widetilde{p_c}}
\def\Be{\mathcal{B}_1}
\def\Bek{\mathcal{B}_k}
\def\prob{I\kern-0.15em P}

\def\Ber{\textrm{Ber}}

\newcommand{\A}{Arjuna}
\newcommand{\B}{Bhima}

\title{\LARGE \bf Nonlocal Teams and Information Structures}

\author[1,4]{Drishti Baruah}
\equalcont{These authors contributed equally to this work.}
\author[1,4]{Sachin Teli}
\equalcont{These authors contributed equally to this work. This work was done while the authors were affiliated to the Indian Institute of Technology Bombay.}
\author*[2,3]{Ankur A. Kulkarni}
\email{kulkarni.ankur@iitb.ac.in}

\affil[1]{Department of Physics, Indian Institute of Technology Bombay, Mumbai, India}

\affil[2]{Centre for Systems and Control, Indian Institute of Technology Bombay, Mumbai, India}
\affil[3]{Centre of Excellence in Quantum Information, Computing, Science and Technology, Indian Institute of Technology Bombay, Mumbai, India}
\affil[4]{Max-Planck-Institut f\"ur Quantenoptik, Hans-Kopfermann-Stra{\ss}e 1, D-85748 Garching, Germany}

\begin{document}
\maketitle


\begin{abstract}

We look at Bell inequalities \cite{bell1964epr} from the lens of information structures in stochastic teams. We consider the usual CHSH game and a dynamic variant of the same to study how various classes of strategies, classical, projective and quantum, behave under team theoretic solution concepts. We find that projective strategies (where each player performs projective measurements) enjoy important properties in the usual CHSH game, but they do not carry over to its dynamic version. These results shed light on the delicate interplay of information structure in  quantum strategies and the fragility of some well known ideas under changes of information structure.
\end{abstract}


\section{Introduction}

Bell's theorem \cite{bell1964epr}, asserting that quantum theory's predictions cannot be explained by any local theory, stands as a milestone in the foundations of physics, revealing the inherent conflict between quantum mechanics and the principle of locality. This theorem demonstrated that no local hidden variable theory can reproduce all the predictions of quantum mechanics.

Viewing Bell inequalities as games offers an intuitive understanding, as exemplified by the CHSH game.
 Bell nonlocal games \cite{brunner2014bell} involve multiple agents who receive inputs and produce actions to minimize a cost that depends on their actions and a state of nature that is correlated with their inputs, but not fully known to either agent.
Strategies utilising quantum resources such as shared entanglement and quantum measurements often outperform classical strategies in these games, providing an advantage via the deployment of such resources.

This paper uncovers new structural phenomena in nonlocal games through the lens of team theory. Bell nonlocal games can also be thought of as (stochastic) \textit{team} problems. In such a problem, agents receive partial information of an underlying state, and must jointly design a strategy that minimizes a cost that depends on their collective actions and the unknown state. The \textit{information structure} of a team is a description of \textit{who knows what} in the team. An information structure is \textit{static} if no agent can affect the underlying state of a system; else it is \textit{dynamic}. In a dynamic \textit{classical} information structure, all agents have access to the information of agents that have acted prior to them whereas in a non-classical information structure, this assumption does not hold. Problems with non-classical information structures are known to defy commonly expected properties (as first observed by Witsenhausen~\cite{witsenhausen_counterexample_1968} in the context of Linear-Quadratic-Gaussian control problems), and their discrete versions are known to be NP-complete~\cite{papadimitriou85intractable}. On the other hand problems with classical information structure admit tractable formulations and recursive algorithms~\cite{borkar88convex,kulkarni2014optimizer}. In this paper, when we refer to dynamic information structure, we mean its non-classical variant.

Since information structure is known to have such a dramatic effect on stochastic teams, a natural curiosity arises in understanding the effect of the information structure on the phenomenon of the quantum advantage. Bell nonlocal games in their usual formulation can  be viewed as static team problems and exhibit a quantum advantage as evidenced in the CHSH inequality. We ask -- what is the character of this advantage if the information structure is dynamic and non-classical?

\subsection{Main findings}
The usual CHSH game is a static team problem.
To make a comparison sensible between the static and dynamic information structure, we construct a dynamic team problem with the same cost as a Bell nonlocal game, but where the agents act in a sequence, and a noise-corrupted version of the first agent's action is available to the second agent, while  the first agent's information is not. We find that there is a clean one-to-one correspondence between the static and dynamic team: the source and channel noise in the dynamic problem can be mapped to the inputs in the static problem. Under this mapping the team optimal cost under classical strategies is the \textit{same} for static and dynamic problems.

We then study the quantum advantage that the agents can obtain.
Our aim is to compare and contrast the structural properties of the optimal strategies with the change of information structure. Our main finding is that projective strategies, which are known to be optimal in the usual CHSH game, cease to be always optimal when the information structure is dynamic. In fact, for certain values of the problem parameters, projective strategies yield a strict quantum \textit{disadvantage} over classical strategies when the information structure is dynamic.

%

We then analyse two other solution concepts or properties arising in game theory: person-by-person optimality and best response strategies. A person-by-person optimal strategy is one that players can be expected to play in the absence of any pre-play communication. Each player picks a strategy that is optimal given the other player's strategy; in other words a Nash equilibrium  \cite{nash50equilibrium,aumann74subjectivity,eisert1999quantumgames}.
 The best response refers to the optimal strategy chosen by a player in response to the actions of the other player.
We investigate these properties from the vantage of  strategy classes, and ask if classical or projective strategy classes enjoy the property that contains a person-by-person optimal or best response. Here too we find that the information structure is critical. For example, projective strategies are person-by-person optimal for the static CHSH game, but they need not be so for the dynamic one. Comprehensive summary of all results is found in Table \ref{table:summary}.
Our results shed light on the subtle and fascinating role played by the information structure of a stochastic team with reference to quantum strategies and serve as instructive extensions of the landmark CHSH game.
%
%

\section{Preliminaries}
We first lay out some preliminary concepts required for our study beginning with a formulation of a CHSH game. We will be using $\Pbb$ to denote the joint probability of the random variables involved, and also for the probability of events. In particular, $\Pbb(y_1,y_2)$ denotes the probability that the random variables $y_1,y_2$ take values `$y_1$' and '$y_2$', respectively. Likewise, $\Pbb(y_1=0)$ is the probability that the random variable $y_1$ takes value $0$. The notation $y_1 \perp\!\!\perp y_2$ means $y_1,y_2$ are statistically independent.
\subsection{The CHSH Game}
Clauser, Horne, Shimony and Holt \cite{clauser1969chsh} devised a thought experiment to show how strategies that exploit shared quantum entanglement can violate a particular Bell inequality which later came to be known as the CHSH inequality. The setting that forms the backdrop of this inequality has often been referred to as the \textit{CHSH game}.

The CHSH game comprises of two agents \A{} and \B{}.
\A{} receives an observation of a random variable $ y_1 \in \{ 0,1 \} $ and \B{} receives $ y_2 \in \{ 0,1 \} $.
 In the standard CHSH game, $y_1$ and $y_2$ are taken to be uniformly random and independent of each other. \A{} then chooses $ u_1 \in \{ 0,1 \} $ based on the observation $ y_1 $, and \B{} chooses $ u_2 \in \{ 0,1 \} $ by observing $ y_2 $. They win the game if $ u_1 \oplus u_2 = y_1 \cdot y_2 $ and lose otherwise. The highest winning probability that the players can achieve classically is $75\%$. In this context,  ``classically" means that the players are restricted to play strategies based solely on local randomness (\ie, independent randomization by each player) or shared classical common randomness (\ie, players have access to a random variable generated independently of $y_1,y_2$). On the other hand, if the players choose actions resulting from a measurement performed on a Bell state,  then the best winning probability increases to about $85\%$. This increase reflects the advantage provided by quantum correlations.

In this paper we view the CHSH game through the lens of  team theory.
\A{} and \B{} form a \textit{team} in the sense of Radner and Marschak~\cite{radner1962team} -- decision makers with asymmetric information but a common goal, in this case maximizing the winning probability. \A{} and \B{} are referred to as \textit{agents} or \textit{players}.
Maximizing the winning probability is equivalent to the minimization of the expected value of a \textit{cost function} $\Be$, defined as:
\[
\Be (u_1, u_2; y_1, y_2) := u_1 \oplus u_2 \oplus y_1 \cdot y_2.
\]
This function vanishes when the players win the game ($u_1 \oplus u_2 = y_1 \cdot y_2$) and is unity when they lose ($u_1 \oplus u_2 \neq y_1 \cdot y_2$). As a result, the expected value $\mathbb{E}[\Be]$ represents the probability of \textit{losing} the game.

A key part of the theory of teams is \textit{information structure} which describes ``who knows what'' in a team. A team theoretic view of the CHSH game now opens the possibility of studying the impact of information structures on the players' strategies and winning probabilities. The goal of this paper is to pursue this direction to its logical conclusion. We aim to understand the impact of the information structure, and team theoretic solution concepts on the outcomes of the CHSH game.

The above game belongs to a broad class of games known as XOR games \cite{brunner2014bell}. Players win if the XOR of their actions, $u_1 \oplus u_2$, equals $c=c(y_1,y_2)$,  some binary function of $y_1$ and $y_2$. Varying the function $c$ yields eight distinct XOR games, each of which corresponds to a version of the CHSH inequality. In our team theoretic view, this yields eight distinct cost functions, numbered $\Be,\hdots,\Bscr_8$, and thus eight distinct team problems.
We refer to these cost functions as ``Bell costs". The results of our analysis are qualitatively independent of the choice of the specific cost function; we show this in Section \ref{sec:bell1}. Thus we work with $\Be$ for convenience.

\subsection{Nonlocal Games with varying Information Structures} 
This section formalizes nonlocal games under two distinct types of information structures -- \textit{static} and \textit{dynamic} -- that describe how information propagates between players.

\subsubsection{Static Information Structure}
An information structure is said to be \textit{static} if the information available to any player is unaffected by the actions of other players; thus it depends only on the randomness from the environment. In this information structure we assume  \A{} and \B{} receive independent binary inputs $ y_1 \sim \Ber(1-p_1) $ and $ y_2 \sim \Ber(1-p_2) $ and $y_1$ or $y_2$ is not affected by the action of the other player; this is demonstrated in Figure \ref{fig:info}.
\begin{align*}
     \bkP(y_1 = 0) &= p_1 \quad  \bkP(y_1 = 1) = 1-p_1 =: \pat, \\
     \bkP(y_2 = 0) &= p_2 \quad  \bkP(y_2 = 1) = 1- p_2 =: \pbt.
\end{align*}
The goal of the players is to minimize the expected cost $\Be$ over a certain class of strategies; the exact expressions of the cost for these strategy classes is given in Section \ref{sec:strategy}.

This paper focuses on the class of classical, quantum and projective strategies, denoted $\mathsf{C},\mathsf{Q},\mathsf{\perp}$ respectively. Here we formulate the team problem for the static information structure in a way that is agnostic to the choice of strategy class.
Let $\mathsf{I}=\mathsf{S}$ denote the static information structure, and let
$\Asf$ denote an arbitrary \emph{strategy class} ($\Asf \in\{\mathsf{C},\mathsf{Q},\mathsf{\perp}\}$).
A strategy in the class $\Asf$ induces a conditional law
\[
\Pbb_{\Asf}(u_1,u_2 \mid y_1,y_2),
\qquad u_i,y_i\in\{0,1\},
\]
which specifies the probability with which actions $u_1,u_2$ are taken given that the agents observe $y_1,y_2.$
Under the static structure the observations are independent,
\[
y_1 \sim \Ber(1-p_1), \qquad y_2 \sim \Ber(1-p_2), \qquad y_1 \perp\!\!\perp y_2,
\]
and the actions $u_1,u_2$ do not affect the distribution of $y_1,y_2$.

The team-optimal cost under strategy class $\Asf$ is defined as
\begin{equation}
\label{eq:J_S_static}
J^{\mathsf{S}}_{\Asf}(p_1,p_2) \;:=\; \min_{\Pbb_{\Asf}} \;
\E_{y_1,y_2 \sim \Ber(1-p_1)\otimes\Ber(1-p_2)}\Big[
\E_{(u_1,u_2)\sim\Pbb_{\Asf}(\cdot\mid y_1,y_2)}\big[\Be(u_1,u_2;y_1,y_2)\big]
\Big],
\end{equation}
where the outer expectation is over the random inputs $(y_1,y_2)$ and the inner expectation is over the actions induced by the chosen strategy, and the minimization is over all laws $\Pbb_{\Asf}$ allowed by the strategy class $\Asf$.
No assumption (classical, quantum, or projective) is made here on the form of $\Pbb_{\Asf}$ beyond membership in the abstract class $\Asf$.
If $p_1=p_2=\tfrac{1}{2}$ we have,
\[
\Be(u_1,u_2;y_1,y_2)=\mathbbm{1}\big[u_1\oplus u_2 \neq y_1 y_2\big],
\]
then \eqref{eq:J_S_static} recovers the standard CHSH game objective for the particular choice of $\Asf$.





\tikzstyle{block} = [draw, rectangle,
minimum height=3em, minimum width=4em]
\tikzstyle{sum} = [draw, circle, node distance=1cm]
\tikzstyle{input} = [coordinate]
\tikzstyle{output} = [coordinate]
\tikzstyle{pinstyle} = [pin edge={to-,thin,black}]

\begin{figure}[!ht]
	\centering
	\subfloat
	{
		\begin{tikzpicture}[auto, node distance=1cm,>=latex',baseline]
			\node [input, name=input] {};
			\node [input, right of = input, node distance= 2.5cm](gap) {};
			\node [block, right of=gap,node distance=2cm] (b1) {\A{}};
			\node [block, right of=b1, node distance=7cm] (b3) {\B{}};
			\node[input, right of= b3, node distance=2cm](b4){};
			\node [input, name=int, right of=b3, node distance=4cm] {};
			\node [input, below of=b1, node distance=2cm] (u1) {d};
			\node [input, below of=b3, node distance=2cm] (u2) {d};

			\draw [draw,->] (gap) -- node {$y_1$} (b1);
			\draw [->] (b1) -- node {$u_1$} (u1);
			\draw [->] (b3) -- node {$u_2$} (u2);
			\draw [<-] (b3) -- node {$y_2$} (b4);
		\end{tikzpicture}
	}
	\hfill
	\subfloat
	{
		\begin{tikzpicture}[auto, node distance=1cm,>=latex']
			\node [input, name=input] {};
			\node [block, right of=input,node distance=2cm] (b1) {\A{}};
			\tikzset{XOR/.style={draw,circle,append after command={
						[shorten >=\pgflinewidth, shorten <=\pgflinewidth,]
						(\tikzlastnode.north) edge (\tikzlastnode.south)
						(\tikzlastnode.east) edge (\tikzlastnode.west)
					}
				}
			}
			\node [XOR, right of=b1,node distance=5cm, scale = 2.5] (b2) {};
			\node [block, right of=b2, node distance=5cm] (b3) {\B{}};
			\draw [->] (b1) -- node[name=u] {$x_1 = y_1 \oplus u_1$} (b2);
			\draw [->] (b2) -- node[name=b] {$y_2 = x_1 \oplus v$} (b3);
			\node [input, name=int, right of=b3, node distance=4cm] {};
			\node [input, above of=b2, node distance=2cm] (g_d) {d};
			\node [input, below of=b1, node distance=2cm] (u1) {d};
			\node [input, below of=b3, node distance=2cm] (u2) {d};

			\draw [draw,->] (input) -- node {$y_1$} (b1);
			\draw [->] (g_d) -- node {Noise $v$} (b2);
			\draw [->] (b1) -- node {$u_1$} (u1);
			\draw [->] (b3) -- node {$u_2$} (u2);
		\end{tikzpicture}
	}
	\caption{A block diagram explaining the static and dynamic information structure. In the dynamic information structure the information and actions of the two players are interdependent, unlike the static information structure.}
	\label{fig:info}
\end{figure}
\subsubsection{Dynamic Information Structure}
In the dynamic information structure, \B{}'s input depends on \A{}'s action. \A{} observes a source 
whose probability distribution  is given by
\begin{equation*}
    \bkP(y_1 = 0) = p_s \quad \bkP(y_1 = 1) = 1 - p_s =: \pst.
\end{equation*}
\A{} takes action $ u_1 $ resulting in a signal $ x_1 = y_1 \oplus u_1 $. This signal $ x_1 $ is transmitted through a binary symmetric channel that adds an error $ v \sim \Ber(1-p_c)$ resulting in the channel output  $y_2:=x_1 \oplus v$.
The distribution of $v$ is
\begin{equation*}
    \bkP(v = 0) = p_c \quad \bkP(v = 1) = 1 - p_c =: \pct.
\end{equation*}
\A{} and \B{} must choose actions $u_1, u_2$
by observing $y_1,y_2$ respectively with the goal of minimizing the cost $\Be$. The setup is explained in Figure \ref{fig:info}.  The exact expression for the expected cost depends on the strategic class under consideration and are derived in the following section.

For the dynamic information structure, denote $\mathsf{I}=\mathsf{D}$.
As before, let $\Asf$ be an arbitrary admissible strategy class specifying conditional laws $\Pbb_{\Asf}(u_1,u_2\mid y_1,y_2)$ that are implementable under $\mathsf{D},\Asf$.  Notice that $y_2$ now \textit{depends} on $(y_1,u_1,v)$.  The feasible set of conditional laws allowed by $\Asf$ must respect the causal order depicted in Figure~\ref{fig:info} (in particular, $\B$'s action may be allowed to depend only on $y_2$ and whatever side information the class $\mathsf{S}$ permits).

The team-optimal cost under $\mathsf{S}$ for the dynamic structure is
\begin{equation}
\label{eq:J_S_dynamic}
J_{\Asf}^{\mathsf{D}}(p_s,p_c) \;:=\; \min_{\Pbb_\Asf}
\; \E_{y_1\sim\Ber(1-p_s),\,v\sim\Ber(1-p_c)}
\Big[ \E_{(u_1,u_2)\sim\Pbb_{\Asf}(\cdot\mid y_1,y_2)}\big[ \Be(u_1,u_2;y_1,y_2)\big] \Big],
\end{equation}
where $y_2 = y_1\oplus u_1 \oplus v$ inside the inner distribution, i.e., the inner expectation is taken with respect to the joint law of actions induced by the strategy and the causal channel realization. Again, no structural assumption on $\Pbb_{\Asf}$ is made other than membership in the abstract class $\Asf$.

In the dynamic information structure the action of the first acting player has a secondary, indirect, effect -- the primary effect of the action is on the cost -- by influencing the strategies available to the second acting player. Indeed this secondary effect is present even in the absence of any primary effect (as seen for example, in a communication problem). This phenomenon is known as the \textit{dual effect} in the theory teams and decentralized control, and is known to render decentralized decision problems computationally intractable~\cite{papadimitriou85intractable}. The dual effect was brought to fore by Witsenhausen~\cite{witsenhausen_counterexample_1968} wherein he presented a simple-looking example with a dynamic information that remains unsolved till today; see also for more discussion~\cite{kulkarni2014optimizer}.

Notice that the dynamic problem reduces to a static one when the channel is maximally noisy ($p_c=\tfrac{1}{2}$), which severs the causal dependence of $y_2$ on $u_1$ and recovers independent observations for the two agents. This property has fascinating implications, as we will see later in the paper.

\subsection{Strategy Classes}
\label{sec:strategy}
This section describes the various strategy classes that players can adopt.
\subsubsection{Classical Strategies}
\label{sec:class}
A general mixed classical strategy is given by stochastic mappings $\mathbb{P}(u_i|y_i)$, $u_i, y_i \in \{0,1\}$. Formally, a mixed classical strategy is a pair of functions $(\Gamma_1, \Gamma_2)$, where
$$ \Gamma_1: \{0,1\} \rightarrow [0,1], \quad \Gamma_2: \{0,1\} \rightarrow [0,1] $$
specify the probability of choosing action $0$ given the observation, i.e., $\mathbb{P}(u_1=0|y_1) = \Gamma_1(y_1)$ and $\mathbb{P}(u_2=0|y_2) = \Gamma_2(y_2)$. Thus under a classical strategy
\[ \Pbb(u_1,u_2|y_1,y_2) = \Pbb(u_1|y_1)\Pbb(u_2|y_2). \]
In a static information structure $y_2$ is independent of $u_1$, whereas in a dynamic information structure $$\Pbb(y_2|u_1) = \Pbb(u_1+y_1+v|u_1).$$
However, since the expected cost $\E[\Be]$ is linear in both $\Pbb(u_1|y_1)$ and $\Pbb(u_2|y_2)$, deterministic strategies (those with $\Pbb(u_i|y_i)\in\{0,1\}$) achieve the minima. Thus, restricting to classical strategies, it suffices to restrict attention to deterministic strategies.

Under these strategies, players choose actions as a deterministic function of their observations.
Formally, classical strategy is a pair of functions $ (\gamma_1, \gamma_2) $,
\begin{align*}
\gamma_1&: \{0,1\} \rightarrow \{0,1\},  \quad \gamma_2: \{0,1\} \rightarrow \{0,1\},
\end{align*}
so that the actions of the players are $  u_1 = \gamma_1(y_1),
u_2 = \gamma_2(y_2).
$ Costs for the static and dynamic information structure will be denoted with a superscript $\mathsf{S},\mathsf{D}$ respectively, with a subscript added to denote the strategy class.
In particular, let $\mathsf{C}$ be the set of classical strategies.
The expected costs under classical strategies for the static and dynamic information structure are denoted $\Jscp,\Jdcp$, and are given by
\begin{subequations}
\begin{align}
\Jscp &= \underset{y_1,y_2}{\Ebb}[\Be(\gamma_1(y_1), \gamma_2(y_2); y_1, y_2)], \text{ where } y_i \sim \Ber(1-p_i) \\
\Jdcp &= \underset{y_1,y_2}{\Ebb}[\Be(\gamma_1(y_1), \gamma_2(y_2); y_1, y_2)]\\
&=\underset{y_1,v}{\Ebb}[\Be(\gamma_1(y_1), \gamma_2(\gamma_1(y_1) \oplus v); y_1, \gamma_1(y_1)\oplus v)] \\
& \qquad \text{ where } y_1 \sim \Ber(1 - p_s) \text{ and } v \sim \Ber(1 - p_c).
\end{align}
\label{jic}
\end{subequations}
Here $ \mathop{\Ebb}\limits_{ \bullet} $ denotes the expectation with respect to the variables `$\bullet$'. 
 The optimal costs are defined as
\begin{subequations}
\begin{align}
\Jscop &=  \min_{\gamma_1, \gamma_2} \Jscp \\
\Jdcop &=  \min_{\gamma_1, \gamma_2} \Jdcp.
\end{align}
\label{jico}
\end{subequations}
Technically these optimal costs are called the \textit{team optimal costs}; this and other solution concepts are discussed later in the paper.



\subsubsection{Quantum Strategies}
\label{sec:genq}
To define a quantum strategy, assume that players conduct measurements on a bipartite quantum state $ \ket{\Psi} $ lying in a 4-dimensional Hilbert space with the basis $ \{ \ket{00}, \ket{01}, \ket{10}, \ket{11} \} $.  We take $\ket{\Psi}$ as the Bell state
\[
\ket{\Psi} = \frac{1}{\sqrt{2}} \left( \ket{00} + \ket{11} \right).
\]
In a quantum strategy the players choose a positive operator-valued measure (POVM) as a function of their observations $y_i$. Let $\mathbb{H}_2$ be the set of $2\times 2$ Hermitian matrices. A POVM is a set of Hermitian positive semi-definite matrices which sum up to the identity.  Players then  perform measurements on $\ket{\Psi}$ using these POVMs and choose actions as a function of the outcome of their measurement outcomes. 
Note that while optimizing over the choices of quantum strategies, we hold the state $\ket{\Psi}$ fixed and only vary the POVMs.

Suppose \A{} performs a POVM measurement $\{E_i, I-E_i \}$, where $E_i \in \mathbb{H}_2$, if $y_1 = i$,
and \B{} performs a POVM measurement $\{F_i, I-F_i \}$ with $F_i \in \mathbb{H}_2$ if $y_2 = i$, where $i\in \{0,1\}.$
Let $E_{y_1} - ( I - E_{y_1}) = 2E_{y_1} - I$ be the $\pm 1$ valued observable that \A{} measures. Let the measurement outcome be expressed as $(-1)^{m_1}$. Similarly \B{} measures the observable $2 F_{y_2} - I$ and let the measurement outcome be denoted by $(-1)^{m_2}$. Clearly $m_i \in  \{ 0, 1 \}$ for each $i$.

We assume that the action chosen by the players is $ m_i$, \ie, the players play $u_i=m_i$. In general a strategy can be  $u_i = \gamma(m_i, y_i)$ for  $\gamma: \{0,1\}^2 \to \{0,1\}$. But we show in Appendix \ref{app:povm}, that taking $u_i =m_i$ is without loss of generality.
Since the action is exactly equal to the measurement outcome, we denote these POVMs by $\{E^{u_1}_{y_1}\}_{u_1}$ and $\{F^{u_2}_{y_2}\}_{u_2}$ for each $y_1,y_2$.

A quantum strategy is given by the tuple $(\bie,\bif)$ where
$\bie=(E_0,E_1), \bif=(F_0,F_1)$, and $\mathsf{Q}$ is the set of all such $(\bie,\bif)$. With a slight abuse of notation we will also use $\mathsf{Q}$ to denote any \textit{one} player's strategy set too -- thus $\mathsf{Q}$ is also the set of POVMs $\bie$ (resp., $\bif$).
The expected costs under quantum strategies for the static and dynamic information structure are:
\begin{subequations}
\begin{align}
\Jsqp &= \mathop{\Ebb}_{u_1,u_2,y_1,y_2}[\Be(u_1,u_2; y_1, y_2)] \label{eq:jsqp}\\
\Jdqp &= \mathop{\Ebb}_{u_1,u_2,y_1,y_2}[\Be(u_1,u_2; y_1, y_2)] \label{eq:jdqp}
\end{align}
\label{jiq}
\end{subequations}
As in the  case of classical strategies, we again note that the above expressions differ in the joint distribution  $\Pbb(u_1,u_2,y_1,y_2)$, which is in turn induced by the strategy.

We use $\rho$ to denote the density matrix associated with the quantum state $\rho = \ket{\Psi}\bra{\Psi}$. For this maximally entangled state, the joint distribution of measurement outcomes is given by
\begin{equation}
\label{eq:trace_ef}
\Pbb(u_1,u_2|y_1,y_2) = \tfrac{1}{4}\!\left(1 + (-1)^{u_1+u_2}\,\evi\!\cdot\!\fvj\right),
\end{equation}
where $\evi,\fvj\in \Real^3$ denote Bloch vectors corresponding to measurements $E_{y_1}^{u_1}$ and $F_{y_2}^{u_2}$, respectively, and $\evi\!\cdot\!\fvj$ is the Euclidean inner product.

\begin{lemma}
For a quantum strategy $(\bie,\bif)$ we have, for a
	\begin{enumerate}
		\item static information structure,
		\begin{equation}
			\bkP(u_1,u_2,y_1,y_2) =  \Tr(E^{u_1}_{y_1} \otimes F^{u_2}_{y_2} \rho) \bkP(y_1) \bkP(y_2) \label{jointS}
		\end{equation}
		\item  and dynamic information structure,
		\begin{align}
			\bkP(u_1,u_2,y_1,y_2)
            & = \Tr(E^{u_1}_{y_1} \otimes F^{u_2}_{y_2} \rho)\bkP(y_2 |u_1, y_1) \bkP(y_1) \label{jointD}
		\end{align}
	\end{enumerate}
	\label{lemma:joint}
\end{lemma}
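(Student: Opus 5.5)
Both parts follow from the chain rule for joint probabilities, combined with the Born rule for the outcome distribution once the inputs are fixed. We first factor $\bkP(u_1,u_2,y_1,y_2)=\bkP(u_1,u_2\mid y_1,y_2)\,\bkP(y_1,y_2)$. By the definition of a quantum strategy in Section~\ref{sec:genq}, conditional on $(y_1,y_2)$ \A{} applies the POVM $\{E^{u_1}_{y_1}\}_{u_1}$ and \B{} applies $\{F^{u_2}_{y_2}\}_{u_2}$ to their halves of $\rho=\ket{\Psi}\bra{\Psi}$, and each plays $u_i=m_i$. The product $\{E^{u_1}_{y_1}\otimes F^{u_2}_{y_2}\}_{u_1,u_2}$ is then a POVM on the joint system, so the Born rule gives $\bkP(u_1,u_2\mid y_1,y_2)=\Tr(E^{u_1}_{y_1}\otimes F^{u_2}_{y_2}\rho)$. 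This is consistent with \eqref{eq:trace_ef}.

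For the static structure, the inputs are generated independently of each other and of the shared state, so $\bkP(y_1,y_2)=\bkP(y_1)\bkP(y_2)$. Substituting this into the factorization gives \eqref{jointS} at once.

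For the dynamic structure the chain rule has to be ordered to respect causality. We write
\[
\bkP(u_1,u_2,y_1,y_2)=\bkP(y_1)\,\bkP(u_1\mid y_1)\,\bkP(y_2\mid u_1,y_1)\,\bkP(u_2\mid u_1,y_1,y_2).
\]
Here $\bkP(u_1\mid y_1)=\Tr((E^{u_1}_{y_1}\otimes I)\rho)$ is \A{}'s marginal. Since $y_2=y_1\oplus u_1\oplus v$ with $v$ independent of everything else, $\bkP(y_2\mid u_1,y_1)=\bkP(v=y_1\oplus u_1\oplus y_2)$. The last factor comes from \B{} measuring $F_{y_2}$ on the post-measurement state of his subsystem after \A{}'s outcome $u_1$. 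Because $v$ is independent noise that does not touch the quantum system, this post-measurement state depends only on $(u_1,y_1)$. The Born rule for sequential measurements on separate tensor factors therefore gives
\[
\bkP(u_1\mid y_1)\,\bkP(u_2\mid u_1,y_1,y_2)=\Tr(E^{u_1}_{y_1}\otimes F^{u_2}_{y_2}\rho).
\]
To verify this, write the post-measurement reduced state as $\Tr_1\!\big((E^{u_1}_{y_1}\otimes I)\rho\big)/\bkP(u_1\mid y_1)$ and multiply. Rearranging the factors then yields \eqref{jointD}.

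The main obstacle is justifying this last step in the dynamic case. \B{}'s measurement choice depends on $y_2$, which is correlated with \A{}'s outcome through the channel, so one has to check that conditioning on $y_2$ does not change the joint quantum statistics beyond selecting the operator $F_{y_2}$. We argue this from the conditional independence $v\perp\!\!\perp(\rho,y_1)$: given $(u_1,y_1)$, the value $y_2$ is a function of the independent noise alone. \B{}'s conditional outcome law given $(u_1,y_1,y_2)$ is therefore the Born probability of $F^{u_2}_{y_2}$ on the post-measurement state determined by $(u_1,y_1)$, and the trace identity above holds for each fixed $y_2$.
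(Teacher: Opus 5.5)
Your proof is correct and is essentially the paper's argument. In the static case you use the same factorization plus independence. In the dynamic case you use the same causal chain-rule ordering, with $\bkP(u_2\mid u_1,y_1,y_2)$ computed as a Born probability on \B{}'s post-measurement state. Your reduced state $\Tr_1\big((E^{u_1}_{y_1}\otimes I)\rho\big)/\bkP(u_1\mid y_1)$ coincides with the paper's $\rho'$ after tracing out \A{}'s half, and your appeal to the independence of $v$ makes explicit a point the paper leaves tacit.

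One small caution: the identity $\bkP(u_1,u_2\mid y_1,y_2)=\Tr(E^{u_1}_{y_1}\otimes F^{u_2}_{y_2}\rho)$ in your opening paragraph holds only for the static structure. In the dynamic one, conditioning on $y_2$ biases $u_1$; for $p_c=1$ it even forces $u_1=y_1\oplus y_2$. So it belongs inside the static case, which is the only place you actually use it.
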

\begin{proof}
%
%
\begin{enumerate}
	\item For static information structure, we factorize the joint distribution as

    \begin{equation*}
        \bkP(u_1,u_2,y_1,y_2) = \bkP(u_1,u_2 | y_1, y_2) \bkP(y_1, y_2)
    \end{equation*}
   Moreover, by the specification of the quantum strategy above we have
    $\bkP(u_1,u_2|y_1,y_2) = \Tr(E^{u_1}_{y_1} \otimes F^{u_2}_{y_2} \rho)$. Hence, using the independence of $y_1,y_2$, \begin{equation}
		\bkP(u_1,u_2,y_1,y_2) = \Tr(E^{u_1}_{y_1} \otimes F^{u_2}_{y_2} \rho) \bkP(y_1) \bkP(y_2). \non
	\end{equation}
	\item For dynamic information structure, we write
    \begin{equation*}
    	\bkP(u_1,u_2,y_1,y_2) =\bkP(u_2| y_1, u_1, y_2)  \bkP(y_2 |u_1, y_1)\bkP(u_1 | y_1) \bkP(y_1).
    \end{equation*}
    We
    first analyse the state post \A's measurement. Upon observing $y_1$, \A{} performs a measurement using the POVM $\{E_{y_1}, I-E_{y_1}\}$ and obtains outcome $m_1$, which is also equal to his action $u_1$. This we have that the probability
 	$\bkP(u_1 | y_1)=\Tr((E^{u_1}_{y_1} \otimes I)\rho)$. The post-measurement state at this stage is given by \begin{equation*}
		\rho'=\frac{\sqrt{E^{u_1}_{y_1}} \otimes I\rho\sqrt{E^{u_1}_{y_1}} \otimes I}{\bkP(u_1 | y_1)}.
	\end{equation*} 
	After this, \B{} performs his measurement on $\rho'$, and chooses action $u_2=m_2$. We thus have,
	$$\bkP(u_2| y_1, u_1, y_2)=\Tr((I \otimes F^{u_2}_{y_2})\rho')=\frac{\Tr((I \otimes F^{u_2}_{y_2}) (E^{u_1}_{y_1} \otimes I)\rho)}{\Tr(E^{u_1}_{y_1} \otimes I \rho)} = \frac{\Tr(E^{u_1}_{y_1} \otimes F^{u_2}_{y_2} \rho)}{\bkP(u_1 | y_1)}.$$
	Hence,
		\begin{equation}
			\bkP(u_1,u_2,y_1,y_2) = \Tr(E^{u_1}_{y_1} \otimes F^{u_2}_{y_2} \rho)\bkP(y_2 |u_1, y_1) \bkP(y_1).\non
		\end{equation}
\end{enumerate}
\end{proof}

	Here, note that $\bkP(y_2 |u_1, y_1)$ is either $p_c$ or $1-p_c$ since $y_2=y_1 \oplus u_1 \oplus v$.
%
The corresponding team optimal costs are defined as
\begin{subequations}
\begin{align}
\Jsqop &= \min_{\bie,\bif} \Jsqp\\
\Jdqop &= \min_{\bie,\bif} \Jdqp,
\end{align}
\label{jiqo}
\end{subequations}
where the minimization is over all possible POVMs $\bie,\bif$.

\subsubsection{Parameterisation of Quantum Strategies}
\label{sec:paramquant}

For a general quantum strategy, we express the POVM elements as a linear combination of the Pauli matrices. Let
\[
E_i = \ei I + \mathbf{e}_i \cdot \boldsymbol{\sigma}, \qquad
F_j = \fj I + \mathbf{f}_j \cdot \boldsymbol{\sigma},
\]
where $e_i,f_j \in \mathbb{R}$, $\mathbf{e}_i,\mathbf{f}_j \in \mathbb{R}^3$, and $\boldsymbol{\sigma} = (\sigma_x,\sigma_y,\sigma_z)$ is the Pauli vector. The vectors are defined as $\evi = (e_{i}^x, e_{i}^y, e_{i}^z), $ $ \fvj = (f_{j}^x, f_{j}^y, f_{j}^z) .$
Positivity of the POVM elements requires
\[
|\mathbf{e}_i| \leq \min(e_i,1-e_i), \qquad
|\mathbf{f}_j| \leq \min(f_j,1-f_j).
\]

For the maximally entangled Bell state $\rho = \ketbra{\Psi}{\Psi}$, a crucial property is
\[
\operatorname{Tr}(\rho\, \sigma_\alpha \otimes \sigma_\beta) = \delta_{\alpha\beta}, \quad \alpha,\beta \in \{x,y,z\},
\]
where $\delta_{\alpha\beta}$ is the Kronecker delta (which equals $1$ if $\alpha = \beta$ and $0$ otherwise), reflecting perfect correlations along each axis and vanishing cross-correlations.
Using this, the traces simplify naturally:
%
\begin{equation}
	\Tr E_i = 2 \ei, \qquad \Tr F_j = 2 \fj,
	\label{eq:trace_e}
\end{equation}
\begin{equation}
	\Tr(E_i \otimes F_j \rho) = \ei\fj +\evi \cdot \fvj.
	\label{eq:trace_ef}
\end{equation}

\begin{prop}
For both the static and dynamic information structure, for each classical strategy, there exists a quantum strategy $(\bie,\bif)$ attaining the same joint distribution on $u_1,u_2,y_1,y_2$.
\end{prop}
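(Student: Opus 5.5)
The plan is to exhibit, for each deterministic classical strategy $(\gamma_1,\gamma_2)$, a quantum strategy whose POVM elements are trivial multiples of the identity. Such measurements ignore the shared state $\rho$, so they reproduce any prescribed local deterministic response. Since Section~\ref{sec:class} shows that deterministic strategies suffice, we first handle these and then indicate the extension to mixed strategies. Concretely, set
\[
E_i := \mathbbm{1}[\gamma_1(i)=0]\, I, \qquad F_j := \mathbbm{1}[\gamma_2(j)=0]\, I, \qquad i,j\in\{0,1\}.
\]
In the parameterisation of Section~\ref{sec:paramquant} this means $\ei\in\{0,1\}$, $\fj\in\{0,1\}$ and $\evi=\fvj=0$. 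These satisfy the positivity constraints, since $|\evi| = 0 \le \min(\ei,1-\ei)$, so $\{E_i,I-E_i\}$ and $\{F_j,I-F_j\}$ are valid POVMs. For mixed strategies $(\Gamma_1,\Gamma_2)$ one takes instead $E_i=\Gamma_1(i) I$ and $F_j=\Gamma_2(j) I$, which are again valid POVMs.

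Next we compute the conditional law of the actions. By \eqref{eq:trace_ef}, $\Tr(E^{u_1}_{i}\otimes F^{u_2}_{j}\rho)$ reduces to a product of scalars, because the Bloch-vector terms vanish and $\Tr\rho=1$. This gives $\Pbb(u_1=0\mid y_1=i)\Pbb(u_2=0\mid y_2=j)$-type products, and in general
\[
\Tr(E^{u_1}_{y_1}\otimes F^{u_2}_{y_2}\rho)=\Pbb_{\mathsf C}(u_1\mid y_1)\,\Pbb_{\mathsf C}(u_2\mid y_2),
\]
where the right-hand side equals $\mathbbm{1}[u_1=\gamma_1(y_1)]\,\mathbbm{1}[u_2=\gamma_2(y_2)]$ in the deterministic case.

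We then invoke Lemma~\ref{lemma:joint}. In the static case, \eqref{jointS} gives $\Pbb(u_1,u_2,y_1,y_2)=\Pbb(u_1\mid y_1)\Pbb(u_2\mid y_2)\Pbb(y_1)\Pbb(y_2)$. This is exactly the joint law induced by the classical strategy under the static structure. In the dynamic case, \eqref{jointD} gives $\Pbb(u_1\mid y_1)\Pbb(u_2\mid y_2)\Pbb(y_2\mid u_1,y_1)\Pbb(y_1)$. Since $\Pbb(y_2\mid u_1,y_1)$ is determined solely by the channel ($p_c$ or $1-p_c$), this coincides with the classical dynamic joint law $\Pbb(y_1)\Pbb(u_1\mid y_1)\Pbb(y_2\mid u_1,y_1)\Pbb(u_2\mid y_2)$.

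The only point needing care is the dynamic case. There one must confirm that the post-measurement state in the proof of Lemma~\ref{lemma:joint} introduces no extra dependence. Because $E^{u_1}_{y_1}$ is a scalar multiple of $I$, the update $\rho'$ equals $\rho$ whenever $\Pbb(u_1\mid y_1)>0$, and is simply not needed otherwise. The formula \eqref{jointD} therefore already accounts for everything, and no further obstacle arises.
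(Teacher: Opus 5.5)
Your proposal is correct and follows essentially the same route as the paper. The paper likewise sets $E_i=\Gamma_1(i)\,I$ and $F_j=\Gamma_2(j)\,I$, then checks in both structures that the induced marginals $\tfrac12\Tr E^{u_1}_{y_1}$ and $\tfrac12\Tr F^{u_2}_{y_2}$ (and, in the dynamic case, the conditional law of $u_2$ given $(y_2,u_1,y_1)$) reproduce the classical law. Your remark that the post-measurement state is unchanged when $E^{u_1}_{y_1}$ is a multiple of $I$ is a small extra justification that the paper leaves implicit.
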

\begin{proof}
	\begin{enumerate}
		\item \textit{Static information structure:} 
        First consider the static information structure.
We have the following relations
		\begin{subequations}
			\begin{align}
			\bkP(u_1|y_1) &= \Tr(E^{u_1}_{y_1} \otimes I \rho)=\frac12 \Tr E^{u_1}_{y_1},\label{eq:static_prob1} \\
			\bkP(u_2|y_2) &= \Tr(I \otimes F^{u_2}_{y_2} \rho)=\frac12 \Tr F^{u_2}_{y_2},
			\label{eq:static_prob2}
		\end{align}
		\end{subequations}
where the second equality holds because the Bell state's reduced density matrix is $\half I$.

Now, consider a classical mixed strategy
$(\Gamma_1, \Gamma_2)$ given by $$\mathbb{P}(u_1=0|y_1=i) = \Gamma_1(i) =: \alpha_i, \text{and} \mathbb{P}(u_2=0|y_2=j) = \Gamma_2(j) =: \beta_j,$$ where $\alpha_i, \beta_j \in [0,1]$. 
It can be seen that the quantum strategy (with the Bell state) given by\begin{equation}
			E_0 = \alpha_0 I, E_1 = \alpha_1 I, F_0 = \beta_0 I, F_1 = \beta_1 I, \label{eq:povmclass}
		\end{equation} is equivalent to this classical strategy, as it satisfies the above relations as per \eqref{eq:static_prob1} and \eqref{eq:static_prob2}.
	\item \textbf{Dynamic information structure:} For this information structure we have
	\begin{subequations}
		\begin{align}
		\bkP(u_1|y_1) &= \Tr(E^{u_1}_{y_1} \otimes I \rho)=\frac12 \Tr E^{u_1}_{y_1}, \label{eq:dyn_prob1} \\
		\bkP(u_2|y_2, u_1, y_1) &= \frac{\Tr(E^{u_1}_{y_1} \otimes F^{u_2}_{y_2} \rho)}{\frac12 \Tr E^{u_1}_{y_1}},
	\label{eq:dyn_prob2}
	\end{align}
	\end{subequations}
thanks to \eqref{jointD}.
As in the static case, consider a classical mixed strategy, $\bkP(u_1 = 0 | y_1 = i) = \alpha_i, \bkP(u_2 = 0 | y_2 = j) = \beta_j $. The same quantum strategy  $(\bie,\bif)$ given in \eqref{eq:povmclass} satisfies \eqref{eq:dyn_prob1} and  \eqref{eq:dyn_prob2} .
	\end{enumerate}
	Hence, any classical strategy can be expressed in terms of POVMs.
\end{proof}
Due to this proposition, we consider the set of classical strategies (upto isomorphisms) as a subset of the set of quantum strategies (with a Bell state), that is, we have the relation $\mathsf{C} \subseteq \mathsf{Q}$.
\subsubsection{Projective Quantum Strategies}
Projective quantum strategies are a specific subset of quantum strategies that restrict players to executing projective measurements. In these strategies, the measurement operators $\{ E_0, E_1, F_0, F_1 \}$ are characterized by being rank-1 projective matrices, meaning that $E^{2}_i = E_i$ and $F^{2}_j = F_j$, for all $i,j$. These measurements essentially project the quantum state onto specific states and their orthogonal complements. This projection can be parameterized as follows. Define,
\begin{equation*}
    P (\theta,\phi)  = \ket{\theta,\phi}\bra{\theta, \phi}.
\end{equation*}
Here, $\ket{\theta,\phi} = \cos{\theta} \ket{0} + e^{\iota \phi} \sin{\theta} \ket{1}$ represents a pure quantum state on the Bloch sphere~\cite{neilsen2004qcqi}, with $\theta \in [0,\pi/2]$, $\phi \in [0,2\pi]$, and where $\iota$ is the imaginary unit.
A projective quantum strategy is a quantum strategy with the POVM given by
\begin{align*}
    E_0 &=: P(\theta_{10}, \phi_{10}), &
    E_1 &=: P(\theta_{11}, \phi_{11}), \\
    F_0 &=: P(\theta_{20}, \phi_{20}), &
    F_1 &=: P(\theta_{21}, \phi_{21}).
\end{align*}
Let $\perp$ be the set of projective quantum strategies. Additionally, let $$\boldsymbol{\theta}=(\theta_{10},\theta_{11},\theta_{20},\theta_{21}), \quad \boldsymbol{\phi} = (\phi_{10}, \phi_{11}, \phi_{20},\phi_{21}).$$ So, a projective strategy is parametrized by  $(\boldsymbol{\theta},\boldsymbol{\phi})$. The expected cost under projective quantum strategies is denoted by
\begin{subequations}
\begin{align}
\Jspp &= \mathop{\Ebb}_{u_1,u_2,y_1,y_2}[\Be(u_1,u_2;y_1,y_2)] \label{eq:jspp}\\
\Jdpp &= \mathop{\Ebb}_{u_1,u_2,y_1,y_2}[\Be(u_1,u_2;y_1,y_2)] \label{eq:jdpp}
\end{align}
\label{jip}
\end{subequations}
where the joint distribution of $(u_1,u_2,y_1,y_2)$ is given by \eqref{jointS} in \eqref{eq:jspp} and \eqref{jointD} in \eqref{eq:jdpp} for the static and dynamic information structure respectively.
The optimal costs are given by:
\begin{subequations}
    \begin{align}
        \Jspop &= \min_{\boldsymbol{\theta},\boldsymbol{\phi}} \Jspp,\\
        \Jdpop &= \min_{\boldsymbol{\theta},\boldsymbol{\phi}} \Jdpp.
    \end{align}
    \label{jipo}
\end{subequations}

Clearly, $\mathsf{\perp} \subseteq \mathsf{Q}$. However, unlike general quantum strategies, all classical strategies do not have equivalent projective quantum strategies as defined above.

\subsection{Invariance of Bell Costs}
\label{sec:bell1}
Let a general Bell cost be denoted by \begin{equation}
\Bek(u_1, u_2 ; y_1,y_2) = u_1 \oplus u_2 \oplus \alpha_k(y_1,y_2),
\end{equation}
where $\alpha_k$ is a binary function of $y_1,y_2$.
There are 16 such binary functions $\alpha_k$, out of which we have listed the 8 non-trivial ones that give rise to a violation of a corresponding Bell inequality. The remaining 8 functions are considered trivial because they can be perfectly decoupled into local functions of the form $\alpha_k(y_1, y_2) = g(y_1) \oplus h(y_2)$. For these trivial cases, players can always achieve a zero expected cost (a $100\%$ winning probability) using strictly classical deterministic strategies by simply playing $u_1 = g(y_1)$ and $u_2 = h(y_2)$, meaning no quantum advantage can exist.
\begin{table}[ht]
\begin{center}
$\begin{array}{ |c|c|c|c|c|c| }
 \hline
 \Bek & \alpha_k(y_1,y_2) & \alpha_k(0,0) & \alpha_k(0,1)& \alpha_k(1,0)& \alpha_k(1,1) \\
\hline
\mathcal{B}_1 & y_1\cdot y_2 & 0 & 0 & 0 & 1 \\
\mathcal{B}_2 & y_1\cdot \overline{y_2} & 0 & 0 & 1 & 0\\
\mathcal{B}_3 & \overline{y_1}\cdot y_2 & 0 & 1 & 0 & 0\\
\mathcal{B}_4 & \overline{y_1}\cdot \overline{y_2} & 1 & 0 & 0 & 0\\
\mathcal{B}_5 & \overline{y_1} + \overline{y_2} & 1 & 1 & 1 & 0\\
\mathcal{B}_6 & \overline{y_1} + y_2 & 1 & 1 & 0 & 1\\
\mathcal{B}_7 & y_1 + \overline{y_2} & 1 & 0 & 1 & 1\\
\mathcal{B}_8 & y_1 + y_2 & 0 & 1 & 1 & 1\\
 \hline
\end{array}$
\end{center}
    \caption{Cost Functions Associated With All Bell's Inequalities.}
    \label{tab:bellcosts}
\end{table}

In this paper we consider the well known CHSH game which is derived from the CHSH inequality. The CHSH game corresponds to the cost $\Be(u_1, u_2 ; y_1,y_2) = u_1 \oplus u_2 \oplus y_1 \cdot y_2$. But in general, one can look at general ``Bell costs" $\Bek$  which can be derived from other similar Bell inequalities. We show in Appendix \ref{sec:bellinv} that the expected quantum costs vary only upto a simple mapping across Bell costs. Moreover, the optimal cost under all strategy classes remains the same regardless of which Bell cost one chooses. As such, fixing $\Be$ leads to no loss of generalisation due to this invariance across Bell costs.

\section{Team Optimal Cost under varying Information Structures}

We define our first solution concept by which we will study these strategy classes $\mathsf{C}$, $\mathsf{\perp}$ and $\mathsf{Q}$. Recall that we have already established that $ \mathsf{C}, \mathsf{\perp} \subseteq \mathsf{Q}$.


\begin{definition}
    Fix $(p_1,p_2)$ and $(p_s,p_c)$ and let $\mathsf{A}$ be a strategy class. A strategy pair $ (\bie^*,\bif^*)\in \mathsf{Q} $ is called the {team optimal} for the static information structure if
    $$ J^\mathsf{S}_\mathsf{Q}(\bie^*,\bif^*;p_1,p_2) \leq \Jsqp  \quad \forall (\bie,\bif)\in \mathsf{Q}, $$
    and for the dynamic information structure if
    $$ J^\mathsf{D}_\mathsf{Q}(\bie^*,\bif^*;p_s,p_c) \leq \Jdqp  \quad \forall (\bie,\bif)\in \mathsf{Q}. $$
    If for some $p_1,p_2$ (resp., $p_s,p_c$) there exists a team optimal solution $ (\bie^*, \bif^*) \in \mathsf{A} $, then we say the strategy class $\mathsf{A}$ is team optimal for the static (dynamic) information structure for this $p_1,p_2$ (resp., $p_s,p_c$).\\
    If a strategy class $\mathsf{A}$ is team optimal for all $p_1, p_2$  (resp., $p_s,p_c$), then we say that the strategy class $\mathsf{A}$ is team optimal for the static (dynamic) information structure.
\end{definition}

We explore the team optimal strategies and costs associated with the cost function $\Be$ in the static and dynamic information structure. First we look at the team optimal when players are restricted to play deterministic classical strategies and then proceed to find the team optimal when players are allowed to play quantum strategies. Finally, we compare both the team optimal to find out the existence and extent of quantum advantage.

\subsection{Classical Strategies}
The following is our result for classical strategies and the two information structures.
\begin{theorem} Optimal expected cost over all deterministic classical strategies for static information structure
	\begin{equation}
		\Jscop= \min \{ p_1p_2,\pat p_2,p_1\pbt,\pat\pbt\}.
		\label{eqn:class_inde}
	\end{equation}
\label{thm:jsco}
\end{theorem}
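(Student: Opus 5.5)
The plan is a direct enumeration of the finite set of deterministic strategies. The reduction to deterministic strategies was already justified in Section~\ref{sec:class}: the expected cost is linear in each $\Pbb(u_i\mid y_i)$. So the minimum in \eqref{jico} is a minimum over the $16$ pairs $(\gamma_1,\gamma_2)$, with $\gamma_i:\{0,1\}\to\{0,1\}$.

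\textbf{Step 1: rewrite the cost.} Write $a_j:=\gamma_1(j)$ and $b_j:=\gamma_2(j)$. Because $y_1\perp\!\!\perp y_2$,
\begin{equation*}
\Jscp=\sum_{i,j\in\{0,1\}}\Pbb(y_1=i)\,\Pbb(y_2=j)\,\big(a_i\oplus b_j\oplus ij\big).
\end{equation*}
The four weights are $p_1p_2$, $p_1\pbt$, $\pat p_2$ and $\pat\pbt$. The cost is therefore the total weight of the input pairs $(i,j)$ on which the team loses.

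\textbf{Step 2: lower bound.} The key observation is the parity identity
\begin{equation*}
\bigoplus_{i,j}\big(a_i\oplus b_j\oplus ij\big)=1.
\end{equation*}
This holds because each $a_i$ and each $b_j$ appears exactly twice, while $\bigoplus_{i,j} ij=1$. Consequently every deterministic strategy loses on an odd number of input pairs, hence on at least one pair. The cost is then at least the smallest of the four weights, i.e. at least $\min\{p_1p_2,\pat p_2,p_1\pbt,\pat\pbt\}$.

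\textbf{Step 3: achievability.} For each pair $(i^*,j^*)$ I will exhibit a strategy that loses only there. Start from a base strategy that wins on three pairs, for example $a\equiv 0,\ b\equiv 0$, which loses only on $(1,1)$. To move the losing pair to $(i^*,j^*)$, replace $a_i$ by $a_i\oplus \mathbbm{1}[i\neq i^*]\,(\cdot)$ and $b_j$ similarly; concretely, the strategy is defined by requiring $a_i\oplus b_j=ij$ on the three pairs other than $(i^*,j^*)$. These are three equations in the four unknowns $a_0,a_1,b_0,b_1$ that form a tree in the bipartite graph on $\{a_0,a_1\}\cup\{b_0,b_1\}$, so they are always solvable. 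Choosing $(i^*,j^*)$ as the pair of minimum weight attains the bound, which proves \eqref{eqn:class_inde}.

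I do not expect a real obstacle here. The only point needing care is the parity argument in Step 2, together with the solvability check in Step 3; both are short. As a fallback, one can simply tabulate all $16$ strategies.
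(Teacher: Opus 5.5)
Your proof is correct and follows essentially the same route as the paper: write $\Jsc$ as a weighted sum of the four loss indicators, show they cannot all vanish (your parity identity is a tidier form of the paper's contradiction $a=b=c=d$), and exhibit a strategy that loses only on the minimum-weight input pair. Your tree-solvability argument covers achievability for all four pairs, whereas the paper only exhibits $(a,b,c,d)=(1,0,0,1)$ and calls the rest easy; the half-finished sentence about modifying $a_i$ that precedes ``concretely'' adds nothing and can be deleted.
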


\begin{theorem}
Optimal expected cost over all deterministic classical strategies for dynamic information structure
\begin{equation}
	\Jdcop= \min\{ p_sp_c,\pst p_c, p_s\pct,\pst\pct\}.
	\label{eqn:class_channel}
\end{equation}
\label{thm:jdco}
\end{theorem}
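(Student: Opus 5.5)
The plan is to exploit the fact that there are only four deterministic maps $\gamma_1:\{0,1\}\to\{0,1\}$, and to write each as $\gamma_1(y_1)=a\oplus b\,y_1$ with $a,b\in\{0,1\}$. Substituting into the channel gives $y_2=y_1\oplus a\oplus b\,y_1\oplus v$. Throughout, $y_1\sim\Ber(1-p_s)$ and $v\sim\Ber(1-p_c)$ are independent. For each of the four choices of $\gamma_1$ we then minimise over $\gamma_2$ pointwise in $y_2$; this is valid because, once $\gamma_1$ is fixed, the cost in \eqref{jic} is a sum over values of $y_2$ of terms that each involve only $\gamma_2(y_2)$.

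\emph{Case $b=1$} ($u_1=y_1\oplus a$). Here $y_2=a\oplus v$, which is independent of $y_1$. Since $a$ is a known constant, the problem is a static team with inputs $y_1$ and $y_2$, whose laws are $\Ber(1-p_s)$ and a possibly flipped $\Ber(1-p_c)$. Let $w:=a\oplus u_2$, so $u_1\oplus u_2=y_1\oplus w$. Then $\Be=y_1\oplus w\oplus y_1y_2$, and this equals $0$ iff $w=y_1\oplus y_1y_2=y_1\overline{y_2}$.
\begin{itemize}
\item For $y_2=1$, the target is $w=0$ regardless of $y_1$, so the cost is $0$.
\item For $y_2=0$, the target is $w=y_1$, which is unknown to \B{}. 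The best choice pays $\min(p_s,\pst)\cdot\Pbb(y_2=0)$.
\end{itemize}
With $\Pbb(y_2=0)\in\{p_c,\pct\}$ depending on $a$, optimising over $a$ gives $\min(p_s,\pst)\cdot\min(p_c,\pct)$. This equals $\min\{p_sp_c,\pst p_c,p_s\pct,\pst\pct\}$.

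\emph{Case $b=0$} ($u_1\equiv a$). Now $y_2=y_1\oplus a\oplus v$. I would tabulate the four pairs $(y_1,v)$, each with its probability and the resulting $y_2$ and target value $u_2=a\oplus y_1y_2$. For $a=0$, the value $y_2=0$ has a consistent target of $0$. The value $y_2=1$ mixes the events $(y_1,v)=(0,1)$ with target $0$ and $(1,0)$ with target $1$, giving cost $\min(p_s\pct,\pst p_c)$. The computation for $a=1$ is analogous and gives $\min(p_sp_c,\pst\pct)$. Both quantities are at least the minimum of the four products. Therefore the constant strategies never improve on the bound from the case $b=1$.

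Combining the two cases gives \eqref{eqn:class_channel}. Attainment comes from the case $b=1$ with the appropriate choices of $a$ and $\gamma_2$. The only real care needed is the bookkeeping in the constant case, namely correctly identifying which $(y_1,v)$ pairs collide on the same $y_2$. I do not expect a conceptual obstacle. The key observation is that choosing $u_1=y_1\oplus a$ makes $y_2$ independent of $y_1$, which turns the dynamic problem into a static one.
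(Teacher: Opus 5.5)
Your proof is correct. It organizes the finite enumeration the opposite way from the paper.

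The paper fixes \B{}'s map, i.e.\ one of the four pairs $(\gamma_2(0),\gamma_2(1))$. It then writes the cost as an affine expression in \A{}'s values $(\gamma_1(0),\gamma_1(1))$ and reads off one of the four products as the minimum in each case. You instead fix \A{}'s map and let \B{} respond pointwise in $y_2$. This is legitimate, as you note, because the law of $(y_1,y_2)$ depends only on $\gamma_1$.

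What your ordering buys is a structural reading of the result. Choosing $u_1=y_1\oplus a$ makes the transmitted signal $x_1=a$ constant, so $y_2=a\oplus v$ is independent of $y_1$. The cost \B{} faces is then that of a static problem with independent inputs of biases $p_s$ and $p_c$ (or $\pct$). This gives a reason for the correspondence $p_1\leftrightarrow p_s$, $p_2\leftrightarrow p_c$ between Theorems~\ref{thm:jsco} and~\ref{thm:jdco}, which the paper only observes after the fact. The paper's ordering is more mechanical, but each of its cases reduces to a one-line minimization.

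One small refinement to your constant case: it does not merely fail to improve on the $b=1$ bound. The smaller of your two values, $\min(p_s\pct,\pst p_c)$ and $\min(p_sp_c,\pst\pct)$, already equals $\min\{p_sp_c,\pst p_c,p_s\pct,\pst\pct\}$. So constant strategies for \A{} are also team optimal. This matches the paper's enumeration, where for each of \B{}'s four maps a constant $\gamma_1$ attains the minimum. Your argument is complete without this remark; it only means attainment holds in both of your cases.
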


Proofs are included in Appendix \ref{sec:proofs}.

It is worth noting that the classical team optimal costs in both information structures are identical if one makes the substitutions $p_1 \leftrightarrow p_s, $ and $ p_2 \leftrightarrow p_c$. This fascinating finding will be important as we compare costs now under other strategies.
\subsection{Quantum Strategies}
\subsubsection{Expected costs}

We first compute the expected costs that players obtain under quantum strategies defined in Section \ref{sec:genq}.
Before expressing our expected costs, we will first define some quantities which help us express these costs in a compact way. Since upon observing $y_1$, \A{}  measures using the $ \{  E_{y_1}, I -  E_{y_1} \} $ basis, the $\pm 1$ valued observable  \A{}  measuring is $A_{y_1} = E_{y_1} - ( I - E_{y_1}) = 2E_{y_1} - I$.
Similarly using $y_2$, \B{} measures $B_{y_2} = 2 F_{y_2} - I$.

Let $$\langle A_{y_1} B_{y_2} \rangle :=\bra{\Psi} A_{y_1} \otimes B_{y_2} \ket{\Psi}.$$ It follows that
\begin{equation}
\langle A_iB_j \rangle = 1 + 4\Tr[ \{E_i \otimes F_j \} \rho] - \Tr E_i - \Tr F_j. \label{aibj}
\end{equation}
\begin{lemma}
The expected cost under a quantum strategy $(\bie,\bif)$ for
\begin{enumerate}
	\item static information structure is
	\begin{equation}
		\Jsqp = \half - \half (p_1 p_2 \cdot \langle A_0 B_0 \rangle + p_1 \pbt \cdot \langle A_0 B_1 \rangle + \pat p_2 \cdot \langle A_1 B_0 \rangle - \pat\pbt \cdot \langle A_1 B_1 \rangle).
		\label{Jsq}
	\end{equation}
	\item dynamic information structure is
	\begin{align}
		\Jdqp = &- \sum_{i,j \in \{0,1\} } (-1)^{i \cdot j} \bkP(y_1 = i) \Tr (\{ E_i \otimes F_j \} \rho) + \half p_s \Tr E_0 + \pst \pct \notag \\ &+ \half ((p_s \pct + \pst p_c) \Tr F_0 + (p_s p_c -\pst \pct ) \Tr F_1).
		\label{jdq}
	\end{align}
\end{enumerate}


\label{lemma}
\end{lemma}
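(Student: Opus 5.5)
The proof is a direct computation from Lemma~\ref{lemma:joint}, followed by the trace identities \eqref{eq:trace_e}, \eqref{eq:trace_ef} and \eqref{aibj}. In both cases I would write $\E[\Be]$ as the probability of the losing event $\{u_1\oplus u_2\neq y_1y_2\}$. I would then sum the joint law over all tuples $(y_1,y_2,u_1,u_2)$ in that event, expressing each POVM element through $E^0_{i}=E_i$, $E^1_i=I-E_i$ (and likewise for $F$).

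For the static structure I would condition on $(y_1,y_2)=(i,j)$ and use \eqref{jointS}. The key observation is that
\[
\Pbb(u_1\oplus u_2=0\mid i,j)-\Pbb(u_1\oplus u_2=1\mid i,j)=\Tr\big((E^0_i-E^1_i)\otimes(F^0_j-F^1_j)\rho\big)=\langle A_iB_j\rangle .
\]
The two probabilities also sum to $1$. Hence the conditional probability of losing is $\half-\half(-1)^{ij}\langle A_iB_j\rangle$. Averaging over the independent inputs with weights $p_1p_2$, $p_1\pbt$, $\pat p_2$, $\pat\pbt$ gives \eqref{Jsq} at once; the minus sign on the $(1,1)$ term comes from $y_1y_2=1$.

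For the dynamic structure I would use \eqref{jointD} instead, where $\Pbb(y_2\mid u_1,y_1)$ equals $p_c$ if $y_2=y_1\oplus u_1$ and $\pct$ otherwise. The bookkeeping goes as follows.
\begin{itemize}
\item Fix $y_1=i$ and $y_2=j$. The losing constraint then determines $u_2=u_1\oplus ij\oplus 1$, leaving two terms $u_1\in\{0,1\}$.
\item For each such term I would expand $\Tr(E^{u_1}_i\otimes F^{u_2}_j\rho)$ using \eqref{eq:trace_e} and the fact that the reduced states are $\half I$. For instance, $\Tr((I-E_i)\otimes F_j\rho)=\half\Tr F_j-\Tr(E_i\otimes F_j\rho)$, and $\Tr((I-E_i)\otimes(I-F_j)\rho)=1-\half\Tr E_i-\half\Tr F_j+\Tr(E_i\otimes F_j\rho)$.
\item I would then multiply by the channel weight ($p_c$ or $\pct$, depending on whether $j=i\oplus u_1$) and by $\Pbb(y_1=i)$, and collect coefficients of $\Tr(E_i\otimes F_j\rho)$, $\Tr E_i$, $\Tr F_j$ and the constant.
\end{itemize}

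I expect the correlation terms to collapse cleanly. For fixed $(i,j)$ the two values of $u_1$ carry complementary channel weights $p_c$ and $\pct$, and $\Tr(E_i\otimes F_j\rho)$ enters both with the same sign. Their weights therefore add to $1$, leaving the coefficient $-(-1)^{ij}\Pbb(y_1=i)$ as in \eqref{jdq}.

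The main obstacle is the remaining marginal and constant terms. I need to verify that the $\Tr E_1$ contributions cancel, that $\Tr E_0$ survives only with coefficient $\half p_s$, and that the $\Tr F_0$ and $\Tr F_1$ coefficients combine into $\half(p_s\pct+\pst p_c)$ and $\half(p_sp_c-\pst\pct)$, with constant $\pst\pct$. I would handle this with a small table of the eight $(i,j,u_1)$ cases, simplifying with $p_s+\pst=1$ and $p_c+\pct=1$. As a check, I would confirm the final expression against the classical embedding \eqref{eq:povmclass} and the formula of Theorem~\ref{thm:jdco} for a deterministic strategy.
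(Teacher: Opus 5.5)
Your proposal is correct and takes essentially the same approach as the paper: a direct computation from the joint laws of Lemma~\ref{lemma:joint}, expanding $E^1_i=I-E_i$ and $F^1_j=I-F_j$ and using that the reduced states are $\tfrac12 I$. The paper instead works with the signed quantity $(-1)^{\mathcal{B}_1}=1-2\mathcal{B}_1$ summed over all $(u_1,u_2)$ and then uses $\mathbb{E}[\mathcal{B}_1]=\tfrac12\bigl(1-\mathbb{E}[(-1)^{\mathcal{B}_1}]\bigr)$, while you sum the losing-event probabilities directly; the coefficients you anticipate for the dynamic case (the $\Tr E_1$ terms cancel, $\tfrac12 p_s$ on $\Tr E_0$, the stated $\Tr F_0$ and $\Tr F_1$ coefficients, constant $\pst\pct$) all check out in the eight-case table.
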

The proof is in Appendix \ref{sec:proofs}.
As can be seen, there is no apparent equivalence between the two expressions for the two information structures, unlike in the expressions for the optimal cost under classical strategies.

\begin{corollary} For projective strategies, the expected cost for static and dynamic information structures is given as
    \begin{align*}
        \Jspp &= 0.5 - 0.5(p_1 p_2 \cdot \langle A_0 B_0 \rangle + p_1 \pbt \cdot \langle A_0 B_1 \rangle + \pat p_2 \cdot \langle A_1 B_0 \rangle - \pat\pbt \cdot \langle A_1 B_1 \rangle)\\
        \Jdpp &= 0.5 - 0.25(p_s(\langle A_0 B_0 \rangle + \langle A_0 B_1 \rangle) +\pst (\langle A_1 B_0 \rangle - \langle A_1 B_1 \rangle)),
    \end{align*}
    where $\langle A_i B_j \rangle=\cos{2\theta_{1i}}\cos{2\theta_{2j}}+\sin{2\theta_{1i}}\sin{2\theta_{2j}}\cos{(\phi_{1i}+\phi_{2j})}$ for $i,j \in \{0,1\}$.
Moreover,
$$        \Jdp (\boldsymbol{\theta}, \boldsymbol{\phi}, p_s, p_c) = \Jsp (\boldsymbol{\theta}, \boldsymbol{\phi}, p_1 = p_s, p_2 = 0.5)  \quad \forall \boldsymbol{\theta}, \boldsymbol{\phi}.$$
\label{cor:JspJdp}
\end{corollary}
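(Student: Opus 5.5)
The plan is to specialise Lemma~\ref{lemma} to projective strategies. Two facts about a projective strategy do all the work. First, every $E_i$ and $F_j$ is a rank-one projector, so $\Tr E_i = \Tr F_j = 1$. Second, each observable $A_i = 2E_i - I$ and $B_j = 2F_j - I$ is a unit Bloch-vector observable $\mathbf{n}\cdot\boldsymbol{\sigma}$.

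For the static formula, \eqref{Jsq} holds for every quantum strategy, so it holds verbatim for $\mathsf{\perp}$. What remains is the closed form of $\langle A_iB_j\rangle$. I would write $P(\theta,\phi) = \tfrac12(I + \mathbf{n}\cdot\boldsymbol\sigma)$ with
\[
\mathbf{n} = (\sin2\theta\cos\phi,\ \sin2\theta\sin\phi,\ \cos2\theta),
\]
which gives $A_i = \mathbf{n}_{1i}\cdot\boldsymbol\sigma$ and $B_j = \mathbf{n}_{2j}\cdot\boldsymbol\sigma$. For $\ket{\Psi} = (\ket{00}+\ket{11})/\sqrt2$ the correlations are $\langle\sigma_x\otimes\sigma_x\rangle = \langle\sigma_z\otimes\sigma_z\rangle = 1$ and $\langle\sigma_y\otimes\sigma_y\rangle = -1$, with all cross terms zero. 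Hence
\[
\langle A_iB_j\rangle = n^x n'^x - n^y n'^y + n^z n'^z ,
\]
which collapses to $\cos2\theta_{1i}\cos2\theta_{2j} + \sin2\theta_{1i}\sin2\theta_{2j}\cos(\phi_{1i}+\phi_{2j})$ via $\cos a\cos b - \sin a\sin b = \cos(a+b)$. The main care point is this $\sigma_y$ sign. The paper's stated $\delta_{\alpha\beta}$ rule would give $\cos(\phi_{1i}-\phi_{2j})$ instead, so I would verify the $yy$ correlation directly on $\ket{\Psi}$.

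For the dynamic formula, I substitute into \eqref{jdq}. By \eqref{aibj}, with unit traces, $\langle A_iB_j\rangle = 4\Tr(E_i\otimes F_j\rho) - 1$, so $\Tr(E_i\otimes F_j\rho) = \tfrac14(1+\langle A_iB_j\rangle)$. The trace-product sum in \eqref{jdq} then becomes
\[
-\tfrac14\big[p_s(2+\langle A_0B_0\rangle+\langle A_0B_1\rangle) + \pst(\langle A_1B_0\rangle - \langle A_1B_1\rangle)\big].
\]
The remaining terms, after setting all traces to $1$, are
\[
\tfrac12 p_s + \pst\pct + \tfrac12\big(p_s\pct + \pst p_c + p_sp_c - \pst\pct\big).
\]
Collecting these with the $-\tfrac12 p_s$ from the bracket and using $p_c + \pct = 1$, the constants should sum to exactly $\tfrac12$. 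The dependence on $p_c$ thus drops out entirely, leaving the stated expression for $\Jdpp$.

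Finally, for the identity I set $p_1 = p_s$ and $p_2 = \pbt = \tfrac12$ in the static formula. Every weight then picks up a factor $\tfrac12$, and $\tfrac12\cdot\tfrac12 = \tfrac14$ reproduces the dynamic expression term by term, for all $\boldsymbol\theta,\boldsymbol\phi$.
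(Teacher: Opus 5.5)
Your proposal is correct and follows the paper's route. You specialise Lemma~\ref{lemma} using $\Tr E_i=\Tr F_j=1$ and $\Tr(E_i\otimes F_j\rho)=\tfrac14(1+\langle A_iB_j\rangle)$, note that the constants in the dynamic expression collapse to $\tfrac12$ so that $p_c$ drops out, and then set $p_1=p_s$, $p_2=\tfrac12$.

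Your explicit check that $\langle\sigma_y\otimes\sigma_y\rangle=-1$ on $\ket{\Psi}$ is what actually produces $\cos(\phi_{1i}+\phi_{2j})$. The paper's proof instead cites $\Tr(E_i\otimes F_j\rho)=e^0_if^0_j+\mathbf{e}_i\cdot\mathbf{f}_j$ with a plain Euclidean inner product, which taken literally would give $\cos(\phi_{1i}-\phi_{2j})$, so your derivation of the stated formula is the more careful one.
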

\begin{proof}
For projective measurements, using \eqref{eq:trace_e} and \eqref{eq:trace_ef}, we get $\Tr E_i = \Tr F_j =1$ and
\begin{align*}
\Tr[\{E_i \otimes F_j \}\rho]=\frac14(1+\cos{2\theta_{1i}}\cos{2\theta_{2j}} +\sin{2\theta_{1i}}\sin{2\theta_{2j}}\cos{(\phi_{1i}+\phi_{2j})})
\end{align*}
Using \eqref{aibj}, we get $\langle A_i B_j \rangle=\cos{2\theta_{1i}}\cos{2\theta_{2j}}+\sin{2\theta_{1i}}\sin{2\theta_{2j}}\cos{(\phi_{1i}+\phi_{2j})}.$ The above expressions are then obtained by direct substitution.
\end{proof}

The second result in Corollary \ref{cor:JspJdp} shows that for a dynamic information structure, under projective strategies the expected cost becomes independent of $p_c$. The third result says that for any projective strategy, the cost in the dynamic information structure is equivalent to that in a static information structure when $p_1=p_s$ and $p_2=0.5$. This because under a projective measurement, in the dynamic information structure, the output of the channel $y_2$ has a distribution $\Ber(\half)$, independently of the input $x_1$.

\subsubsection{Team Optimal Cost and Quantum Advantage}
We now ascertain the team optimal cost under quantum strategies.
We do a detailed step-by-step analytical optimization of the above expressions  over the POVMs $\bie,\bif$ and obtain closed form expressions for the optimal cost for static and dynamic information structures.
\begin{theorem} For a static information structure, the optimal expected cost under quantum strategies is given by
    \begin{equation}\Jsqop=
    \begin{cases}
    \half-\frac{1}{\sqrt{2}}\sqrt{\{p_1^2 + \pat^2\} \{p_2^2 + \pbt^2\}},& \text{if } \left|(2p_1-1)(2p_2-1)\right|
    \le
    4p_1\pat\,p_2\pbt,\\
\Jscop              & \text{otherwise.}
\end{cases}
    \end{equation}
    Projective strategies are team optimal for static information structure.\\
    Classical strategies are team optimal for $p_1, p_2$ satisfying $\left|(2p_1-1)(2p_2-1)\right|
    \geq
    4p_1\pat\,p_2\pbt$.
    \label{thm:optstatic}
\end{theorem}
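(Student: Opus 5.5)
The plan is to reduce the problem to a finite-dimensional bilinear optimisation using the Pauli parametrisation of Section~\ref{sec:paramquant}, and then to exploit convexity. By Lemma~\ref{lemma}, minimising $\Jsqp$ is the same as maximising
\[
S := p_1p_2\langle A_0B_0\rangle + p_1\pbt\langle A_0B_1\rangle + \pat p_2\langle A_1B_0\rangle - \pat\pbt\langle A_1B_1\rangle .
\]
Write $E_i = \ei I + \evi\cdot\boldsymbol{\sigma}$ and $F_j = \fj I + \fvj\cdot\boldsymbol{\sigma}$. Combining \eqref{aibj}, \eqref{eq:trace_e} and \eqref{eq:trace_ef} gives
\[
\langle A_iB_j\rangle = a_i b_j + \mathbf{x}_i\cdot\mathbf{y}_j ,
\]
where $a_i = 2\ei - 1$, $\mathbf{x}_i = 2\evi$, and $b_j,\mathbf{y}_j$ are defined in the same way from $\fj,\fvj$. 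The positivity constraint becomes $|a_i| + |\mathbf{x}_i| \le 1$, and likewise for Bob.

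\textbf{Step 1 (extreme points).} For fixed Bob variables, $S$ is linear in each pair $(a_i,\mathbf{x}_i)$ separately. Each pair ranges over the convex body $\{|a|+|\mathbf{x}|\le 1\}$, whose extreme points are $(\pm1,\mathbf{0})$ and $(0,\hat{\mathbf{x}})$ with $|\hat{\mathbf{x}}|=1$. The same holds for Bob. Hence the maximum of $S$ is attained with every observable either deterministic (a classical $\pm1$) or projective with trace $1$. The rank-one projectors of the $\perp$ class are exactly the case $a=0$, $|\mathbf{x}|=1$.

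\textbf{Step 2 (all projective).} If all four observables are projective, then
\[
S = \mathbf{x}_0\cdot(p_1p_2\mathbf{y}_0 + p_1\pbt\mathbf{y}_1) + \mathbf{x}_1\cdot(\pat p_2\mathbf{y}_0 - \pat\pbt\mathbf{y}_1).
\]
Optimising over the unit vectors $\mathbf{x}_i$, and writing $c=\mathbf{y}_0\cdot\mathbf{y}_1$, this becomes
\[
S(c) = p_1\sqrt{p_2^2+\pbt^2+2p_2\pbt c} + \pat\sqrt{p_2^2+\pbt^2-2p_2\pbt c}, \qquad c\in[-1,1].
\]
By Cauchy--Schwarz, $S(c)\le \sqrt{2(p_1^2+\pat^2)(p_2^2+\pbt^2)}$. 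Equality holds exactly when the two radicands are proportional to $p_1^2$ and $\pat^2$. Solving for that $c$ and imposing $|c|\le 1$ should reproduce, after algebra, the condition $|(2p_1-1)(2p_2-1)|\le 4p_1\pat p_2\pbt$. This yields the first branch $\half - \tfrac{1}{\sqrt2}\sqrt{\cdots}$.

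When the condition fails, $S(c)$ is monotone on $[-1,1]$ and is maximised at $c=\pm1$. For example, $S(1) = p_1 + \pat|p_2-\pbt|$, and $1-2\Jscop$ from Theorem~\ref{thm:jsco} equals the maximum of $S(1)$, $S(-1)$ and the analogous values with the roles of the two players exchanged. So in that regime projective strategies with aligned or anti-aligned Bloch vectors attain exactly $\Jscop$.

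\textbf{Step 3 (mixed extreme points; the main obstacle).} I must show that no mixture of deterministic and projective observables beats the maximum of the two values above. The key observation is that a deterministic observable paired with a projective one gives $\langle A_iB_j\rangle = 0$. Deterministic–deterministic pairs contribute classical products, and projective–projective pairs contribute $\mathbf{x}_i\cdot\mathbf{y}_j$. I would enumerate the cases up to the symmetry $i\leftrightarrow$ roles; there are few of them, for instance one of Alice's observables deterministic and the rest projective. In each case I would bound $S$ by a sum of at most two or three of the weights, each term being at most $1$ in absolute value. I would then check that this sum is dominated by either $S(\pm1)$ or the classical value, using $p+\tilde p=1$.

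Finally I would combine the pieces. The optimum is the larger of the classical value and the interior projective value, and by Step 2 both are attained within $\perp$, which gives team optimality of projective strategies. Classical optimality holds exactly in the complementary parameter region. At the boundary of the condition the two formulas coincide, which also justifies the non-strict inequality appearing in both clauses of the statement.
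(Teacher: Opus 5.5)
Your proposal is correct and follows essentially the paper's route. You reduce to extreme points, solve the all-projective case as a one-parameter problem in $\mathbf{y}_0\cdot\mathbf{y}_1$ (Cauchy--Schwarz plus concavity instead of the paper's critical-point computation for $t^*$), and bound every mixed extreme configuration by the classical value, where your observation that deterministic--projective correlators vanish is a tidier version of the paper's Subcases 2a--2d. When you write up Step 3, note that in every mixed configuration at most two of the four correlators can be nonzero, never three. That is exactly what makes your bound close: the complementary pair of weights sums to at least $2J^{\mathsf{S}*}_{\mathsf{C}}$, whereas a three-weight bound would fail, e.g.\ at $p_1=p_2=\tfrac12$.
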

\begin{proof}
    Let
\[
\bie=(E_0,E_1), \qquad \bif=(F_0,F_1),
\]
where, for each setting, the binary POVM is $\{E_i,I-E_i\}$ for \A{} and
$\{F_j,I-F_j\}$ for \B{}. Recall from Lemma~\ref{lemma} that
\begin{equation}
    \Jsq(\bie,\bif;p_1,p_2)
    = \frac12-\frac12\Big(
    p_1p_2 \langle A_0B_0\rangle
    + p_1\pbt \langle A_0B_1\rangle
    + \pat p_2 \langle A_1B_0\rangle
    - \pat\pbt \langle A_1B_1\rangle
    \Big),
    \label{eq:Jsq-start}
\end{equation}
where
\[
A_i=2E_i-I, \qquad B_j=2F_j-I.
\]
Since $\rho$ is the Bell state, its reduced density matrices are $\frac12I$.
Hence, by \eqref{aibj},
\[
\langle A_iB_j\rangle
=1+4\Tr[(E_i\otimes F_j)\rho]-\Tr E_i-\Tr F_j,
\]
whereby the RHS of \eqref{eq:Jsq-start} is linear in $\bie,\bif$.

Now let
$
\mathcal E:=\{M\in \mathbb H_2: M \succeq 0, I-M \succeq 0\}.
$ This is a compact convex set. We claim that the
extreme points of $\mathcal E$ are precisely
$
0, I \text{and the rank-one projections.}$
Indeed, suppose $M\in\mathcal E$ has spectral decomposition
\[
M=\lambda_1 P+\lambda_2(I-P), \qquad 0\le \lambda_1,\lambda_2\le 1,
\]
where $P = \ketbra{\psi}{\psi}$ is a rank-one orthogonal projection on $\ket{\psi}$.
Then $M$ is extreme only when $\lambda_1,\lambda_2\in\{0,1\}$; otherwise one
can perturb the eigenvalues slightly in opposite directions and write $M$ as a
nontrivial convex combination of two distinct elements from $\mathcal{E}$. Thus the only extreme
points are $0$, $I$, and rank-one projections. Recall that the POVM ${0,I}$ corresponds to classical strategies.

Since the RHS of \eqref{eq:Jsq-start} is affine in $\bie,\bif$ the minimum of
$\Jsq(\bie,\bif;p_1,p_2)$ over $\mathcal{E}$ is attained where each of
$E_0,\ E_1,\ F_0,\ F_1$
is an extreme point of $\mathcal{E}$. Therefore an optimal static quantum strategy may be
chosen so that each player plays either projective or classical strategies.

    Using \eqref{eq:trace_e} and \eqref{eq:trace_ef} in \eqref{aibj}, we get
    \[
    \langle A_iB_j \rangle
    =1+4\Tr[(E_i\otimes F_j)\rho]-\Tr E_i-\Tr F_j
    =1+4(\ei\fj+\evi\cdot\fvj)-2\ei-2\fj.
    \]
    Substituting this into \eqref{eq:Jsq-start} and collecting terms gives
    \begin{align}
        \Jsqp
        &= \pat\pbt
        -2p_1 \bigl(p_2\fva+\pbt\fvb\bigr)\cdot \eva
        -2\pat \bigl(p_2\fva-\pbt\fvb\bigr)\cdot \evb 
        +p_1\ea+\pat(p_2-\pbt)\eb \notag\\
        &\quad
        +p_2\fa+\pbt(p_1-\pat)\fb-2\Big((p_2\fa+\pbt\fb)p_1\ea+(p_2\fa-\pbt\fb)\pat\eb\Big).
        \label{eq:Jsq-explicit2}
    \end{align}
    To analyse this expression we now consider two cases.

\medskip
\noindent
{\bf Case 1: all four are rank-one projections.}
Suppose $E_0,E_1,F_0,F_1$ are all rank-one projections. Using the
parametrization from Section~\ref{sec:paramquant}, write
\[
E_i=\ei I+\evi\cdot \boldsymbol{\sigma},\qquad
F_j=\fj I+\fvj\cdot \boldsymbol{\sigma},\qquad i,j\in\{0,1\}.
\]
Since each of $E_i$ and $F_j$ is a rank-one projection, its
eigenvalues are $1$ and $0$. Hence
$\Tr E_i=\Tr F_j=1,$ $i,j\in\{0,1\}$.
Using \eqref{eq:trace_e}, we obtain
$\ei=\fj=\frac12, $ $ i,j\in\{0,1\}.$

Now the eigenvalues of a Hermitian matrix of the form
$aI+\mathbf{v}\cdot\boldsymbol{\sigma}$ are $a\pm |\mathbf{v}|$. Therefore the
eigenvalues of $E_i=\ei I+\evi\cdot\boldsymbol{\sigma}$ are
$\ei\pm |\evi|=\frac12\pm |\evi|$.
Hence
$
|\evi|=\frac12, i \in \{0,1\}.
$
Similarly,
$
|\fvj|=\frac12, j\in\{0,1\}.
$
Thus
%
\[
A_i=2E_i-I=2\,\evi\cdot\boldsymbol{\sigma},\qquad
B_j=2F_j-I=2\,\fvj\cdot\boldsymbol{\sigma}.
\]
Also, by \eqref{eq:trace_ef},
$
\Tr[(E_i\otimes F_j)\rho]
=\ei\fj+\evi\cdot\fvj
=\frac14+\evi\cdot\fvj.
$
Substituting this into \eqref{aibj}, we get
\[
\langle A_iB_j\rangle
=1+4\Tr[(E_i\otimes F_j)\rho]-\Tr E_i-\Tr F_j
=1+4\left(\frac14+\evi\cdot\fvj\right)-1-1
=4\,\evi\cdot\fvj.
\]

Hence \eqref{eq:Jsq-start} becomes
\begin{align*}
    \Jsqp
    &=
    \frac12-\frac12\Big(
    4p_1p_2\,\eva\cdot \fva
    +4p_1\pbt\,\eva\cdot \fvb
    +4\pat p_2\,\evb\cdot \fva
    -4\pat\pbt\,\evb\cdot \fvb
    \Big)\\
    &=
    \frac12-2\Big(
    p_1\,\eva\cdot (p_2\fva+\pbt\fvb)
    +\pat\,\evb\cdot (p_2\fva-\pbt\fvb)
    \Big).
\end{align*}
We now minimize the above under the constraints
\[
\ea=\eb=\fa=\fb=\frac12,\qquad |\eva|=|\evb|=|\fva|=|\fvb|=\frac12.
\]

For fixed $\fva,\fvb$, since $|\eva|=|\evb|=\frac12$, the minimum is attained when
$\eva$ is aligned with $p_2\fva+\pbt\fvb$ and $\evb$ is aligned with
$p_2\fva-\pbt\fvb$. Hence
\[
\Jsqp
=
\frac12-p_1|p_2\fva+\pbt\fvb|-\pat|p_2\fva-\pbt\fvb|.
\]

Now let
$
t:=4\,\fva\cdot \fvb \in [-1,1].
$
Since $|\fva|=|\fvb|=\frac12$, we have
\[
\Jsqp
=
\frac12-\frac12\Big(
p_1\sqrt{p_2^2+\pbt^2+2p_2\pbt\,t}
+\pat\sqrt{p_2^2+\pbt^2-2p_2\pbt\,t}
\Big).
\]
Thus it remains to minimize the above
over $t\in[-1,1]$.

Therefore the interior optimum is attained precisely under the above symmetric
condition.

We see that critical point $t=t^*$ satisfies
\[
\frac{p_1}{\sqrt{p_2^2+\pbt^2+2p_2\pbt t^*}}
=
\frac{\pat}{\sqrt{p_2^2+\pbt^2-2p_2\pbt t^*}},
\]
which yields
\[
t^*
=
\left(\frac{p_2^2+\pbt^2}{2p_2\pbt}\right)
\left(\frac{p_1^2-\pat^2}{p_1^2+\pat^2}\right).
\]
$t^*$ is feasible if and only if
\[
\left|
\left(\frac{p_2^2+\pbt^2}{2p_2\pbt}\right)
\left(\frac{p_1^2-\pat^2}{p_1^2+\pat^2}\right)
\right|\le 1.
\]
Squaring both sides and rearranging, this is equivalent to
\[
\left|(2p_1-1)(2p_2-1)\right|
\le
4p_1\pat\,p_2\pbt.
\]
If this condition holds, then substituting for $t^*$ gives

\begin{align}
    \Jsqp
&=
\frac{1}{2}-\frac{1}{\sqrt2}\sqrt{(p_1^2+\pat^2)(p_2^2+\pbt^2)}. \label{eq:case11}
\end{align}

If $t^*\notin[-1,1]$, then the minimum is attained at one of the boundary
points $t=\pm1$. It is easy to see that for $t=1$
\[
\Jsqp =
\begin{cases}
    \pat\pbt, & p_2\ge \frac12,\\[1mm]
    \pat p_2, & p_2\le \frac12,
\end{cases}
\]
whereas for $t=-1$
\[
\Jsqp=
\begin{cases}
    p_1\pbt, & p_2\ge \frac12,\\[1mm]
    p_1p_2, & p_2\le \frac12.
\end{cases}
\]
Thus, in the boundary case,
\[
\Jsqp=\min\{p_1p_2,\pat p_2,p_1\pbt,\pat\pbt\} = \Jscop,
\]
from Theorem~\ref{thm:jsco}.
Therefore, in this case,

\begin{equation}
    \min_{\bie,\bif}\Jsqp=
\begin{cases}
    \dfrac{1}{2}-\dfrac{1}{\sqrt2}\sqrt{(p_1^2+\pat^2)(p_2^2+\pbt^2)},
    & \text{if}
\left|(2p_1-1)(2p_2-1)\right|
\le
4p_1\pat\,p_2\pbt,\\[2ex]
    \Jscop,
    & \text{otherwise,}
\end{cases} \label{eq:case1}
\end{equation}
where the minimum is over $\bie,\bif$ satisfying Case 1. Notice that the RHS of \eqref{eq:case11} no greater than $\Jscop$.

\medskip
\noindent
{\bf Case 2: at least one of the four is degenerate.}
WLOG assume $E_0$ is degenerate.
Since $E_0$ is extreme, either $E_0=0$ or $E_0=I$.
Assume
$E_0=I$; the other case can be handled similarly. Then
$\ea=1,\eva=0.$

After the extreme-point reduction, each of $E_1,F_0,F_1$ is either degenerate
or rank-one projective. We consider the possible subcases.

\smallskip
\noindent
{\it Subcase 2a: $E_1$ is also degenerate.}
Then \A{}'s outputs are deterministic for both inputs. Hence the optimization
over $F_0,F_1$ reduces to the classical deterministic problem, whose optimal
value is $\Jsco(p_1,p_2)$.

\smallskip
\noindent
{\it Subcase 2b: $E_1$ is projective and both $F_0,F_1$ are projective.}
Then, as in case 1,
$
\eb=\fa=\fb=\frac12 \text{and}|\evb|=|\fva|=|\fvb|=\frac12.
$
Substituting $\ea=1$, $\eva=0$ into \eqref{eq:Jsq-start}, we obtain
\[
\Jsqp
=
\frac12-2\pat\,\evb\cdot (p_2\fva-\pbt\fvb).
\]
For fixed $\fva,\fvb$, the minimum is attained when $\evb$ is parallel to
$p_2\fva-\pbt\fvb$, and therefore
\[
\Jsqp
=
\frac12-\pat\,|p_2\fva-\pbt\fvb|.
\]
Since $|\fva|=|\fvb|=\frac12$, by Cauchy-Schwartz inequality,
\[
|p_2\fva-\pbt\fvb|
\le p_2|\fva|+\pbt|\fvb|
=\frac12.
\]
Hence
\[
\Jsqp\ge \frac12-\frac{\pat}{2}=\frac{p_1}{2}.
\]
Now write
\[
\frac{p_1}{2}=\frac{p_1p_2+p_1\pbt}{2}\ge \Jsco(p_1,p_2),
\]
since $\Jsco=\min\{p_1p_2,\pat p_2,p_1\pbt,\pat\pbt\}$ is no larger than each
of $p_1p_2$ and $p_1\pbt$, and no larger than their average. Thus in this subcase,
\[
\Jsqp\ge \Jsco(p_1,p_2).
\]

\smallskip
\noindent
{\it Subcase 2c: $E_1$ is projective, exactly one of $F_0,F_1$ is degenerate and the other is projective.}
Let $F_0=sI$ with $s\in\{0,1\}$ and $F_1$ be projective; the other possibility follows similarly. Then
$
\fa=s, \fva=0$ and $\fb=\frac12 ,|\fvb|=\frac12.
$
Substituting into \eqref{eq:Jsq-start}, we get
\[
\Jsqp
=
\frac12+\frac{p_1p_2}{2}-p_1p_2 s+2\pat\pbt\,\evb\cdot\fvb.
\]
By Cauchy-Schwartz inequality, since  $|\evb|=|\fvb|=\frac12$,
$\evb\cdot\fvb\ge -\frac14.$
Therefore
\[
\Jsqp
\ge
\frac12+\frac{p_1p_2}{2}-p_1p_2 s-\frac{\pat\pbt}{2} \geq \frac{p_1\pbt+\pat p_2}{2}
\ge \Jsco(p_1,p_2).
\]

\smallskip
\noindent
{\it Subcase 2d: $E_1$ is projective and both $F_0,F_1$ are degenerate.} Then \B{} plays a classical strategy, whereby \A{}'s optimal choice is also classical.
Hence the optimal value in this subcase is at least $\Jsco(p_1,p_2)$.

We have therefore shown that every subcase with $E_0$ degenerate satisfies
\[
\Jsqp\ge \Jsco(p_1,p_2).
\]
 Hence the optimum in the
branch where $E_0$ (or $E_1$ or any $F_j$'s) is degenerate is exactly $\Jscop$.

Finally, $\Jsqop $ is given by the minimum of the RHS of \eqref{eq:case1} and $\Jscop$. But since the RHS of \eqref{eq:case1} is no more than $\Jscop$, we get the result claimed in the statement of the theorem. Note that the RHS of \eqref{eq:case1} is achieved by projective strategies (i.e. Case 1) whereby projective strategies are team optimal for the static information structure. Classical strategies are team optimal for $p_1,p_2$ satisfying $\Jsqop = \Jsco(p_1,p_2)$.
\end{proof}


We have plotted the fractional quantum advantage for static information structure $(\Jsco -\Jspo)/\Jsco$ in Fig. \ref{fig:inde}. Here the two axes represent the two probabilities of which the optimal costs are a function. It is a 3D colour plot so the quantity at hand is represented as shown in the bar. The region with purple colour represents no quantum advantage. We can also see that the highest advantage is at the center of the plot, which is to be expected as that represents the usual Bell's polynomial, corresponding to the Tsirelson's bound \cite{tsirelson1980genbell}. Also there is a symmetry in exchange of the coordinates and also around the point $(\half,\half)$. The first corresponds to relabeling of the agents, whereas the second corresponds to exchange of natural states from $0$ to $1$.
\begin{figure}[h]
    \centering
    \includegraphics[scale=0.8]{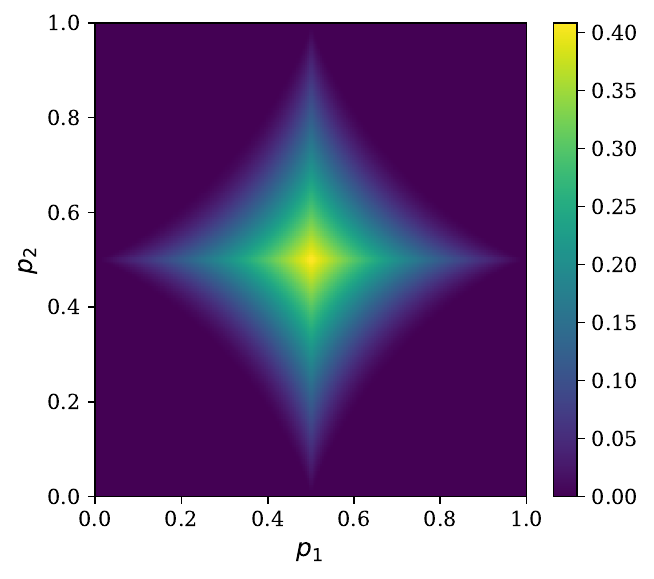}
    \caption{Fractional quantum advantage for static information structure:= $(\Jsco -\Jspo)/\Jsco$.}
    \label{fig:inde}
\end{figure}

 We now examine if projective strategies are team optimal for the dynamic information structure.

\begin{theorem} For a dynamic information structure, the optimal expected cost under quantum strategies is given by
	\begin{equation}
		\Jdqop = \min \left\{\half-\half \sqrt{p_s^2 + \pst^2 } , \Jdcop \right\}.
		\label{eq:optdyn}
	\end{equation}
    Projective strategies are team optimal for $p_s, p_c$ satisfying \begin{equation}
        \half-\half \sqrt{p_s^2 + \pst^2 } \leq \min\{ p_sp_c,\pst p_c, p_s\pct,\pst\pct\}.\label{eq:dynamic_proj}
    \end{equation}
    Classical strategies are team optimal for $p_s, p_c$ satisfying 
    \begin{equation}
        \half-\half \sqrt{p_s^2 + \pst^2 } \geq \min\{ p_sp_c,\pst p_c, p_s\pct,\pst\pct\}. \label{eq:dynamic_class}
    \end{equation}
\label{thm:optdyn}
\end{theorem}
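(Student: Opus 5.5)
The plan is to follow the same route as the proof of Theorem~\ref{thm:optstatic}: an extreme-point reduction, then a case split into the all-projective branch and the branches with a degenerate POVM element. The starting point is the expression \eqref{jdq} of Lemma~\ref{lemma}. Using \eqref{eq:trace_e} and \eqref{eq:trace_ef}, it reads
\[
\Jdqp=-\sum_{i,j}(-1)^{ij}\bkP(y_1=i)\,(\ei\fj+\evi\cdot\fvj)+p_s\ea+\pst\pct+(p_s\pct+\pst p_c)\fa+(p_sp_c-\pst\pct)\fb .
\]
This is affine in each of $E_0,E_1,F_0,F_1$ separately, since every bilinear term couples one $E_i$ with one $F_j$. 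Fix three of the operators and minimize over the fourth on the compact convex set $\mathcal E$. As in the static proof, the minimum is attained at an extreme point of $\mathcal E$, i.e.\ at $0$, $I$ or a rank-one projection. Iterating this over the four operators, we may assume that every $E_i$ and every $F_j$ is either degenerate (classical) or rank-one projective.

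\textbf{All four projective.} Corollary~\ref{cor:JspJdp} gives $\Jdp(\boldsymbol\theta,\boldsymbol\phi;p_s,p_c)=\Jsp(\boldsymbol\theta,\boldsymbol\phi;p_s,\tfrac12)$, so I reuse Case~1 of the proof of Theorem~\ref{thm:optstatic} with $p_1=p_s$, $p_2=\tfrac12$.

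\begin{itemize}
\item With $p_2=\tfrac12$ we have $p_2^2+\pbt^2=2p_2\pbt=\tfrac12$, so the critical point is $t^*=(p_s^2-\pst^2)/(p_s^2+\pst^2)$.
\item This $t^*$ always lies in $[-1,1]$; equivalently, the feasibility condition $|(2p_1-1)(2p_2-1)|\le 4p_1\pat p_2\pbt$ holds trivially because its left side vanishes.
\item Substituting into \eqref{eq:case11} gives the value $\tfrac12-\tfrac1{\sqrt2}\sqrt{(p_s^2+\pst^2)/2}=\tfrac12-\tfrac12\sqrt{p_s^2+\pst^2}$.
\end{itemize}

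This is the projective optimum, and it is attained.

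\textbf{At least one degenerate.} Classical strategies embed into $\mathsf Q$ by the Proposition ($\mathsf{C}\subseteq\mathsf{Q}$), so $\Jdcop$ is attainable. It therefore suffices to show that every remaining configuration costs at least $\Jdcop=\min\{p_sp_c,\pst p_c,p_s\pct,\pst\pct\}$ from Theorem~\ref{thm:jdco}. I would handle the subcases as in Case~2 of the static proof.

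\begin{itemize}
\item If both $E_i$ are degenerate, \A{} acts deterministically. The problem for \B{} is then linear in $F_0,F_1$, with coefficients depending only on traces, so its optimum is classical and hence at least $\Jdcop$. The symmetric argument applies when both $F_j$ are degenerate.
\item In each mixed subcase, say $E_0=sI$ with $E_1,F_0,F_1$ projective, or one $F_j$ degenerate, substitute $\ei,\fj\in\{0,\tfrac12,1\}$ and the corresponding Bloch vectors (zero or of norm $\tfrac12$).
\item Minimize the remaining linear terms in the Bloch vectors by alignment, exactly as in the static case. Then lower-bound the inner products by Cauchy--Schwarz, using $|\evi\cdot\fvj|\le\tfrac14$ and $|p\fva\pm q\fvb|\le\tfrac12$ when $p+q=1$.
\item Finally show that the resulting lower bound is a convex combination (an average) of two of the four products $p_sp_c,\pst p_c,p_s\pct,\pst\pct$, and is therefore at least their minimum.
\end{itemize}

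\textbf{Main obstacle.} This last step is where I expect the difficulty. In the dynamic cost the linear coefficients of $\Tr F_0$ and $\Tr F_1$ involve $p_c$, whereas the bilinear part involves only $p_s$. The bounds will therefore not match the classical products as neatly as in the static case, and I expect to need to enumerate the sign choices $s\in\{0,1\}$ individually. As a fallback for a subcase that resists this, I would fix the projective operators, optimize the degenerate one over $\{0,I\}$, and check that the resulting expression is affine in the remaining angle parameter $t=4\,\fva\cdot\fvb$. Its minimum would then sit at $t=\pm1$, which corresponds to a classical strategy.

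Combining the branches gives $\Jdqop=\min\{\tfrac12-\tfrac12\sqrt{p_s^2+\pst^2},\,\Jdcop\}$. Projective strategies attain the first term and classical strategies attain the second. This yields the optimality regions \eqref{eq:dynamic_proj} and \eqref{eq:dynamic_class}.
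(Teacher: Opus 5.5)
Your proposal is correct and follows essentially the same route as the paper. You reduce to degenerate or rank-one operators by the extreme-point argument, obtain the value $\tfrac12-\tfrac12\sqrt{p_s^2+\pst^2}$ in the all-projective branch, and bound every mixed configuration below by $\Jdcop$. The only differences are that you import the all-projective optimum from Case~1 of the static proof via Corollary~\ref{cor:JspJdp} at $p_2=\tfrac12$ (the paper instead aligns $\eva,\evb$ directly and uses $|\fva+\fvb|^2+|\fva-\fvb|^2=1$), and that the mixed-subcase bounds do come out as averages of two of the four products (e.g.\ $\tfrac{p_s}{2}$, $\tfrac{\pct}{2}$, $\tfrac12(p_sp_c+\pst\pct)$), so the obstacle you anticipate does not arise and the fallback is not needed.
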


\begin{proof}
    As in the proof of Theorem~\ref{thm:optstatic}, we first restrict attention to the case $\bie,\bif$ are either degenerate
    ($0$ or $I$) or projections.

    We now consider two cases.

\medskip
\noindent
{\bf Case 1: at least one of $E_0,E_1,F_0,F_1$ is degenerate.}
First assume that $E_0$ is degenerate (the case where $E_1$ is degenerate follows similarly). Then
$
\eva=0 \text{and} \ea\in\{0,1\}.
$
Substituting this into \eqref{jdq} and using \eqref{eq:trace_e} and
\eqref{eq:trace_ef}, we obtain
\begin{align}
    \Jdqp
    &=
    \pst \pct
    + p_s \ea(1-\fa-\fb)
    +\fa\bigl(p_s\pct+\pst p_c-\pst\eb\bigr)
\non \\     &+\fb\bigl(p_sp_c-\pst\pct+\pst\eb\bigr)
    +\pst\,\evb\cdot(\fvb-\fva). \label{eq:case1ex}
\end{align}

We first minimize with respect to $\evb$. Since
$
|\evb|\le \min\{\eb,1-\eb\},
$
for fixed $\eb,\fa,\fb,\fva,\fvb$ the last term in \eqref{eq:case1ex} is minimized by
choosing $\evb$ anti-parallel to $\fvb-\fva$, with maximal allowed norm. Hence
\begin{align}
    \Jdqp
    &=
    \pst \pct
    + p_s \ea(1-\fa-\fb)
    +\fa\bigl(p_s\pct+\pst p_c-\pst\eb\bigr)
    +\fb\bigl(p_sp_c-\pst\pct+\pst\eb\bigr) \notag\\
    &\qquad
    -\pst \min\{\eb,1-\eb\}\,|\fvb-\fva|.
    \label{eq:case1-reduced}
\end{align}

It remains to optimize over $F_0$ and $F_1$. By the extreme-point reduction,
each of $F_0,F_1$ is either degenerate or rank-one projective.

\smallskip
\noindent
{\it (i) Both $F_0$ and $F_1$ are degenerate.}
Then
$
\fva=\fvb=0 \text{and} \fa,\fb\in\{0,1\},
$
so \eqref{eq:case1-reduced} becomes affine in $\eb$. Hence the minimum over
$\eb$ is attained at $\eb\in\{0,1\}$, i.e. at a deterministic choice of $E_1$.
Thus the resulting strategy is classical, and the minimum in this branch is
$\Jdcop$.

\smallskip
\noindent
{\it (ii) Exactly one of $F_0,F_1$ is projective.}
Assume first that $F_0$ is projective and $F_1$ is degenerate. Then
$
\fa=\frac12,|\fva|=\frac12,\fb\in\{0,1\}, \fvb=0.
$
Also since $E_1$ is either degenerate or a projection, $\eb \in \{0,\half,1\}$.
Using the above in \eqref{eq:case1-reduced}, we obtain
\[
\Jdqp=
\begin{cases}
    \displaystyle \frac{\pct}{2}+\frac{p_s}{2}\ea, & \fb=0,\\[2ex]
    \displaystyle \frac{p_c}{2}+\frac{p_s}{2}(1-\ea), & \fb=1.
\end{cases}
\]
Hence
\[
\Jdqp\ge \min\left\{\frac{\pct}{2},\frac{p_c}{2}\right\} \geq \Jdcop
\]
It follows that
\[
\Jdqp\ge \Jdcop.
\]
The case where $F_0$ is degenerate and $F_1$ is projective is symmetric, and
gives the same conclusion.

\smallskip
\noindent
{\it (iii) Both $F_0$ and $F_1$ are projective.}
Then
$
\fa=\fb=\frac12 \text{and}|\fva|=|\fvb|=\frac12.
$
It is easy to that \eqref{eq:case1-reduced} becomes
\[
\Jdqp
=
\frac12-\pst \min\{\eb,1-\eb\}\,|\fvb-\fva|.
\]
The minimum is attained at \(\eb=\frac12\), and since $|\fva - \fvb| \leq 1$
\[
\Jdqp
\ge
\frac12-\frac{\pst}{2}
=
\frac{p_s}{2} \geq \Jdcop.
\]

Combining (i)--(iii), we conclude that whenever $E_0$ is degenerate,
\[
\Jdqp\ge \Jdcop,
\]
and equality is attained by a deterministic classical strategy. Hence the
optimal value if $E_0$ (or $E_1$) is degenerate is $\Jdcop$.

Now assume that $F_0$ is degenerate. Then
$
\fva=0, \fa\in\{0,1\}.
$
We distinguish two subcases.

\smallskip
\noindent
{\it Subcase 1: $F_1$ is degenerate.}
Then
$
\fvb=0, \fb\in\{0,1\}.
$
Substituting into \eqref{jdq}, the cost becomes affine in $\ea,\eb$. Hence the
minimum is attained at $\ea,\eb\in\{0,1\}$, i.e., at a deterministic classical
strategy. Therefore, the minimum in this subcase is $\Jdcop$.

\smallskip
\noindent
{\it Subcase 2: $F_1$ is projective.}
Then
$
\fb=\frac12, |\fvb|=\frac12.
$
Substituting $\fva=0$ and $\fb=\frac12$ into \eqref{jdq}, and using
\eqref{eq:trace_e} and \eqref{eq:trace_ef}, we obtain
\begin{align}
    \Jdqp
    &=
    p_s\ea\Bigl(\frac12-\fa\Bigr)
    +\pst\eb\Bigl(\frac12-\fa\Bigr)
    +\pst\pct
    +\fa\bigl(p_s\pct+\pst p_c\bigr)
    +\frac12\bigl(p_sp_c-\pst\pct\bigr)\notag\\
    &\qquad
    +\bigl(\pst\,\evb-p_s\,\eva\bigr)\cdot\fvb .
    \label{eq:F0deg-proj}
\end{align}
For fixed $\ea,\eb,\eva,\evb$, the last term is minimized by choosing $\fvb$
anti-parallel to $\pst\,\evb-p_s\,\eva$, and since $|\fvb|=\frac12$, this gives
\[
\bigl(\pst\,\evb-p_s\,\eva\bigr)\cdot\fvb
=
-\frac12\,|\pst\,\evb-p_s\,\eva|.
\]
Hence
\begin{align}
    \Jdqp
    &=
    p_s\ea\Bigl(\frac12-\fa\Bigr)
    +\pst\eb\Bigl(\frac12-\fa\Bigr)
    +\pst\pct
    +\fa\bigl(p_s\pct+\pst p_c\bigr)
    +\frac12\bigl(p_sp_c-\pst\pct\bigr)\notag\\
    &\qquad
    -\frac12\,|\pst\,\evb-p_s\,\eva|.
    \label{eq:F0deg-proj2}
\end{align}

If $\fa=0$, then \eqref{eq:F0deg-proj2} becomes
\[
\Jdqp
=
\frac12\bigl(p_sp_c+\pst\pct\bigr)
+\frac{p_s}{2}\ea
+\frac{\pst}{2}\eb
-\frac12\,|\pst\,\evb-p_s\,\eva|.
\]
Using
\[
|\pst\,\evb-p_s\,\eva|
\le
\pst\,|\evb|+p_s\,|\eva|
\le
\pst\min\{\eb,1-\eb\}+p_s\min\{\ea,1-\ea\},
\]
we get
\begin{align*}
    \Jdqp
    &\ge
    \frac12\bigl(p_sp_c+\pst\pct\bigr)
    +\frac{p_s}{2}\Bigl(\ea-\min\{\ea,1-\ea\}\Bigr)
    +\frac{\pst}{2}\Bigl(\eb-\min\{\eb,1-\eb\}\Bigr)\\
    &\ge
    \frac12\bigl(p_sp_c+\pst\pct\bigr)
    \ge \Jdcop.
\end{align*}

If $\fa=1$, then \eqref{eq:F0deg-proj2} becomes
\[
\Jdqp
=
\frac12\bigl(p_s\pct+\pst p_c\bigr)
+\frac{p_s}{2}(1-\ea)
+\frac{\pst}{2}(1-\eb)
-\frac12\,|\pst\,\evb-p_s\,\eva|.
\]
Using the same bound,
\[
|\pst\,\evb-p_s\,\eva|
\le
\pst\min\{\eb,1-\eb\}+p_s\min\{\ea,1-\ea\},
\]
we get
\begin{align*}
    \Jdqp
    &\ge
    \frac12\bigl(p_s\pct+\pst p_c\bigr)
    +\frac{p_s}{2}\Bigl(1-\ea-\min\{\ea,1-\ea\}\Bigr)\\
    &\qquad
    +\frac{\pst}{2}\Bigl(1-\eb-\min\{\eb,1-\eb\}\Bigr)\\
    &\ge
    \frac12\bigl(p_s\pct+\pst p_c\bigr)
    \ge \Jdcop.
\end{align*}

Therefore, whenever $F_0$ is degenerate,
$
\Jdqp\ge \Jdcop.
$
Since equality is attained by a deterministic classical strategy, the optimal
value in this branch is $\Jdcop$. 

\medskip
\noindent
{\bf Case 2: all four $E_0,E_1,F_0,F_1$ are rank-one projections.}
In this case
\[
\ea=\eb=\fa=\fb=\frac12,\qquad
|\eva|=|\evb|=|\fva|=|\fvb|=\frac12.
\]
Using \eqref{jdq}, together with \eqref{eq:trace_e} and
\eqref{eq:trace_ef}, we obtain
\begin{align*}
    \Jdqp
    &=
    -p_s\Tr[(E_0\otimes F_0)\rho]
    -p_s\Tr[(E_0\otimes F_1)\rho]
    -\pst\Tr[(E_1\otimes F_0)\rho]
    +\pst\Tr[(E_1\otimes F_1)\rho]\\
    &\qquad
    +\frac12 p_s\Tr E_0+\pst\pct
    +\frac12\Big((p_s\pct+\pst p_c)\Tr F_0+(p_sp_c-\pst\pct)\Tr F_1\Big).
\end{align*}
Now, recall, as in the proof of Theorem~\ref{thm:optstatic},
$
\Tr E_0=\Tr F_0=\Tr F_1=1,
$
and
\[
\Tr[(E_i\otimes F_j)\rho]=\ei\fj+\evi\cdot\fvj
=\frac14+\evi\cdot\fvj.
\]
Substituting these into the above expression gives
\begin{align*}
    \Jdqp
    &=
    -p_s\Big(\frac14+\eva\cdot\fva\Big)
    -p_s\Big(\frac14+\eva\cdot\fvb\Big)
    -\pst\Big(\frac14+\evb\cdot\fva\Big)
    +\pst\Big(\frac14+\evb\cdot\fvb\Big)\\
    &\qquad
    +\frac12 p_s+\pst\pct
    +\frac12\Big((p_s\pct+\pst p_c)+(p_sp_c-\pst\pct)\Big).
\end{align*}
The scalar terms simplify,
yielding
\[
\Jdqp
=
\frac12-p_s\,\eva\cdot(\fva+\fvb)-\pst\,\evb\cdot(\fva-\fvb).
\]

For fixed $\fva,\fvb$, the minimum is attained when $\eva$ is aligned with
$\fva+\fvb$ and $\evb$ is aligned with $\fva-\fvb$. Since
$
|\eva|=|\evb|=\frac12,
$
this yields
\[
\Jdqp
=
\frac12-\frac{p_s}{2}\,|\fva+\fvb|-\frac{\pst}{2}\,|\fva-\fvb|.
\]
By Cauchy--Schwartz inequality,
\[ \Jdqp \geq \half - \half \sqrt{p_s^2 + \pst^2 }, \]
since
\[
|\fva+\fvb|^2+|\fva-\fvb|^2
=
2|\fva|^2+2|\fvb|^2
=
1,
\]

%

It remains to show that equality is attainable. We choose $\fva,\fvb$ with
$|\fva|=|\fvb|=\frac12$ and satisfying
\[
|\fva+\fvb|=\frac{p_s}{\sqrt{p_s^2+\pst^2}},\qquad |\fva-\fvb|=\frac{\pst}{\sqrt{p_s^2+\pst^2}}.
\]
For instance, take
\[
\fva=\frac12(1,0,0),\qquad
\fvb=\frac12(\cos\beta,\sin\beta,0),
\]
where $\beta\in[0,\pi]$ is chosen so that
\[
\cos(\beta/2)=\frac{p_s}{\sqrt{p_s^2+\pst^2}},\qquad \sin(\beta/2)=\frac{\pst}{\sqrt{p_s^2+\pst^2}},.
\]
This attains equality in the above bound, and therefore, the optimal value in this case is,
$
\Jdqp=\half -\half \sqrt{p_s^2+\pst^2}.
$
Combining both cases, we have $$\Jdqop = \min \left\{\half-\half \sqrt{p_s^2 + \pst^2 } , \Jdcop \right\}.$$
We further note, projective strategies are team optimal for $p_s, p_c$ satisfying \eqref{eq:dynamic_proj} and classical strategies are team  optimal for $p_s, p_c$ satisfying \eqref{eq:dynamic_class}.
\end{proof}

For the dynamic information structure, as demonstrated by the proof above, projective strategies are not team optimal. In particular, for $p_s, p_c$ satisfying \eqref{eq:dynamic_class} strictly, quantum strategies other than classical strategies result in a higher cost, meaning there is a quantum \textit{disadvantage}.

\begin{figure}[h]
    \centering
    \includegraphics[scale=0.8]{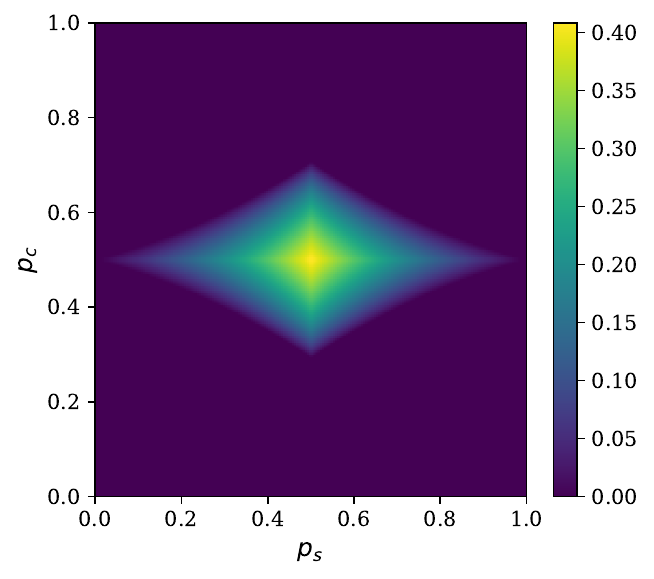}
    \caption{Fractional quantum advantage for dynamic information structure.}
    \label{fig:nonclassical}
\end{figure}

Fig~\ref{fig:nonclassical} shows the fractional quantum advantage for the dynamic information structure. Similar features like Fig. \ref{fig:inde}, such as maxima at the center and symmetry across quadrants, are observed in Fig. \ref{fig:nonclassical}. The main difference is that the dynamic information structure has a smaller region of quantum advantage compared to static. Also the central horizontal lines in both figures $p_1=0.5$ and $p_c=0.5$ are identical. This is because a maximally noisy channel $(p_c=0.5)$ is as good as there being no channel (static).

\section{Other Solution Concepts}
\label{sec:othersol}

We have seen that for both static and dynamic information structures, when we optimize over the space of general quantum strategies, the optimal quantum strategy always turns out to be either projective or classical in nature. We now analyse two further solution concepts, formally defined below, concerning these classes for the static and dynamic information structures.

\begin{definition}
For a strategy class $\mathsf{A}$, the property of best response over $\mathsf{A}$ holds for the static information structure, if for all $p_1,p_2$  the following holds:
\begin{itemize}
\item For any $\bie \in \mathsf{A}, \exists \bif^* \in \mathsf{A}$ such that 
\begin{equation*}
J_\mathsf{Q}^\mathsf{S}(\bie,\bif^*;p_1,p_2) \leq J^\mathsf{S}_\mathsf{Q}(\bie,\bif;p_1,p_2) \quad \forall  \bif \in \mathsf{Q},
\end{equation*}
\item For any $\bif \in \mathsf{A}, \exists \bie^* \in \mathsf{A}$ such that
\begin{equation*}
J_\mathsf{Q}^\mathsf{S}(\bie^*,\bif;p_1,p_2) \leq J^\mathsf{S}_\mathsf{Q}(\bie,\bif;p_1,p_2) \quad \forall  \bie \in \mathsf{Q},
\end{equation*}
\end{itemize}
That is, for each player $i$, and each strategy belonging to $\mathsf{A}$ of this player, there is an optimal response of the other player over strategies in $\mathsf{Q}$, also belonging to $\mathsf{A}$. We say $\mathsf{A}$ has the property of best response for the dynamic information structure if the above holds with the replacement $\mathsf{S} \mapsto \mathsf{D}$ and $(p_1,p_2) \mapsto (p_s,p_c)$.
\end{definition}
\begin{definition}
	For some strategy class $\mathsf{A}$, the property of person-by-person optimality over $\mathsf{A}$ holds for the static information structure, if for all $p_1,p_2$ there exists $ \bie^*, \bif^* \in \mathsf{A} $ satisfying
	$$ J^\mathsf{S}_\mathsf{Q}(\bie^*, \bif^*;p_1,p_2) \leq J^\mathsf{S}_\mathsf{Q}(\bie^*, \bif;p_1,p_2) \quad \forall  \bif  \in \mathsf{A} $$
	$$ J^\mathsf{S}_\mathsf{Q}(\bie^*,\bif^*;p_1,p_2) \leq J_\mathsf{Q}^\mathsf{S}(\bie, \bif^*;p_1,p_2) \quad \forall  \bie  \in \mathsf{A}. $$
	That is, there exists a pair of strategies, each belonging to $\mathsf{A}$, which are a best response to each other. We say $\mathsf{A}$ has the property of person-by-person optimality for the dynamic information structure if the above holds with the replacement $\mathsf{S} \mapsto \mathsf{D}$ and $(p_1,p_2) \mapsto (p_s,p_c)$.

\end{definition}
For every team optimal solution, given that the individual strategies all happen to belong to the same strategy class, the corresponding solution will also be person-by-person optimal. But the converse need not be true.
\subsection{Best Response over a Strategy Class}
We say that a game has the property of best response over a strategy class if over all probability distributions, for any strategy of Player 1 from that strategy class, the best response or optimal strategy for Player 2 lies in the same strategy class and vice-versa.

\begin{theorem}
The property of best response over classical strategies
    \begin{enumerate}
        \item holds for the static information structure;
        \item holds for the dynamic information structure.
    \end{enumerate}
    \label{thm:best_class}
\end{theorem}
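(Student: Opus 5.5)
Fix one player's strategy in $\mathsf{C}$ and show that the other player has a classical best response over all of $\mathsf{Q}$. We treat the case where \A{}'s strategy $\bie$ is classical; the case where \B{}'s strategy $\bif$ is classical is symmetric. By the proposition embedding $\mathsf{C}$ into $\mathsf{Q}$, a classical strategy for \A{} means $E_0=\alpha_0 I$ and $E_1=\alpha_1 I$. Since we may restrict to deterministic strategies, take $\alpha_i\in\{0,1\}$; in fact the argument only needs $\eva=\evb=0$. The key observation is that with $\eva=\evb=0$, \eqref{eq:trace_ef} gives $\Tr[(E_i\otimes F_j)\rho]=\ei\fj$. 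The Bloch vectors $\fva,\fvb$ therefore drop out of the cost entirely, and the cost depends on $\bif$ only through the scalars $\fa,\fb$.

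\textbf{Static structure.} Substitute $\eva=\evb=0$ into \eqref{eq:Jsq-explicit2}. What remains is affine in $(\fa,\fb)$, and by positivity these range over $[0,1]^2$. The minimum of an affine function on $[0,1]^2$ is attained at a vertex, so some $\fa,\fb\in\{0,1\}$ is optimal. Taking $\fva=\fvb=0$ there gives $F_j\in\{0,I\}$, which is a classical deterministic strategy. It attains the minimum over all $\bif\in\mathsf{Q}$, because every $\bif$ with the same $(\fa,\fb)$ has the same cost. Fixing a classical $\bif$ instead, the same computation with the roles of $\ea,\eb$ and $\fa,\fb$ exchanged shows the cost is affine in $(\ea,\eb)$, again with minimizer at a vertex.

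\textbf{Dynamic structure.} Use \eqref{jdq}. For fixed classical $\bie$, the term $\sum_{i,j}(-1)^{ij}\bkP(y_1=i)\Tr[(E_i\otimes F_j)\rho]$ becomes $\sum_{i,j}(-1)^{ij}\bkP(y_1=i)\ei\fj$. The remaining terms depend on $\bif$ only through $\Tr F_j=2\fj$. The cost is therefore affine in $(\fa,\fb)\in[0,1]^2$, and the vertex argument applies verbatim.

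For fixed classical $\bif$, we have $\fva=\fvb=0$, so the cost is affine in $(\ea,\eb)$ and independent of $\eva,\evb$. Then $E_i\in\{0,I\}$ is again optimal. This is exactly the computation already carried out in Subcase 1 of the proof of Theorem~\ref{thm:optdyn}, with the degenerate $F$'s there playing the role of the fixed classical $\bif$ here.

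\textbf{Main obstacle.} The one point requiring care is the dynamic case. We must confirm that \eqref{jdq}, which came from Lemma~\ref{lemma:joint} via the post-measurement state, has no dependence on $\bif$ or $\bie$ beyond $\Tr[(E_i\otimes F_j)\rho]$ and the traces. In particular, the factor $\bkP(y_2\mid u_1,y_1)$ must not reintroduce a non-affine coupling. It does not: this factor is a constant, $p_c$ or $\pct$, for each term, so affinity is preserved. A second point is that a best response must exist at all. Existence follows from compactness of $\mathcal{E}$ and continuity of the cost; the argument above in fact exhibits a minimizer explicitly.
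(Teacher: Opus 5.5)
Your proposal is correct and follows essentially the same route as the paper. Once one player's POVM elements are multiples of the identity, the cost depends on the other player's strategy only through the scalars $f^{0}_0,f^{0}_1$ (resp.\ $e^{0}_0,e^{0}_1$) and is affine in them over $[0,1]^2$. A vertex minimizer then forces $F_j\in\{O,I\}$ (resp.\ $E_i\in\{O,I\}$), in both the static and the dynamic structure. The only differences are cosmetic: you start from the expanded form \eqref{eq:Jsq-explicit2} rather than the identity $\langle A_iB_j\rangle=(-1)^{e^{0}_i}(1-2f^{0}_j)$, and your argument needs only vanishing Bloch vectors, so it also covers mixed classical strategies.
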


(Proof in \ref{sec:proofs}.)

We see that the property of best response over classical strategies holds for both the information structures. However, since we found differences in the team-optimal solution over projective strategies, it is interesting to see if the information structure affects this property.\\

\begin{theorem}
The property of best response over projective strategies
    \begin{enumerate}
        \item holds for the static information structure;
        \item does not hold for the dynamic information structure.
    \end{enumerate}
    \label{thm:best_proj}
\end{theorem}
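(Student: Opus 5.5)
The plan is to use the fact that, for a fixed strategy of one player, both $\Jsqp$ and $\Jdqp$ are affine in the other player's POVM elements. This holds by Lemma~\ref{lemma} together with \eqref{eq:trace_e}--\eqref{eq:trace_ef}. The best response over $\mathsf{Q}$ therefore reduces to optimizing, separately for each input, the coefficients of the scalar part $\fj$ and the vector part $\fvj$ (respectively $\ei$, $\evi$). I will then check whether the optimum can be taken at $\fj=\frac12$, $|\fvj|=\frac12$, which is the projective case.

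For the static part, fix a projective $\bie$, so $\ea=\eb=\frac12$ and $|\eva|=|\evb|=\frac12$. Substituting these into \eqref{eq:Jsq-explicit2}, I expect the scalar coefficients of $\fa$ and $\fb$ to vanish. For instance, the coefficient of $\fa$ is $p_2-p_2p_1-p_2\pat=0$, and similarly for $\fb$. This reflects that Arjuna's marginal outcome is uniform, so $\langle A_i\otimes I\rangle=0$. The cost is then
\[
\pat\pbt+p_1\tfrac12+\pat(p_2-\pbt)\tfrac12-2p_1(p_2\fva+\pbt\fvb)\cdot\eva-2\pat(p_2\fva-\pbt\fvb)\cdot\evb,
\]
which is linear in $\fva,\fvb$ alone. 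The constraint $|\fvj|\le\min\{\fj,1-\fj\}$ is loosest at $\fj=\frac12$. Hence, for each $j$, an optimal choice is $\fj=\frac12$ with $\fvj$ of norm $\frac12$ anti-aligned with its coefficient vector; any unit direction works if that vector is zero. This is a rank-one projection, so a projective best response exists. By the symmetry of \eqref{Jsq} in the two players (equivalently, the same computation with $\ei$), a fixed projective $\bif$ likewise admits a projective best response from Arjuna.

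For the dynamic part, I will give an explicit counterexample where, against a projective $\bie$, no projective $\bif$ is a best response. Again take $\ea=\eb=\frac12$ and substitute into \eqref{jdq}. The coefficient of $\fa$ becomes $-\frac12+p_s\pct+\pst p_c$, and the coefficient of $\fb$ becomes $\frac{\pst-p_s}{2}+p_sp_c-\pst\pct$. The vector part is $-p_s\eva\cdot(\fva+\fvb)-\pst\evb\cdot(\fva-\fvb)$, so the coefficient vector of $\fvb$ is $-p_s\eva+\pst\evb$. Now choose $p_s=\frac12$, $\eva=\evb$ (that is, $E_0=E_1$, a legitimate projective strategy), and any $p_c\neq\frac12$. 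Then $\fvb$ drops out entirely, and the coefficient of $\fb$ equals $\frac{p_c-\pct}{2}\neq0$. So the cost is strictly minimized at $\fb\in\{0,1\}$, i.e.\ $F_1\in\{0,I\}$. Compared with any projective $F_1$ (which forces $\fb=\frac12$), this strictly lowers the cost by $|p_c-\pct|/4$, with $F_0$ chosen optimally in both cases.

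Consequently no best response in $\mathsf{Q}$ lies in $\mathsf{\perp}$, and the property fails for the dynamic structure. The main obstacle is the bookkeeping in \eqref{jdq}: I need to get the scalar coefficients right, especially the $\Tr F_1$ term $\frac12(p_sp_c-\pst\pct)$, which is what breaks the cancellation seen in the static case. I would double-check that coefficient by evaluating the cost directly for $F_1=0$ versus $F_1$ projective at $p_s=\frac12$.
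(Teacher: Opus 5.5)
Your proposal is correct and follows essentially the same route as the paper. In the static case, against a projective $\bie$ the scalar parts of $F_j$ drop out, and linearity in $\fvj$ forces $\fj=\tfrac12$, $|\fvj|=\tfrac12$. In the dynamic case you use the paper's counterexample ($p_s=\tfrac12$, $E_0=E_1$ projective), where $\fvb$ drops out and the nonzero coefficient of $\fb$ forces $F_1\in\{0,I\}$; your version is slightly more general (any $p_c\neq\tfrac12$) and shows explicitly that every projective response is strictly worse, which is exactly what negating the definition requires.
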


(Proof in \ref{sec:proofs}.)

For the static information structure, the best response to any projective strategy is a projective strategy, which is not the case for the dynamic information structure, where it depends on the probability distribution.

\subsection{Person-by-Person Optimality over a Strategy Class}

We say that a game has the property of person-by-person optimality over a strategy class if over all probability distributions, there exists a pair of strategies of \A{} and \B{} in that strategy class, which are best responses to each other.
\begin{theorem}
The property of person-by-person optimality over classical strategies
    \begin{enumerate}
        \item holds for the static information structure;
        \item holds for the dynamic information structure.
    \end{enumerate}
    \label{thm:pbp_class}
\end{theorem}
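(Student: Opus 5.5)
We need to show that for every $(p_1,p_2)$, resp. $(p_s,p_c)$, there is a pair $(\bie^*,\bif^*)\in\mathsf{C}$ in which each strategy is a best response to the other within $\mathsf{C}$. The natural candidate is a classical team optimal pair. The plan is to take a deterministic pair $(\gamma_1^*,\gamma_2^*)$ attaining $\Jscop$ (Theorem~\ref{thm:jsco}) in the static case, or $\Jdcop$ (Theorem~\ref{thm:jdco}) in the dynamic case. Such a pair exists because there are only finitely many deterministic strategies. We then embed it into $\mathsf{Q}$ through the POVMs \eqref{eq:povmclass} with $\alpha_i,\beta_j\in\{0,1\}$, giving $(\bie^*,\bif^*)$.

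Next we check the two unilateral inequalities over $\mathsf{A}=\mathsf{C}$. Fix $\bif^*$ and let $\bie\in\mathsf{C}$ be arbitrary, i.e.\ $E_i=\alpha_i I$ with $\alpha_i\in[0,1]$ (a mixed classical strategy). By the proposition embedding classical strategies into $\mathsf{Q}$, the joint law of $(u_1,u_2,y_1,y_2)$ is the one induced by the classical mixed pair. This holds in both information structures, via \eqref{eq:static_prob1}--\eqref{eq:static_prob2} and \eqref{eq:dyn_prob1}--\eqref{eq:dyn_prob2}. Hence $J_{\mathsf{Q}}^{\mathsf{I}}(\bie,\bif^*)$ is the classical expected cost of $(\Gamma_1,\gamma_2^*)$. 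This cost is linear in $\Gamma_1$, so it is a convex combination of the costs of deterministic $(\gamma_1,\gamma_2^*)$. Each of those costs is at least $\Jscop$ (resp. $\Jdcop$), which equals $J_{\mathsf{Q}}^{\mathsf{I}}(\bie^*,\bif^*)$ by optimality. The same argument, with the roles of the players exchanged, gives the inequality for \B{}.

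The only delicate point is the dynamic structure. There \A{}'s choice changes the law of $y_2$ through $x_1=y_1\oplus u_1$, so we must verify that the classical cost is still affine in $\Gamma_1$ for fixed $\gamma_2^*$. I would handle this by writing $\Pbb(u_1,u_2,y_1,y_2)=\Pbb(y_1)\Gamma_1(u_1|y_1)\Pbb(y_2|u_1,y_1)\Pbb(u_2|y_2)$. Here $\Pbb(y_2|u_1,y_1)\in\{p_c,\pct\}$ does not depend on the strategy, so the expression is linear in each player's mixed strategy separately, as already noted in Section~\ref{sec:class}. With that observation both parts follow. Alternatively, one could invoke Theorem~\ref{thm:best_class} directly, since a best response within $\mathsf{C}$ to $\bif^*$ exists and a team optimal pair cannot be improved unilaterally.
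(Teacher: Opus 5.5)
Your proof is correct, but it follows a different route from the paper's.

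The paper argues by explicit construction. It splits the parameter space: the four quadrants of $(p_1,p_2)$ in the static case, and $p_s\ge\frac12$ versus $p_s<\frac12$ in the dynamic case. It then writes down specific $\{O,I\}$-valued POVM pairs and checks the signs of the coefficients in \eqref{eq:pbp_class_stat_b}, \eqref{eq:pbp_class_stat_c}, \eqref{eq:pbp_class_dyn_a} and \eqref{eq:pbp_class_dyn_b} to show each strategy is a best response to the other.

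You instead use a structural argument. A minimizer of a common cost over a product set is a coordinatewise minimizer. You combine this with finiteness of the deterministic strategies and with affinity of the cost in each player's own mixed strategy. You rightly verify that affinity in the dynamic case, since $\Pbb(y_2\mid u_1,y_1)\in\{p_c,\pct\}$ does not depend on the strategy. This is the same reasoning the paper itself uses for Theorem~\ref{thm:pbp_proj}(1).

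Your route has several advantages. It needs no case analysis. It handles both information structures uniformly. It works whether $\mathsf{C}$ is read as deterministic or mixed classical strategies. It also yields pairs that are team optimal within $\mathsf{C}$, whereas the paper's dynamic pairs need not be. For $\frac12\le p_s<1$ and $p_c<\frac12$, the paper's pair costs $\pst\pct>\pst p_c=\Jdcop$. This is harmless, since person-by-person optimality does not require team optimality.

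The paper's computation buys something in return. It exhibits explicit equilibria. Its coefficient formulas hold for arbitrary $F_j$ (resp.\ $E_i$) once the other player is classical, so it shows that no unilateral deviation in all of $\mathsf{Q}$ helps, not just deviations within $\mathsf{C}$. You can recover that stronger conclusion by combining your argument with Theorem~\ref{thm:best_class}.

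Read your closing remark in that light, though. Theorem~\ref{thm:best_class} by itself only supplies a classical best response, not a mutually best pair. So it is not an independent alternative proof; the team-optimality step is what does the work.
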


(Proof in \ref{sec:proofs}.)

For classical strategies, like the best response property, there is an agreement with the static and dynamic information structure results.
\begin{theorem}
The property of person-by-person optimality over projective strategies
    \begin{enumerate}
        \item holds for the static information structure;
        \item does not hold for the dynamic information structure.
    \end{enumerate}
    \label{thm:pbp_proj}
\end{theorem}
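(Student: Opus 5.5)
The static part follows from team optimality, and the dynamic part rests on a strict unilateral-deviation argument; I flag below a point about which deviations count.

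\emph{Static part.} I would use Theorem~\ref{thm:optstatic}. In Case~1 of its proof, the minimum over all-projective quadruples $E_0,E_1,F_0,F_1$ equals the right-hand side of \eqref{eq:case1} for every $(p_1,p_2)$. In the boundary regime this is the value at $t=\pm1$, which equals $\Jscop$. Since $\Jsqop$ is the minimum of that right-hand side and $\Jscop$, some projective pair $(\bie^*,\bif^*)\in\perp$ attains $\Jsqop$ for all $p_1,p_2$. A team-optimal pair over $\mathsf{Q}$ satisfies both unilateral inequalities over $\mathsf{Q}$, and hence over $\perp\subseteq\mathsf{Q}$. So person-by-person optimality over $\perp$ holds for $\mathsf{S}$.

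\emph{Dynamic part: which deviations to test.} If deviations were confined to $\perp$, the property would hold trivially. The projective strategy set is compact, $\Jdpp$ is continuous, and a minimiser over $\perp\times\perp$ is automatically mutually best responding within $\perp$. I will therefore read the dynamic claim in the sense matching the best-response definition: the unilateral deviations tested range over $\mathsf{Q}$. The goal is then to exhibit $(p_s,p_c)$ such that every projective pair admits a strictly profitable unilateral deviation in $\mathsf{Q}$. I would pick $(p_s,p_c)$ satisfying \eqref{eq:dynamic_class} strictly, for instance $p_s=\tfrac12$ and $p_c$ close to $0$ or $1$, so that $\Jdcop$ is much smaller than $\tfrac12-\tfrac12\sqrt{p_s^2+\pst^2}$.

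\emph{Dynamic part: the computation.} Fix a projective $\bie$, so $\ea=\eb=\tfrac12$ and $|\eva|=|\evb|=\tfrac12$. Expand \eqref{jdq} using \eqref{eq:trace_e} and \eqref{eq:trace_ef}; the result is affine in $(\fa,\fva,\fb,\fvb)$. Collect its scalar coefficients $c_0,c_1$ and vector coefficients $\mathbf{w}_0=-(p_s\eva+\pst\evb)$ and $\mathbf{w}_1=-(p_s\eva-\pst\evb)$. Since $\Tr F_j=2\fj$, the scalar part of the cost is $\fa(p_s\pct+\pst p_c-\tfrac12)+\fb(p_sp_c-\pst\pct+\tfrac12(\pst-p_s))$, up to constants. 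Each $F_j$ ranges over $\mathcal{E}$, so \B{}'s best response chooses each $F_j$ separately. A projective choice gives the term $c_j/2-|\mathbf{w}_j|/2$. The degenerate choice $F_j\in\{0,I\}$ gives $\min\{0,c_j\}$. Since $|\mathbf{w}_j|\le\tfrac12$, the degenerate choice is strictly better whenever $|c_j|>|\mathbf{w}_j|$. For $p_s=\tfrac12$ and $p_c$ near $0$ or $1$, I expect $|c_0|$ to be close to $\tfrac12$. I would then check that $|c_0|>|\mathbf{w}_0|$, or $|c_1|>|\mathbf{w}_1|$, for every admissible $\eva,\evb$. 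Here I use $|\mathbf{w}_0|^2+|\mathbf{w}_1|^2=2(p_s^2+\pst^2)/4\cdot$ const, so both cannot be large. Failing a uniform argument, a simpler route is a direct comparison: any projective pair has cost at least $\tfrac12-\tfrac12\sqrt{p_s^2+\pst^2}$, by Case~2 of Theorem~\ref{thm:optdyn}. Making \B{} degenerate in the $F_j$ with the large $|c_j|$ lowers the cost below that, using the explicit reduced formulas \eqref{eq:case1-reduced} and \eqref{eq:F0deg-proj2}.

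\emph{Expected obstacle.} The main difficulty is getting the inequality $|c_j|>|\mathbf{w}_j|$, or the direct cost comparison, to hold uniformly over all projective $\bie$, rather than only at the optimal projective pair. I would first try the extreme choice $p_c\to0$ with $p_s=\tfrac12$ and verify it by continuity. As a fallback, I would invoke the counterexample in the proof of Theorem~\ref{thm:best_proj}(2) and extend it from a single $\bie$ to all projective $\bie$ via this coefficient bound.
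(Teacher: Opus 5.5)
Your static argument is fine and matches the paper's: Case~1 of Theorem~\ref{thm:optstatic} gives a projective pair attaining $\Jsqop$ for every $(p_1,p_2)$, and a team-optimal pair over $\mathsf{Q}$ survives every unilateral deviation. You are also right that, with deviations confined to $\perp$ as the definition literally says, part~2 fails by compactness.

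The gap is in the dynamic part. Letting deviations range over $\mathsf{Q}$ does not rescue the claim. The uniform inequality you flag as the obstacle is false for \emph{every} $(p_s,p_c)$, so no choice of parameters makes your argument or its fallback work. In your notation $c_0=-\tfrac12(2p_s-1)(2p_c-1)$ and $c_1=p_c-\tfrac12$. In particular $c_0\equiv 0$ at $p_s=\tfrac12$, not close to $\tfrac12$.

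Take an antipodal projective $\bie$, i.e.\ $\evb=-\eva$. Then $\mathbf{w}_0=-(2p_s-1)\eva$ and $\mathbf{w}_1=-\eva$. Hence $|c_0|=|p_s-\tfrac12|\,|2p_c-1|\le|\mathbf{w}_0|$ and $|c_1|\le\tfrac12=|\mathbf{w}_1|$. By your own criterion, projective $F_0,F_1$ with $\fvj$ antiparallel to $\mathbf{w}_j$ are best responses over all of $\mathsf{Q}$; concretely $\fvb=\eva$, and $\fva=\pm\eva$ according to the sign of $2p_s-1$. Conversely, when $\bif$ is projective the $\ea,\eb$ terms of \eqref{jdq} cancel. \A{}'s cost over $\mathsf{Q}$ is then $\tfrac12-p_s\eva\cdot(\fva+\fvb)-\pst\evb\cdot(\fva-\fvb)$, and this antipodal $\bie$ minimizes it. So for every $(p_s,p_c)$ some projective pair is person-by-person optimal even against arbitrary POVM deviations.

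Concretely, for $p_s\ge\tfrac12$ take $E_0=|0\rangle\langle 0|$, $E_1=|1\rangle\langle 1|$, $F_0=F_1=|0\rangle\langle 0|$; for $p_s\le\tfrac12$ replace $F_0$ by $|1\rangle\langle 1|$. Let $m$ be \A{}'s computational-basis outcome. Then $u_1=m\oplus y_1$, $y_2=m\oplus v$, \B{}'s qubit is $|m\rangle$ and $u_2=m$, so the cost is $y_1\overline{y_2}$, with mean $\pst/2$.

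\B{} cannot improve. All of his information is independent of $y_1$. On $\{y_2=0\}$ the cost is $(m\oplus u_2)\oplus y_1$ with $m\oplus u_2$ independent of $y_1$, so any POVM response pays at least $\pst$ there. On $\{y_2=1\}$ he already pays $0$.

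\A{} cannot improve either. Against this $\bif$, on $\{y_1=1\}$ the cost equals $n\oplus v\oplus 1$, where $n$ is \B{}'s outcome. This does not depend on \A{}'s POVM and has mean $\tfrac12$ by no-signalling. On $\{y_1=0\}$ the cost is already $0$.

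Your fallback fails for the same reason. At $p_s=\tfrac12$, no unilateral deviation by \B{} from this pair goes below $\tfrac14>\tfrac12-\tfrac{1}{2\sqrt2}$. The single-$\bie$ example of Theorem~\ref{thm:best_proj}(2) uses $\evb=\eva$ and does not extend to antipodal $\bie$.

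The paper's own proof of part~2 has the same hole. At $(p_s,p_c)=(\tfrac12,1)$ it shows that \emph{some} optimal response has $F_1=O$, not that \emph{every} optimal response is non-projective. For $\evb=-\eva$, the $F_0$-term of \eqref{eq:pbp_proj_dyn} vanishes identically. The $F_1$-term $\tfrac12\fb-\tfrac12(\eva-\evb)\cdot\fvb$ also vanishes at the projective choice $\fvb=\eva$, which is the pair above. That computation establishes the failure of the best-response property, not of person-by-person optimality. Part~2 as stated is false under either reading of the definition.
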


(Proof in \ref{sec:proofs}.)

Therefore for the dynamic information structure, there exists some probability distribution for which there is no pair of projective strategies which are best responses to each other.

\begin{center}
\begin{table}
\begin{tabular}{ |c|c| }
 \hline
  \textbf{Static} & \textbf{Dynamic} \\ \hline
  Classical team optimal cost $\Jsco$ & Classical team optimal cost $\Jdco$\\
  $= \min \{ p_1p_2,\pat p_2,p_1\pbt,\pat\pbt\}$ &  $= \min\{ p_sp_c,\pst p_c, p_s\pct,\pst\pct\}$\\
 \hline
 Projective strategies are team optimal  & Projective strategies are team optimal \\
 $\forall (p_1, p_2)$.& only when there is a quantum advantage.\\\hline
 $\Jspo = 0.5-\frac{1}{\sqrt{2}}\sqrt{\{p_1^2 + \pat^2\} \{p_2^2 + \pbt^2\}}$ & $\Jdpo= \Jspo (p_1=p_s,p_2=0.5).$\\
    \hline
 There is \textit{no} $(p_1, p_2)$  where there is & There exists $(p_s, p_c)$ where there is \\
 a quantum disadvantage. &  a quantum disadvantage.\\ \hline
 Projective strategies are  & There exists $(p_s, p_c)$ such that projective strategies\\
 person-by-person optimal. &   are \textit{not} person-by-person optimal. \\\hline
 The best response to any projective & There exists $(p_s, p_c)$ such that the best response\\
 strategy is also projective. & to any projective strategy is \textit{not} projective. \\\hline
 Classical strategies are person-by-person optimal. &  Classical strategies are person-by-person optimal. \\\hline
 The best response to any classical & The best response to any classical\\
 strategy is also classical. &  strategy is also classical. \\\hline

\end{tabular}
\caption{Summary of properties of the general CHSH game over static and dynamic information structure.}
\label{table:summary}
\end{table}
\end{center}
\section{Conclusion}

In this paper, we explored the CHSH game through the lens of information structures in stochastic teams. Our main findings, summarized in Table \ref{table:summary}, highlight how the choice between a static and dynamic information structure significantly alters the landscape of optimality.
In the static setting, projective strategies are team optimal for all probability distributions. In contrast, for the dynamic information structure, the region of quantum advantage is notably smaller because the optimal cost achieved by projective strategies is independent of the channel noise. Most importantly, we found that there exist certain probability distributions in the dynamic game where projective strategies perform worse than classical ones, creating a region of strict quantum disadvantage.

Furthermore, we investigated the properties of person-by-person optimality and best response over strategy classes. While projective strategies enjoy these properties in the static game, we demonstrated that they do not universally carry over to the dynamic variant. Ultimately, our results shed light on the subtle and fascinating role played by the information structure, demonstrating the fragility of well-known quantum advantages when the flow of information between agents changes.

\section*{Declarations}
\begin{itemize}
    \item Consent to Participate declaration: not applicable
    \item Consent to Publish declaration: not applicable
    \item Ethics declaration: not applicable
\end{itemize}

\section*{Acknowledgments}
The authors would like to thank Prof Himadri Dhar of IIT Bombay for his inputs. The authors also declare that they have not received any funding for this work.
\bibliography{ref.bib}

@inproceedings{papadimitriou85intractable,
	title = {Intractable problems in control theory},
	volume = {24},
	doi = {10.1109/CDC.1985.268670},
	booktitle = {1985 24th {IEEE} Conference on Decision and Control},
	author = {Papadimitriou, Christos H. and Tsitsiklis, John N.},
	month = dec,
	year = {1985},
	pages = {1099 --1103},
}

@article{ kulkarni2014optimizer,
	title = "An Optimizer's Approach to Stochastic Control Problems with Nonclassical Information Structures",
	volume="60",
	number="4",
	pages = "937--949",
	journal = "IEEE Transactions on Automatic Control",
	author = "A. A. Kulkarni and T. P. Coleman",
	year = "2015",
}

@Article{ nash50equilibrium,
	author = "J. F. Nash",
	title = "Equilibrium points in ${N}$-Person Games",
	journal = "Proceedings of National Academy of Science",
	year = "1950"
}

@Article{ borkar88convex,
	title = "A convex analytic approach to Markov decision processes",
	volume = "78",
	issn = "0178-8051, 1432-2064",
	doi = "10.1007/BF00353877",
	journal = "Probability Theory and Related Fields",
	author = "Vivek S. Borkar",
	month = aug,
	year = "1988",
	pages = "583--602"
}

@Article{ witsenhausen_counterexample_1968,
	title = "A Counterexample in Stochastic Optimum Control",
	volume = "6",
	issn = "00361402",
	doi = "10.1137/0306011",
	journal = "{SIAM} Journal on Control",
	author = "H. S. Witsenhausen",
	year = "1968",
	pages = "131"
}

@article{radner1962team,
  title={Team decision problems},
  author={Radner, Roy},
  journal={The Annals of Mathematical Statistics},
  pages={857--881},
  year={1962},
  publisher={JSTOR}
}

@article{neilsen2004qcqi,
 title={Quantum Computation and Quantum Information.},
 author={Nielsen Michael A., Chuang Isaac L. },
 journal={Cambridge University Press},
 year={2004},
 publisher={Cambridge University}
}

@article{clauser1969chsh,
  title = {Proposed Experiment to Test Local Hidden-Variable Theories},
  author = {Clauser, John F. and Horne, Michael A. and Shimony, Abner and Holt, Richard A.},
  journal = {Phys. Rev. Lett.},
  volume = {23},
  issue = {15},
  pages = {880--884},
  numpages = {0},
  year = {1969},
  month = {Oct},
  publisher = {American Physical Society},
  doi = {10.1103/PhysRevLett.23.880},
  url = {https://link.aps.org/doi/10.1103/PhysRevLett.23.880}
}

@article{bell1964epr,
  title = {On the Einstein Podolsky Rosen paradox},
  author = {Bell, J. S.},
  journal = {Physics Physique Fizika},
  volume = {1},
  issue = {3},
  pages = {195--200},
  numpages = {6},
  year = {1964},
  month = {Nov},
  publisher = {American Physical Society},
  doi = {10.1103/PhysicsPhysiqueFizika.1.195},
  url = {https://link.aps.org/doi/10.1103/PhysicsPhysiqueFizika.1.195}
}

@article{brunner2014bell,
  title = {Bell nonlocality},
  author = {Brunner, Nicolas and Cavalcanti, Daniel and Pironio, Stefano and Scarani, Valerio and Wehner, Stephanie},
  journal = {Rev. Mod. Phys.},
  volume = {86},
  issue = {2},
  pages = {419--478},
  numpages = {60},
  year = {2014},
  month = {Apr},
  publisher = {American Physical Society},
  doi = {10.1103/RevModPhys.86.419},
  url = {https://link.aps.org/doi/10.1103/RevModPhys.86.419}
}

@article{eisert1999quantumgames,
	title = {Quantum Games and Quantum Strategies},
	author = {Eisert, Jens and Wilkens, Martin and Lewenstein, Maciej},
	journal = {Phys. Rev. Lett.},
	volume = {83},
	issue = {15},
	pages = {3077--3080},
	numpages = {0},
	year = {1999},
	month = {Oct},
	publisher = {American Physical Society},
	doi = {10.1103/PhysRevLett.83.3077},
	url = {https://link.aps.org/doi/10.1103/PhysRevLett.83.3077}
}

@article{aumann74subjectivity,
	title = {Subjectivity and correlation in randomized strategies},
	journal = {Journal of Mathematical Economics},
	volume = {1},
	number = {1},
	pages = {67-96},
	year = {1974},
	issn = {0304-4068},
	doi = {https://doi.org/10.1016/0304-4068(74)90037-8},
	url = {https://www.sciencedirect.com/science/article/pii/0304406874900378},
	author = {Robert J. Aumann}
}

@article{tsirelson1980genbell,
	title={Quantum generalizations of Bell's inequality},
	author={B. S. Cirel'son},
	journal={Letters in Mathematical Physics},
	year={1980},
	volume={4},
	pages={93-100},
	url={https://api.semanticscholar.org/CorpusID:120680226}
}
\appendix
\section{Appendix}
\label{sec:appendix}
\subsection{Proofs of Theorems}
\label{sec:proofs}

\subsubsection{Theorem \ref{thm:jsco}}
\begin{proof}
To lighten notation, let
    $
    a=\gamma_1(0),b=\gamma_1(1), c=\gamma_2(0),d=\gamma_2(1).
    $
    Then
    \begin{align}
        \Jsc
        &= p_1p_2 (a \oplus c)+ \pat p_2 (b \oplus c)+ p_1\pbt(a \oplus d)+ \pat\pbt\,\overline{(b \oplus d)}.
    \end{align}
    Thus \(\Jsc\) is a weighted sum of four nonnegative binary quantities.
        We first show that these four quantities cannot all be zero simultaneously.
    If that were the case, we would have
    \[
    a \oplus c = b \oplus c = a \oplus d = \overline{(b \oplus d)}=0,
    \]
    which implies \(a=b=c=d\), contradicting \(\overline{(b \oplus d)}=0\).
    Hence at least one of the four terms in \(\Jsc\) must be equal to \(1\), and therefore
    \[
    \Jsc \ge \min\{p_1p_2,\pat p_2,p_1\pbt,\pat\pbt\}.
    \]
    It is easy to see that the RHS above is attainable. For example, choosing \((a,b,c,d)=(1,0,0,1)\) gives \(\Jsc=p_1p_2\).
    Therefore
$    \Jscop=\min\{p_1p_2,\pat p_2,p_1\pbt,\pat\pbt\}.$

\end{proof}

\subsubsection{Theorem \ref{thm:jdco}}
\begin{proof}
    Let   $a:=\gamma_1(0),b:=\gamma_1(1),c:=\gamma_2(0), d:=\gamma_2(1).
    $
    Since in the dynamic information structure
    $
    y_2=y_1\oplus u_1\oplus v,
    $
    where \(v\sim \mathrm{Ber}(\pct)\), and under a deterministic classical strategy
    $
    u_1=\gamma_1(y_1) \text{and} u_2=\gamma_2(y_2),
    $
    the expected cost is
    \[
    \Jdcp
    =
    p_sp_c\,(a\oplus \gamma_2(a))
    +
    p_s\pct\,(a\oplus \gamma_2(\bar a))
    +
    \pst p_c\,(b\oplus \gamma_2(1\oplus b)\oplus (1\oplus b))
    +
    \pst\pct\,\gamma_2(b).
    \]

    We now evaluate this expression for the four possible choices of
    \((c,d)=(\gamma_2(0),\gamma_2(1))\). If
    \begin{itemize}
        \item   \((c,d)=(0,0)\), then
    $
    \Jdcp=p_s a+\pst p_c+\pst\pct\,b,
    $ so the minimum over \(a,b\in\{0,1\}\) is \(\pst p_c\).
       \item  \((c,d)=(0,1)\), then
    $
    \Jdcp=p_s\pct+\pst b,
    $
    so the minimum over \(b\in\{0,1\}\) is \(p_s\pct\).
    \item \((c,d)=(1,0)\), then
    $
    \Jdcp=p_sp_c+\pst \bar b,
    $
    so the minimum over \(b\in\{0,1\}\) is \(p_sp_c\).
    \item \((c,d)=(1,1)\), then
    $
    \Jdcp=p_s\bar a+\pst p_c\,\bar b+\pst\pct,
    $
    so the minimum over \(a,b\in\{0,1\}\) is \(\pst\pct\).
    \end{itemize}
    Therefore $\Jdcop=\min\{p_sp_c,\;p_s\pct,\;\pst p_c,\;\pst\pct\}.$

\end{proof}

\subsubsection{Lemma \ref{lemma}}
\begin{proof}
    \begin{enumerate}
        \item \textit{Static information structure.}
        For fixed \(y_1,y_2\), under the quantum strategy,
        \[
        \bkP(u_1,u_2\mid y_1,y_2)=\Tr\!\big[(E^{u_1}_{y_1}\otimes F^{u_2}_{y_2})\rho\big].
        \]
        Since \(A_{y_1}=E_{y_1}^0-E_{y_1}^1\) and \(B_{y_2}=F_{y_2}^0-F_{y_2}^1\), we have
        \begin{equation}
            \langle A_{y_1}B_{y_2}\rangle
            =\sum_{u_1,u_2}(-1)^{u_1\oplus u_2}\bkP(u_1,u_2\mid y_1,y_2).
            \label{eq:AB_expand}
        \end{equation}
        Now define
        $
        \Be' := 1-2\Be = (-1)^{\Be}.
        $
        Then
        \begin{align}
            \Ebb[\Be']
            &=
            \sum_{u_1,u_2,y_1,y_2}
            (-1)^{u_1\oplus u_2\oplus y_1y_2}\,
            \bkP(u_1,u_2,y_1,y_2) \notag\\
            &=
            \sum_{y_1,y_2}
            (-1)^{y_1y_2}\bkP(y_1,y_2)
            \sum_{u_1,u_2}
            (-1)^{u_1\oplus u_2}\bkP(u_1,u_2\mid y_1,y_2) \notag\\
            &=
            \sum_{y_1,y_2}
            (-1)^{y_1y_2}\bkP(y_1,y_2)\langle A_{y_1}B_{y_2}\rangle,
            \label{eq:static_reward}
        \end{align}
        where \eqref{eq:AB_expand} was used in the last step.
        Since \(y_1\) and \(y_2\) are independent,
        \[
        \bkP(0,0)=p_1p_2,\quad
        \bkP(0,1)=p_1\pbt,\quad
        \bkP(1,0)=\pat p_2,\quad
        \bkP(1,1)=\pat\pbt.
        \]
        Therefore,
        $
        \Ebb[\Be']
        =
        p_1p_2\langle A_0B_0\rangle
        +p_1\pbt\langle A_0B_1\rangle
        +\pat p_2\langle A_1B_0\rangle
        -\pat\pbt\langle A_1B_1\rangle.
        $
        Finally, using
        $
        \Ebb[\Be]=\frac{1-\Ebb[\Be']}{2},
        $
        we obtain \eqref{Jsq}.

\item For the dynamic information structure, define
$
\Be' := 1-2\Be = (-1)^{\Be}.
$
Then
\begin{align*}
    \Ebb[\Be']
    &= \sum_{u_1,u_2,y_1,y_2} (-1)^{u_1 \oplus u_2 \oplus y_1 y_2}
    \Tr(\{E^{u_1}_{y_1}\otimes F^{u_2}_{y_2}\}\rho)\bkP(y_2|y_1,u_1)\bkP(y_1).
\end{align*}
Now
$
\bkP(y_2|y_1,u_1)
= p_c \, \overline{y_2\oplus y_1\oplus u_1}
+ \pct \, (y_2\oplus y_1\oplus u_1),
$
so for \(y_1=0\), \(y_2=u_1\) with probability \(p_c\) and \(y_2=\bar u_1\) with probability \(\pct\), while for \(y_1=1\), \(y_2=\bar u_1\) with probability \(p_c\) and \(y_2=u_1\) with probability \(\pct\). Hence
\begin{align*}
    \Ebb[\Be']
    &= p_s \sum_{u_1,u_2} (-1)^{u_1\oplus u_2}
    \Big(
    p_c \Tr(\{E^{u_1}_0\otimes F^{u_2}_{u_1}\}\rho)
    + \pct \Tr(\{E^{u_1}_0\otimes F^{u_2}_{\bar u_1}\}\rho)
    \Big) \\
    &\quad + \pst \sum_{u_1,u_2}
    \Big(
    (-1)^{u_2\oplus 1} p_c \Tr(\{E^{u_1}_1\otimes F^{u_2}_{\bar u_1}\}\rho)
    + (-1)^{u_2} \pct \Tr(\{E^{u_1}_1\otimes F^{u_2}_{u_1}\}\rho)
    \Big),
\end{align*}
where we used
$
u_1\oplus u_2\oplus \bar u_1 = u_2\oplus 1,
\qquad
u_1\oplus u_2\oplus u_1 = u_2.
$
Expanding the four sums over \(u_1,u_2\), using \(E^0_i=E_i\), \(E^1_i=I-E_i\), \(F^0_j=F_j\), \(F^1_j=I-F_j\), and \(\Tr \rho =1\), we obtain
\begin{align*}
    \Ebb[\Be']
    &= 2 \sum_{i,j\in\{0,1\}} (-1)^{ij}\bkP(y_1=i)\Tr(\{E_i\otimes F_j\}\rho)
    -2p_s \Tr(\{E_0\otimes I\}\rho) \\
    &\quad -2(p_s\pct+\pst p_c)\Tr(\{I\otimes F_0\}\rho)
    -2(p_sp_c-\pst\pct)\Tr(\{I\otimes F_1\}\rho)
    +1-2\pst\pct.
\end{align*}
Since \(\Be' = 1-2\Be\), we have
$
\Jdqp = \Ebb[\Be] = \frac{1-\Ebb[\Be']}{2},
$
and therefore
\begin{align*}
    \Jdqp
    &= - \sum_{i,j\in\{0,1\}} (-1)^{ij}\bkP(y_1=i)\Tr(\{E_i\otimes F_j\}\rho)
    + p_s \Tr(\{E_0\otimes I\}\rho) + \pst\pct \\
    &\quad + (p_s\pct+\pst p_c)\Tr(\{I\otimes F_0\}\rho)
    + (p_sp_c-\pst\pct)\Tr(\{I\otimes F_1\}\rho).
\end{align*}
Finally, using
$
\Tr(\{E_i\otimes I\}\rho)=\half \Tr E_i
\text{and}
\Tr(\{I\otimes F_j\}\rho)=\half \Tr F_j,
$
we get \eqref{jdq}.
\end{enumerate}
\end{proof}



\subsubsection{Theorem \ref{thm:best_class}}

\begin{proof}
    \begin{enumerate}
        \item For the static information structure, since the problem is symmetric, it is enough to
        prove the following: if \A{} plays a classical strategy, then an optimal strategy for \B{} is
        also classical. Recall that \(E_0,E_1\in\{O,I\}\) represent all four classical strategies.
        Hence, using \eqref{aibj}, we get
        \begin{align*}
            \langle A_0B_0\rangle &= (-1)^{\ea}(1-2\fa), &
            \langle A_0B_1\rangle &= (-1)^{\ea}(1-2\fb), \\
            \langle A_1B_0\rangle &= (-1)^{\eb}(1-2\fa), &
            \langle A_1B_1\rangle &= (-1)^{\eb}(1-2\fb).
        \end{align*}
        Substituting these into \eqref{Jsq}, we get for POVMs $(\bie,\bif)$,
        \begin{align*}
            \Jsqp
            &= 0.5 - 0.5\big( (-1)^{\ea} p_1 + (-1)^{\eb}\pat (p_2-\pbt) \big) \\
            &\quad
            + \fa\big( (-1)^{\ea}p_1p_2 + (-1)^{\eb}\pat p_2 \big)
            + \fb\big( (-1)^{\ea}p_1\pbt - (-1)^{\eb}\pat\pbt \big).
        \end{align*}
        Thus \(\Jsqp\) is affine in \(\fa,\fb\). Hence it is optimal to set each of \(\fa,\fb\) equal
        to \(0\) or \(1\). Therefore \(\Tr F_j = 0\) or \(2\). Since \(F_j\) and \(I-F_j\) are
        Hermitian positive semidefinite, it follows that
        \[
        \Tr F_j = 0 \iff F_j = O,\qquad
        \Tr F_j = 2 \iff F_j = I.
        \]
        Hence an optimal strategy for \B{} is classical.

        \item For the dynamic information structure, we prove both directions.

        \smallskip
        \textit{Case 1: Suppose \A{} plays a classical strategy.}

        Then \(\ea,\eb\in\{0,1\}\), while \(\fa,\fb\in[0,1]\). Using \eqref{jdq}, we get
        \begin{align*}
            \Jdqp
            &= -p_s\ea\fa - p_s\ea\fb - \pst\eb\fa + \pst\eb\fb
            + p_s\ea + \pst\pct \\
            &\quad + (p_s\pct+\pst p_c)\fa + (p_sp_c-\pst\pct)\fb \\
            &= \fa (p_s\pct+\pst p_c-p_s \ea-\pst \eb)
            + \fb (p_sp_c-\pst\pct-p_s \ea+\pst \eb)
            + p_s\ea+\pst\pct.
        \end{align*}
        Thus \(\Jdqp\) is affine in \(\fa,\fb\). Hence it is optimal to set each of \(\fa,\fb\) equal
        to \(0\) or \(1\). Therefore \(F_0,F_1\in\{O,I\}\), i.e., it is optimal for \B{} to play a
        classical strategy.

        \smallskip
        \textit{Case 2: Suppose \B{} plays a classical strategy.}

        Then \(\fa,\fb\in\{0,1\}\), while \(\ea,\eb\in[0,1]\). Using \eqref{jdq}, we get
        \begin{align*}
            &\Jdqp\\
            &= -p_s\ea\fa - p_s\ea\fb - \pst\eb\fa + \pst\eb\fb
            + p_s\ea + \pst\pct  + (p_s\pct+\pst p_c)\fa + (p_sp_c-\pst\pct)\fb \\
            &= \ea\, p_s(1-\fa-\fb) + \eb\, \pst(\fb-\fa)  + \fa(p_s\pct+\pst p_c)
            + \fb(p_sp_c-\pst\pct)
            + \pst\pct.
        \end{align*}
        Thus \(\Jdqp\) is affine in \(\ea,\eb\). Hence it is optimal to set each of \(\ea,\eb\) equal
        to \(0\) or \(1\). Therefore \(E_0,E_1\in\{O,I\}\), i.e., it is optimal for \A{} to play a
        classical strategy.

        Thus, the property of best response over classical strategies holds for the dynamic
        information structure.
    \end{enumerate}
\end{proof}

\subsubsection{Theorem \ref{thm:best_proj}}

\begin{proof}
    \begin{enumerate}
        \item For the static information structure, since the problem is symmetric, it is enough to
        show the following: if \A{} plays a projective quantum strategy, then it is optimal for \B{}
        to play a projective quantum strategy.

Suppose \(E_0,E_1\) are projective. We have
        $
        \ea=\eb=\half, |\eva|=|\evb|=\half.
        $
        Then, using \eqref{eq:trace_ef} and \eqref{aibj}, \eqref{Jsq} becomes
\begin{align*}
    \Jsqp
    &= 0.5 - 2\Big[f_0^x(p_1p_2e_0^x+\pat p_2e_1^x)
    +f_0^y(-p_1p_2e_0^y-\pat p_2e_1^y)
    +f_0^z(p_1p_2e_0^z+\pat p_2e_1^z) \\
    &\hspace{1.3cm}
    +f_1^x(p_1\pbt e_0^x-\pat\pbt e_1^x)
    +f_1^y(-p_1\pbt e_0^y+\pat\pbt e_1^y)
    +f_1^z(p_1\pbt e_0^z-\pat\pbt e_1^z)\Big] \\
    &= 0.5 - 2p_2\, \fva\cdot (p_1\eva+\pat\evb)
    - 2\pbt\, \fvb\cdot (p_1\eva-\pat\evb).
\end{align*}
        Hence, by Cauchy-Schwartz,
        \begin{align*}
            \Jsqp
            &\ge \half
            -2p_2\,|\fva|\,|p_1\eva+\pat\evb|
            -2\pbt\,|\fvb|\,|p_1\eva-\pat\evb|.
        \end{align*}
        For fixed \(E_0,E_1\), the right-hand side is decreasing in \(|\fva|\) and \(|\fvb|\).
        Hence it is optimal to maximize both \(|\fva|\) and \(|\fvb|\). Since
        \[
        |\fva|\le \min\{\fa,1-\fa\}\le \half,\qquad
        |\fvb|\le \min\{\fb,1-\fb\}\le \half,
        \]
        the maximum possible value is \(\half\), which is attained only when
        $
        \fa=\fb=\half,|\fva|=|\fvb|=\half.
        $
        This is exactly the condition that \(F_0\) and \(F_1\) are projective. Hence it is optimal
        for \B{} to play a projective quantum strategy.

        \item For the dynamic information structure, we give a counterexample. Let \A{} play a
        projective strategy $E_0,E_1$. Then
        $
        \ea=\eb=\half, |\eva|=|\evb|=\half.
        $
        Using \eqref{jdq}, we get
        \begin{align*}
            \Jdqp
            &= \half p_s+\pst\pct+\fa(-p_s+p_s\pct+\pst p_c)+\fb(p_sp_c-\pst\pct) \\
            &\quad -\fva\cdot(p_s\eva+\pst\evb)-\fvb\cdot(p_s\eva-\pst\evb).
        \end{align*}

        Now for the counterexample choose
        $
        p_s=\half, p_c>\half,
        $
        and let \A{} play the same projective strategy for both values of \(y_1\), namely
        \[
        \eva=\evb=\left(\frac{1}{2\sqrt{2}},0,\frac{1}{2\sqrt{2}}\right).
        \]
        Then
        $
        p_s\eva+\pst\evb=\eva \text{and} p_s\eva-\pst\evb=0,
        $
        and therefore
        \begin{align*}
            \Jdqp
            &= \frac14+\frac12\pct+\fa\Big(-\frac12+\frac12\pct+\frac12p_c\Big)
            +\fb\Big(\frac12p_c-\frac12\pct\Big)-\fva\cdot\eva \\
            &= \frac14+\frac12\pct+\frac12(2p_c-1)\fb-\fva\cdot\eva.
        \end{align*}
        Since \(|\eva|=\half\), we have
        $
        -\fva\cdot\eva \ge -|\fva|\,|\eva| = -\half|\fva|,
        $
        with equality when \(\fva\) is parallel to \(\eva\). Hence
        \[
        \Jdqp \ge \frac14+\frac12\pct+\frac12(2p_c-1)\fb-\half|\fva|.
        \]
        Also,
        $
        |\fva|\le \min\{\fa,1-\fa\}\le \half.
        $
        Thus the last term is minimized by taking
        $
        \fa=\half, |\fva|=\half.
        $
        With this choice,
        \[
        \Jdqp \ge \frac12\pct+\frac12(2p_c-1)\fb.
        \]
        Since \(p_c>\half\), the coefficient of \(\fb\) is positive, and so it is optimal to set
        $
        \fb=0.
        $
        But then \(|\fvb|\le \min\{\fb,1-\fb\}=0\), so \(\fvb=0\), and hence \(F_1=O\). Therefore
        an optimal response of \B{} is given by
        $
        \fa=\half, |\fva|=\half, \fb=0,\fvb=0,
        $
        so \(F_0\) is projective while \(F_1\) is classical. In particular, the optimal response of
        \B{} is not projective.

        Thus, the property of best response over projective strategies does not hold for the dynamic
        information structure.
    \end{enumerate}
\end{proof}

\subsubsection{Theorem \ref{thm:pbp_class}}

\begin{proof}
    \begin{enumerate}
        \item For the static information structure, we first make a few observations that will be
        required in the proof.

        (a) When \A{} plays classical strategies: note that \(E_0,E_1\in\{O,I\}\) represent all 4 classical
        strategies. Then \(e_i^0\in\{0,1\}\), \(f_j^0\in[0,1]\), and using \eqref{aibj}, we get
        \begin{align}
            \langle A_i B_j\rangle = (-1)^{e_i^0}(1-2f_j^0).
            \label{eq:pbp_class_stat_a}
        \end{align}
        Using this in \eqref{Jsq} and collecting the \(f_j^0\)-terms, we get
        \begin{align}
            \Jsqp
            &= 0.5 - 0.5\big( (-1)^{\ea}p_1 + (-1)^{\eb}\pat(p_2-\pbt)\big)
            + \fa p_2\big( (-1)^{\ea}p_1 + (-1)^{\eb}\pat\big) \non \\
&            + \fb \pbt\big( (-1)^{\ea}p_1 - (-1)^{\eb}\pat\big).
            \label{eq:pbp_class_stat_b}
        \end{align}

        (b) When \B{} plays classical strategies: then \(f_j^0\in\{0,1\}\), \(e_i^0\in[0,1]\), and similarly
        $
        \langle A_i B_j\rangle = (-1)^{f_j^0}(1-2e_i^0).
        $
        Using this in \eqref{Jsq} and collecting the \(e_i^0\)-terms, we get
        \begin{align}
            \Jsqp
            &= 0.5 - 0.5\big( (-1)^{\fa}p_2 + (-1)^{\fb}\pbt(p_1-\pat)\big)
            + \ea p_1\big( (-1)^{\fa}p_2 + (-1)^{\fb}\pbt\big) \non \\
&            + \eb \pat\big( (-1)^{\fa}p_2 - (-1)^{\fb}\pbt\big).
            \label{eq:pbp_class_stat_c}
        \end{align}

        We now verify person-by-person optimality in four exhaustive regions.

        \smallskip
        \textit{Case 1: \(p_1\ge \half,\; p_2\ge \half\).}
        We claim that
        $
        \{E_0,E_1\}=\{O,O\} \text{and} \{F_0,F_1\}=\{O,O\}
        $
        is person-by-person optimal.

        Suppose \A{} plays \(\{O,O\}\), i.e. \(\ea=\eb=0\). Then from \eqref{eq:pbp_class_stat_b},
        the coefficient of \(\fa\) is
        $
        p_2(p_1+\pat)=p_2>0,
        $
        and the coefficient of \(\fb\) is
        $
        \pbt(p_1-\pat)\ge 0.
        $
        Hence it is optimal for \B{} to play \(\{O,O\}\).

        Conversely, suppose \B{} plays \(\{O,O\}\), i.e. \(\fa=\fb=0\). Then from \eqref{eq:pbp_class_stat_c},
        the coefficient of \(\ea\) is
        $
        p_1(p_2+\pbt)=p_1>0,
        $
        and the coefficient of \(\eb\) is
        $
        \pat(p_2-\pbt)\ge 0.
        $
        Hence it is optimal for \A{} to play \(\{O,O\}\).

        \smallskip
        \textit{Case 2: \(p_1>\half,\; p_2<\half\).}
        We claim that
        $
        \{E_0,E_1\}=\{O,I\}, \{F_0,F_1\}=\{O,O\}
        $
        is person-by-person optimal.

        Suppose \A{} plays \(\{O,I\}\), i.e. \(\ea=0,\eb=1\). Then from \eqref{eq:pbp_class_stat_b},
        the coefficient of \(\fa\) is
        $
        p_2(p_1-\pat)>0,
        $
        and the coefficient of \(\fb\) is
        $
        \pbt(p_1+\pat)=\pbt>0.
        $
        Hence it is optimal for \B{} to play \(\{O,O\}\).

        Conversely, suppose \B{} plays \(\{O,O\}\), i.e. \(\fa=\fb=0\). Then from \eqref{eq:pbp_class_stat_c},
        the coefficient of \(\ea\) is
        $
        p_1(p_2+\pbt)=p_1>0,
        $
        while the coefficient of \(\eb\) is
        $
        \pat(p_2-\pbt)<0.
        $
        Hence it is optimal for \A{} to play \(\{O,I\}\).

        \smallskip
        \textit{Case 3: \(p_1<\half,\; p_2>\half\).}
        We claim that
        $
        \{E_0,E_1\}=\{O,O\}, \{F_0,F_1\}=\{O,I\}
        $
        is person-by-person optimal.

        Suppose \A{} plays \(\{O,O\}\), i.e. \(\ea=\eb=0\). Then from \eqref{eq:pbp_class_stat_b},
        the coefficient of \(\fa\) is
        $
        p_2(p_1+\pat)=p_2>0,
        $
        and the coefficient of \(\fb\) is
        $
        \pbt(p_1-\pat)<0.
        $
        Hence it is optimal for \B{} to play \(\{O,I\}\).

        Conversely, suppose \B{} plays \(\{O,I\}\), i.e. \(\fa=0,\fb=1\). Then from \eqref{eq:pbp_class_stat_c},
        the coefficient of \(\ea\) is
        $
        p_1(p_2-\pbt)>0,
        $
        and the coefficient of \(\eb\) is
        $
        \pat(p_2+\pbt)=\pat>0.
        $
        Hence it is optimal for \A{} to play \(\{O,O\}\).

        \smallskip
        \textit{Case 4: \(p_1\le \half,\; p_2\le \half\).}
        We claim that
        $
        \{E_0,E_1\}=\{O,I\},\{F_0,F_1\}=\{I,O\}
        $
        is person-by-person optimal.

        Suppose \A{} plays \(\{O,I\}\), i.e. \(\ea=0,\eb=1\). Then from \eqref{eq:pbp_class_stat_b},
        the coefficient of \(\fa\) is
        $
        p_2(p_1-\pat)\le 0,
        $
        and the coefficient of \(\fb\) is
        $
        \pbt(p_1+\pat)=\pbt>0.
        $
        Hence it is optimal for \B{} to play \(\{I,O\}\).

        Conversely, suppose \B{} plays \(\{I,O\}\), i.e. \(\fa=1,\fb=0\). Then from \eqref{eq:pbp_class_stat_c},
        the coefficient of \(\ea\) is
        $
        p_1(-p_2+\pbt)=p_1(\pbt-p_2)\ge 0,
        $
        and the coefficient of \(\eb\) is
        $
        \pat(-p_2-\pbt)=-\pat<0.
        $
        Hence it is optimal for \A{} to play \(\{O,I\}\).

        Thus, for the static information structure, the property of person-by-person optimality over
        classical strategies holds.

        \item For the dynamic information structure, we again make two observations.

        (a) When \A{} plays classical strategies: note that \(E_0,E_1\in\{O,I\}\) represent all 4 classical
        strategies. Then \(\ea,\eb\in\{0,1\}\), \(\fa,\fb\in[0,1]\), and using \eqref{jdq}, we get
        \begin{align}
            \Jdq
            &= \fa(p_s\pct+\pst p_c-p_s\ea-\pst\eb)
            + \fb(p_sp_c-\pst\pct-p_s\ea+\pst\eb)
            + p_s\ea+\pst\pct .
            \label{eq:pbp_class_dyn_a}
        \end{align}

        (b) When \B{} plays classical strategies: then \(\fa,\fb\in\{0,1\}\), \(\ea,\eb\in[0,1]\). Starting
        from \eqref{jdq} and collecting the \(\ea,\eb\)-terms, we get
        \begin{align}
            \Jdq
            &= \ea p_s(1-\fa-\fb)+\eb \pst(\fb-\fa)
            + \fa(p_s\pct+\pst p_c)+\fb(p_sp_c-\pst\pct)+\pst\pct .
            \label{eq:pbp_class_dyn_b}
        \end{align}

        We now verify person-by-person optimality in two exhaustive regions.

        \smallskip
        \textit{Case 1: \(p_s\ge \half\).}
        We claim that
        $
        \{E_0,E_1\}=\{O,I\},\{F_0,F_1\}=\{O,O\}
        $
        is person-by-person optimal.

        Suppose \A{} plays \(\{O,I\}\), i.e. \(\ea=0,\eb=1\). Then from \eqref{eq:pbp_class_dyn_a},
        the coefficient of \(\fa\) is
        \[
        p_s\pct+\pst p_c-\pst
        = p_s\pct-\pst\pct
        =\pct(p_s-\pst)\ge 0,
        \]
        and the coefficient of \(\fb\) is
        \[
        p_sp_c-\pst\pct+\pst
        = p_sp_c+\pst p_c
        = p_c>0.
        \]
        Hence it is optimal for \B{} to play \(\{O,O\}\).

        Conversely, suppose \B{} plays \(\{O,O\}\), i.e. \(\fa=\fb=0\). Then from \eqref{eq:pbp_class_dyn_b},
        the coefficient of \(\ea\) is
        $
        p_s>0,
        $
        and the coefficient of \(\eb\) is
        $
        0.
        $
        Hence \(E_1\) can be chosen arbitrarily, and it is optimal to set \(E_0=O\). In particular,
        \(\{E_0,E_1\}=\{O,I\}\) is optimal for \A{}.

        \smallskip
        \textit{Case 2: \(p_s<\half\).}
        We claim that
        $
        \{E_0,E_1\}=\{O,I\}, \{F_0,F_1\}=\{I,O\}
        $
        is person-by-person optimal.

        Suppose \A{} plays \(\{O,I\}\), i.e. \(\ea=0,\eb=1\). Then from \eqref{eq:pbp_class_dyn_a},
        the coefficient of \(\fa\) is
        \[
        p_s\pct+\pst p_c-\pst
        = \pct(p_s-\pst)<0,
        \]
        and the coefficient of \(\fb\) is
        $
        p_sp_c-\pst\pct+\pst
        = p_c>0.
        $
        Hence it is optimal for \B{} to play \(\{I,O\}\).

        Conversely, suppose \B{} plays \(\{I,O\}\), i.e. \(\fa=1,\fb=0\). Then from \eqref{eq:pbp_class_dyn_b},
        the coefficient of \(\ea\) is
        $
        p_s(1-1-0)=0,
        $
        and the coefficient of \(\eb\) is
        $
        \pst(0-1)=-\pst<0.
        $
        Hence \(E_0\) can be chosen arbitrarily, and it is optimal to set \(E_1=I\). In particular,
        \(\{E_0,E_1\}=\{O,I\}\) is optimal for \A{}.

        Thus, for the dynamic information structure, the property of person-by-person optimality over
        classical strategies holds.
    \end{enumerate}
\end{proof}

\subsubsection{Theorem \ref{thm:pbp_proj}}

\begin{proof}
    \begin{enumerate}
        \item For the static information structure, from Theorem 3.5, we know that \(\Jsqo\) is always
        attainable via projective strategies played by both players. Since both players have the same
        objective function, any team-optimal pair of strategies is necessarily person-by-person optimal.
        Hence, for every \(p_1,p_2\), there exists a pair of projective strategies that is person-by-person
        optimal.

        \item For the dynamic information structure, we give a counterexample.         Choose
        $
        p_s=\half,p_c=1.
        $
                Then \(\pst=\half\) and \(\pct=0\).
        Let \A{} play an arbitrary
        projective strategy. Then
        $
        \ea=\eb=\half,|\eva|=|\evb|=\half.
        $
Using \eqref{jdq}, we get
        \begin{align*}
            \Jdqp
            &= -\half\Tr[(E_0\otimes F_0)\rho]-\half\Tr[(E_0\otimes F_1)\rho]
            -\half\Tr[(E_1\otimes F_0)\rho] \\
            &\quad +\half\Tr[(E_1\otimes F_1)\rho] + \half\Tr[(E_0\otimes I)\rho] + \half\Tr[(I\otimes F_1)\rho].
        \end{align*}
        Using
        \[
        \Tr[(E_i\otimes F_j)\rho]=e_i^0f_j^0+\evi\cdot\fvj,
        \qquad
        \Tr[(E_0\otimes I)\rho]=\ea=\half,
        \qquad
        \Tr[(I\otimes F_1)\rho]=\fb,
        \]
        we obtain
        \begin{align*}
            \Jdqp
            &= -\half\Big(\half\fa+\eva\cdot\fva+\half\fb+\eva\cdot\fvb
            +\half\fa+\evb\cdot\fva-\half\fb-\evb\cdot\fvb\Big)
            +\half\ea+\half\fb \\
            &= -\half(\eva+\evb)\cdot\fva-\half(\eva-\evb)\cdot\fvb+\half\fb.
        \end{align*}
        Thus
        \begin{equation}
            \Jdqp
            = -\half(\eva+\evb)\cdot\fva
            +\half\fb-\half(\eva-\evb)\cdot\fvb.
            \label{eq:pbp_proj_dyn}
        \end{equation}

        Now
        $
        \left|\half(\eva-\evb)\right| \le \half,
        $
        since \(|\eva|=|\evb|=\half\). Also, for any POVM element \(F_1\),
        \[
        |\fvb|\le \min\{\fb,1-\fb\}\le \fb.
        \]
        Hence, by Cauchy-Schwartz,
        \[
        \half(\eva-\evb)\cdot\fvb
        \le \left|\half(\eva-\evb)\right|\,|\fvb|
        \le \half |\fvb|
        \le \half \fb.
        \]
        Therefore
        \[
        \half\fb-\half(\eva-\evb)\cdot\fvb \ge 0.
        \]
        So, for fixed \(E_0,E_1\), the \(F_1\)-dependent part of \(\Jdqp\) is always nonnegative, and is
        minimized by taking
        $
        \fb=0,\qquad \fvb=0,
        $
        that is,
        $
        F_1=O.
        $
        Thus, for every projective strategy of \A{}, there is an optimal response of \B{} with \(F_1=O\),
        and hence that response is not projective.

        Therefore, when \(p_s=\half\) and \(p_c=1\), there does not exist a pair of projective strategies
        that are person-by-person optimal. Hence, the property of person-by-person optimality over
        projective strategies does not hold for the dynamic information structure.
    \end{enumerate}
\end{proof}

\subsection{Invariance of Bell Costs}
\label{sec:bellinv}
As claimed in Section \ref{sec:bell1}, we now show that the team optimal costs do not change across the different Bell costs. So our analysis done for $\Be$ does not need to be repeated for the other costs, as the same results hold.

Now, for a general $\Bek$, the expressions for all Bell costs will be the same as \eqref{eq:jsqp} and \eqref{eq:jdqp} but for the corresponding cost.
\begin{definition}
	The expected costs under quantum strategies for the static and dynamic information structure, for a general Bell cost $\Bek$, are:
\begin{subequations}
	\begin{align}
		\Jsqp(\Bek) &= \mathop{\Ebb}_{u_1,u_2,y_1,y_2}[\Bek(u_1,u_2; y_1, y_2)] \label{eq:genjsqp}\\
		\Jdqp &= \mathop{\Ebb}_{u_1,u_2,y_1,y_2}[\Bek(u_1,u_2; y_1, y_2)], \label{eq:genjdqp}
	\end{align}
\end{subequations}
where the joint probability distributions are given earlier in \ref{lemma:joint} (Equations \ref{jointS} and \ref{jointD}).
\end{definition}

We show that the optimal quantum cost is invariant across the eight Bell costs
\(\mathcal{B}_k\), \(k=1,\dots,8\).

\paragraph{Static information structure.}
The Bell costs \(\mathcal{B}_1,\mathcal{B}_2,\mathcal{B}_3,\mathcal{B}_4\) are obtained from each
other by relabeling the inputs \(y_1,y_2\). For example, $\mathcal{B}_2 (u_1,u_2,y_1,y_2) = \mathcal{B}_1 (u_1,u_2,y_1,\overline{y_2})$. These relabelings correspond to exchanging
\(E_0\) with \(E_1\), or \(F_0\) with \(F_1\), and replacing \(p_1\) by $\pat$ or \(p_2\) by $\pbt$.
Since the optimization in \eqref{Jsq} is over all admissible strategies, and the expression $\Jsqo$ is invariant with replacing \(p_1\) by $\pat$ or \(p_2\) by $\pbt$,
the optimal value
remains unchanged. Hence
\[
\Jsqop[\mathcal{B}_1]=\Jsqop[\mathcal{B}_2]=\Jsqop[\mathcal{B}_3]=\Jsqop[\mathcal{B}_4].
\]

The remaining Bell costs are complements of these four. For example $\Bscr_5=\overline{\Bscr}_1$ and therefore $\Jsqp[\Bscr_5]=1-\Jsqp[\Bscr_1]$. Replacing \(F_j\) by \(I-F_j\) in
\eqref{Jsq} maps a Bell cost to its complement and preserves the feasible set of strategies. Specifically, for any $\bie,\bif$
\begin{align*}
    1-\Jsqp[\mathcal{B}_1](E_0,E_1,F_0,F_1,p_1, p_2)) =\Jsqp[\mathcal{B}_1](E_0,E_1,I-F_0,I-F_1,p_1, p_2)
\end{align*}
Hence
\[
\Jsqop[\mathcal{B}_1]=\Jsqop[\mathcal{B}_2]=\cdots=\Jsqop[\mathcal{B}_8].
\]

\paragraph{Dynamic information structure.}
The same argument applies to the dynamic information structure and \eqref{jdq}. Relabeling \(y_1\) and \(y_2\) corresponds to exchanging
\(E_0\) with \(E_1\), or \(F_0\) with \(F_1\), and replacing \(p_s\) by $\pst$ or \(p_c\) by $\pct$.
This does not affect the optimal value. Further, replacing \(F_j\) by \(I-F_j\) maps a Bell cost
to its complement while preserving the feasible set. Hence
\[
\Jdqop[\mathcal{B}_1]=\Jdqop[\mathcal{B}_2]=\cdots=\Jdqop[\mathcal{B}_8].
\]


\subsection{POVMs are sufficient to capture all strategy functions}
\label{app:povm}
In all the above considerations we assumed that the players choose their actions directly based on the measurement outcome $m_i$ of the measurements, that is $u_i = m_i$. It is important to check if searching over POVMs automatically captures choosing action $u_i = f(m_i, y_i)$ for all $f: \{0,1\}^2 \to \{0,1\}$ (as mentioned in \ref{sec:genq}). Suppose \A{} plays the strategy $E_0,E_1$ for $u_i = m_i$, then for all such functions, the equivalent POVMs would be as given by the table below

\begin{center}
$\begin{array}{ |c|c|c| }
 \hline
 f(m_1,y_1) & \text{ equivalent } E_0 & \text{ equivalent } E_1 \\
 \hline
 0 & I & I \\
 1 & 0 & 0 \\
 y_1 & I & 0 \\
 \overline{y_1} & 0 & I \\
 m_1 & E_0 & E_1 \\
 \overline{m_1} & I-E_0 & I-E_1 \\
 m_1 \oplus y_1 & E_0 & I-E_1 \\
 \overline{m_1 \oplus y_1} & I -E_0 & E_1 \\
 m_1 + y_1 & E_0 & 0 \\
 m_1 + \overline{y_1} & 0 & E_1 \\
 \overline{m_1} + y_1 & I-E_0 & 0 \\
 \overline{m_1} + \overline{y_1} & 0 & I - E_1 \\
 m_1 y_1 & I & E_1 \\
 m_1 \overline{y_1} & E_0 & I \\
 \overline{m_1} y_1 & I & I - E_1 \\
 \overline{m_1} \overline{y_1} & I-E_0 & I\\

 \hline
\end{array}$
\end{center}
\vspace{0.5cm}

Thus all equivalent $E_0,E_1$ are valid POVMs. Hence our POVM optimization also optimizes over all such functions of $m_1$ and $y_1$.

\end{document}